
\documentclass[12pt, english]{article}
\usepackage[T1]{fontenc}
\usepackage[latin9]{inputenc}
\usepackage[margin=1in,bottom=1in,top=1in]{geometry}
\usepackage{lscape}
\usepackage{setspace}
\usepackage{footmisc}
\usepackage{amsmath}
\usepackage{amsthm}
\usepackage{bm}
\usepackage{amssymb}
\usepackage{stmaryrd}
\usepackage{graphicx}
\usepackage{float}
\usepackage{rotating}
\usepackage{subcaption}
\usepackage{ragged2e}
\usepackage{esint}
\usepackage{ulem}
\usepackage{babel}
\usepackage{longtable}
\usepackage{booktabs}
\usepackage{caption,booktabs,array}

\usepackage[round]{natbib}
\usepackage[hypertexnames=false]{hyperref}

\usepackage{colortbl}
\definecolor{darkblue}{rgb}{0,0,.6}
\hypersetup{
colorlinks=true,
linkcolor=darkblue,
filecolor=darkblue,
urlcolor=darkblue,
citecolor=darkblue,
}
\usepackage[toc,page]{appendix}

\usepackage[sc]{mathpazo}
\usepackage{colortbl}


\newtheorem{assumption}{Assumption}
\newtheorem{definition}{Definition}
\newtheorem{remark}{Remark}
\newtheorem{example}{Example}
\newtheorem{theorem}{Theorem}

\newtheorem{lemma}{Lemma}

\usepackage{dsfont}
\usepackage{pifont}
\usepackage{enumerate}

\usepackage{tikz}
\usetikzlibrary{positioning, arrows.meta}
\usetikzlibrary{arrows}
\usetikzlibrary{patterns}
  \pgfdeclarepatternformonly{north east lines wide}%
        {\pgfqpoint{-1pt}{-1pt}}%
        {\pgfqpoint{10pt}{10pt}}%
        {\pgfqpoint{9pt}{9pt}}%
        {
            \pgfsetlinewidth{0.3pt}
            \pgfpathmoveto{\pgfqpoint{0pt}{0pt}}
            \pgfpathlineto{\pgfqpoint{9.0pt}{9.0pt}}
            \pgfusepath{stroke}
        }

        \pgfdeclarepatternformonly{north west lines wide}%
        {\pgfqpoint{-1pt}{1pt}}%
        {\pgfqpoint{10pt}{-10pt}}%
        {\pgfqpoint{9pt}{-9pt}}%
        {
            \pgfsetlinewidth{0.3pt}
            \pgfpathmoveto{\pgfqpoint{0pt}{0pt}}
            \pgfpathlineto{\pgfqpoint{9.0pt}{-9.0pt}}
            \pgfusepath{stroke}
        }

\definecolor{green}{rgb}{0.0,0.7,0.0}

\begin{document}
\doublespacing
\normalem

\title{Sensitivity Analysis in Unconditional Quantile Effects
\textcolor{red}{}}
\author{Juli\'an Mart\'inez-Iriarte\thanks{%
\begin{doublespace}
Email: jmart425@ucsc.edu. I am deeply indebted to my advisor Yixiao Sun for his constant support and guidance. I thank Jeffrey Clemens, Dimitris Christelis, Antonio Galvao, Peter Hull, Ying-Ying Lee, Michael Leung, Jessie Li, Xinwei Ma, Augusto Nieto-Barthaburu, Joel Sobel, Pietro Spini, Kaspar Wuthrich, Ying Zhu, and seminar participants at various universities for helpful comments. Two anonymous referees provided excellent comments that greatly improved the paper. All errors remain my own. \end{doublespace}
} \\
Department of Economics,
UC Santa Cruz}
\date{\textbf{\today}}
\maketitle
\thispagestyle{empty}
\vspace{-2em}

\begin{abstract}

This paper proposes a framework to analyze the effects of counterfactual policies on the unconditional quantiles of an outcome variable. For a given counterfactual policy, we obtain identified sets for the effect of both marginal and global changes in the proportion of treated individuals. To conduct a sensitivity analysis, we introduce the quantile breakdown frontier, a curve that $(i)$ indicates whether a sensitivity analysis is possible or not, and $(ii)$ when a sensitivity analysis is possible, quantifies the amount of selection bias consistent with a given conclusion of interest across different quantiles. To illustrate our method, we perform a sensitivity analysis on the effect of unionizing low income workers on the quantiles of the distribution of (log) wages.

\end{abstract}

\textbf{Keywords}: unconditional quantile effects, partial identification, sensitivity analysis.\bigskip

\clearpage
\pagenumbering{arabic} 
\section{Introduction}

In this paper we propose a sensitivity analysis on the effect of counterfactual policies that change the proportion of treated individuals. Consider a situation where a policy maker is interested in treating non-treated individuals.
The key identification challenge is that we do not observe the counterfactual outcome 
of individuals who switch groups, that is, the \emph{newly} treated individuals. In some cases, however, it is still possible to recover the distribution of the unobserved counterfactual outcome. For example, suppose that treatment status is randomly assigned, and a policy maker increases the proportion of treated individuals by randomly selecting non-treated individuals.\footnote{We assume full compliance in both randomizations.} Although we do not observe the counterfactual outcome of the \emph{newly} treated individuals, we know it is drawn from the same distribution as the \emph{already} treated individuals. Hence, we identified the counterfactual distribution of \emph{newly} treated individuals.

When treatment status is \emph{not} randomly assigned in the first place, the identification strategy previously described breaks down. The reason is that due to the selection bias in the original treatment status, a random selection of individuals from the control group will be drawn from a different distribution. Thus, in the presence of selection bias, identification of the counterfactual distribution requires that the policy maker has enough information to devise a policy such that the (unobservable) distribution of the \emph{newly} treated ``matches'' the distribution of the \emph{already} treated individuals. This is usually infeasible. Even if the policy maker has this information, such as when treatment status is randomly assigned, they might not be interested in a policy that merely selects the \emph{newly} treated individuals at random. 

The previous discussion highlights that identification of counterfactual distributions results in either very stringent information requirements, or in policies that might not be interesting. In both cases, the distribution of the \emph{newly} treated individuals is restricted. From the point of view of the policy maker, this can rule out many interesting policies. To see this, consider the following example. A policy maker might like to know if an increase in the unionization rate reduces inequality. If unionized workers are relatively high-skilled, and a policy expands unionization with low-skilled workers, then the distribution of wages conditional on being in the union, is likely to change. 

In order to analyze a richer set of counterfactual policies, we drop the restrictions on the distribution of the \emph{newly} treated individuals and provide partial identification results for two effects. The first one is a global effect that compares the quantiles of the observed outcome, to those of the counterfactual outcome, where the proportion of treated individuals has been increased by $\delta$. The second one is a marginal effect where we let $\delta$ go to zero, and analyze its limiting effect on the unconditional quantiles of the outcome. 

Another important contribution of this paper is to propose a framework for a sensitivity analysis on certain conclusions of interest. To do this, we quantify the departure from point identification by the vertical distance between the distributions of the \emph{newly} treated individuals and the \emph{already} treated individuals. We introduce a curve called the \emph{quantile breakdown frontier}, which first indicates whether a sensitivity analysis is possible, and second, it quantifies the maximum departure from point identification such that a given set of conclusions holds across different quantiles. Using this curve, we bound the global effects curve using this maximum departure derived from the quantile breakdown frontier. In this way, we obtain an identified region for the global effect curve consistent with the desired conclusions. Estimation of both the quantile breakdown frontier and the bounds on the global effect are based on empirical distribution functions and empirical quantiles, and are $\sqrt n$-consistent. 

The departure from point identification is due to the selection bias induced by the counterfactual policy. We call this the \emph{policy selection bias}. The usual selection bias states that treated and non-treated individuals are different in a sense, and that is what explains the selection in the first place. Instead, the policy selection bias is the difference between the distributions of the \emph{newly} treated individuals and the \emph{already} treated individuals. Returning to the unionization example, the policy selection bias arises because the union wages of \emph{newly} unionized workers may not be drawn from distribution of the \emph{already} unionized workers. We do not know the distribution of union wages of newly unionized workers, hence we can only partially identify the global and marginal effects.

The policy selection bias can be non-negligible even if the original selection into treatment is randomly assigned. The reason is that, for the policy selection bias, what matters is who the \emph{newly} treated individuals are. Conversely, if there is selection bias initially, but the distribution of the \emph{newly} treated ``matches'' the distribution of the \emph{already} treated individuals, then there will be no policy selection bias. Thus, the policy selection bias depends on the particular counterfactual policy being analyzed, not whether there is selection bias in the original selection mechanism. 


We apply these methods to the study of unions and inequality, which has long been of interest to labor economics. A recent comprehensive review of this extensive literature is provided by \cite{Farber2020}. Using the data in \cite{Firpo2009}, our empirical application considers the effect of expanding unionization on the quantiles of the distribution of (log) wages. Our approach allows us to tackle the question from a different perspective. Using the tools developed in this paper, we can quantify the amount of policy selection bias that is consistent with a policy that increases the unionization rate by unionizing low earnings workers. By looking at the global effect in the $30$\textsuperscript{th} quantile of the distribution of wages we investigate the amount of policy selection bias consistent with unions reducing overall inequality. To this end, we examine the following conclusion: whether the $30$\textsuperscript{th} quantile increases by more than 10\%. We find that this is consistent with moderate values of policy selection bias. The bounds on the global effect for other quantiles reveals that this policy can have a bigger effect on quantiles below the $30$\textsuperscript{th} quantile, without hurting those at the middle and top of the distribution of income.

\textbf{Related Literature} There is an extensive literature devoted to the analysis of counterfactual distributions. A good reference is \cite{Firpo2011}. In this paper, we focus on counterfactual distributions that arise as a result of a counterfactual policy that changes the proportion of treated individuals. The Policy Relevant Treatment Effect (PRTE) of \cite{heckman_prte, Heckman2005}, and the Marginal PRTE (MPRTE) of \cite{Carneiro2010,Carneiro2011} are examples of the aforementioned global and marginal effects. The difference is that they analyze the unconditional mean of the outcome. Identification relies on the a separable threshold model for the selection equation, and the availability of a continuous instrumental variable. In this setting, the proportion of treated individuals is changed by manipulating the instrumental variable. Our analysis does not make any assumptions on the selection equation. We do not require an instrumental variable either.

The marginal effect on the unconditional quantiles of an outcome was first studied by \cite{Firpo2009}. The identification arguments of \cite{Firpo2009} are based on a distributional invariance assumption: the distribution of the outcome for the original treatment group (under the original policy regime) is the same as that for the new treatment group (under the new policy regime), and this also holds for the control groups under the two policy regimes.\footnote{See the proof to Corollary 3 of the working paper version \cite{Firpo2007}.} For the case of an endogenous binary covariate, where distributional invariance might not hold, \cite{yixiao2020} achieve identification by generalizing the Marginal Treatment Effect framework. \cite{Kasy2016} also analyzes counterfactual policies which assign a binary treatment, but focuses on a welfare ranking. 

\cite{Rothe2012} provides a general treatment for functionals of the unconditional distribution of the outcome. What we call a global effect, \cite{Rothe2012} refers to as a \emph{Fixed Partial Policy Effect}, and what we call a marginal effect, \cite{Rothe2012} refers to as a \emph{Marginal Partial Distributional Policy}. However, \cite{Rothe2012} imposes different identifying assumptions, namely a form of conditional exogeneity, which also yield a partial identified set. We do not impose such assumptions in order to broaden the types of policies we can analyze. 

It is important to highlight that we do not estimate a quantile treatment effect. The quantile treatment effect is the difference between the $\tau$-quantile under treatment and the $\tau$-quantile under control, and depends on the distribution of the covariates. In a recent contribution, \cite{Lieli2020} investigate the changes in this effect when the distribution of the covariates is manipulated. Aside from treatment status, we do not manipulate the distribution of covariates.

Our sensitivity analysis is based on the breakdown analysis of \cite{Kline2013} and \cite{Masten2020}.  \cite{Kline2013} perform a sensitivity analysis in a different context: departures from a missing (data) at random assumption. In a manner similar to us, this departure is measured as the Kolmogorov-Smirnov distance between the distribution of observed outcomes and the (unobserved) distribution of missing outcomes. Our quantile breakdown frontier builds on the breakdown frontier introduced by \cite{Masten2020}. However, instead of relaxing two parameters, we relax just one, and plot it against different quantiles. Another recent application of the breakdown analysis is \cite{noack} in the context of LATE. 

\textbf{Notation} All the CDFs are denoted by $F$ with a subscript indicating the random variable. So, the CDF of $Y$ is $F_Y(y)$. Conditional CDFs are denoted similarly. For example, the CDF of $Y$ conditional on $D=1$ and $X=x$ is denoted by $F_{Y|D=1,X=x}(y)$. The $\tau$-quantile of $Y$ is denoted by $F^{-1}_Y(\tau)$. Weak convergence is denoted by $\rightsquigarrow$.


\section{Counterfactual Policies and Unconditional Effects}\label{section_uncond_effects}

We will work with the potential outcomes framework. For some unknown functions $h_0$ and $h_1$ 
\begin{align*}
Y(0) &= h_0(X,U_0),\\
Y(1) &= h_1(X,U_1),
\end{align*} 
where $X$ are observed covariates and $U_0$ and $U_1$ consist of unobservables. We do not impose any restriction on the dimension of the unobservables. The observed outcome is thus
\begin{align*}
Y &= D\cdot h_1(X,U_1) + (1-D)\cdot h_0(X,U_0).\notag\\
:&=h(D,X,U),
\end{align*} 
for a general nonseparable function $h$, where $D$ is a binary random variable taking values $0$ and $1$, and $U:=(U_0,U_1)'$. The variable $D$ can be interpreted as the treatment status, and $p:=\Pr(D=1)$ is the proportion of treated individuals. 


A counterfactual policy is an alternative assignment of individuals to treatment. It is given by a binary random variable $D_\delta$, such that $\Pr(D_\delta=1)=p+\delta$ for a fixed $\delta \in(-p,1-p)$. It is called counterfactual because it may assign $D_\delta=1$ to an individual whose $D=0$. As $\delta$ varies over $(-p,1-p)$, we obtain a collection of (counterfactual) policies which is denoted by $\mathcal D$. When a particular counterfactual policy $D_\delta$ belongs to $\mathcal D$ we write $D_\delta\in\mathcal D$. The counterfactual outcome we would observe for a given $D_\delta\in\mathcal D$ is
\begin{align*}
Y_{D_\delta}&=h(D_\delta,X,U),
\end{align*} 
where we implicitly assumes that the potential outcomes are not affected by the manipulation of $D$. 

Strictly speaking, the counterfactual outcome $Y_{D_\delta}$ is not well defined until we define $\mathcal D$, the collection of counterfactual policies. We will restrict ourselves to policies that shift a portion of individuals in the control group to the treatment group. We refer to such individuals as \emph{newly treated}. This means that for every individual, $D_\delta-D\geq 0$. This is shown in Figure \ref{figure_introduction}.

\begin{figure}
\centering
\begin{tikzpicture}[
    scale=1.5]

\draw [green, thin, fill=green, fill opacity=0.2] plot [smooth] coordinates {(0.5,2) (0,3) (-1.3,3) (-2,2) (-0.5,1)};
\draw [green, thin, fill=cyan!20, fill opacity=0.7] plot [smooth] coordinates {(-0.5,1) (0,0)  (1.355,1) (1.5, 1.5) (0.5,2) };

\draw [green, thin, fill=green, fill opacity=0.2] plot [smooth] coordinates {(6.5,2) (6,3) (4.7,3) (4,2) (5.5,1)};
\draw [green, thin, fill=cyan!20, fill opacity=0.7] plot [smooth] coordinates {(5.5,1) (6,0) (7.355,1)};
\draw [green, fill=red, fill opacity=0.2] plot [smooth] coordinates {(7.355,1) (7.5, 1.5) (6.5,2) };
\fill[fill=red, fill opacity=0.2] (6.5,2)--(7.355,1)--(5.5,1);

 \draw [->, thick] (0.6,2.3)-- (3.8,2.3);
\node [anchor=south] at (2.2,2.3) {\footnotesize\emph{counterfactual policy}};

 \draw [solid, thick] (-0.5,1)-- (0.5,2);
      \node [anchor=north] at (-1,2.3) {\footnotesize$D=1$};

        \node [anchor=north]  at (0.5,1.3) {\footnotesize$D=0$};
        
         \draw [dotted, thick] (5.5,1)-- (6.5,2);
         \draw [solid, thick] (5.5,1)-- (7.355,1);
      \node [anchor=north] at (5,2.3) {\footnotesize$D=1$};
        \node [anchor=north] at (5.05,2) {\footnotesize$D_\delta=1$};
            \node [anchor=north] at (6.7,1.7) {\footnotesize$D=0$};
                        \node [anchor=north] at (6.75,1.4) {\footnotesize$D_\delta=1$};

        \node [anchor=north]  at (6.25,0.9) {\footnotesize$D=0$};
                \node [anchor=north]  at (6.3,0.63) {\footnotesize$D_\delta=0$};

\end{tikzpicture}
\caption{\emph{A counterfactual policy where $D_\delta-D\geq 0$.}}
\label{figure_introduction}
\end{figure}

\begin{assumption}[Counterfactual Policies]\label{assumption_policy}
The collection of policies $\mathcal D$ satisfies
\begin{enumerate}
\item $\Pr(D_\delta=1)=p+\delta$ for $\delta\in[0,1-p)$ and $D_\delta\in\mathcal D$;
\item Monotonicity: $D_\delta-D\geq 0$;
\end{enumerate}
\end{assumption}

The monotonicity assumption $D_\delta-D\geq 0$ is mainly for expositional simplicity. We can do without this assumption, but we need to make some minor changes to our approach. However, there is also a practical purpose. In a context where $D$ is union status, and $D=1$ denotes unionized individuals, Assumption \ref{assumption_policy} requires that we increase the unionization rate by unionizing previously nonunionized workers. It would probably be hard to simultaneously unionize and deunionize different workers.

Another way to look at the monotonicity assumption is by inspecting the joint distribution of $D$ and $D_\delta$ it induces:
\begin{equation*}
\centering
\begin{tabular}{l|*{2}{c}r}
              & $D_{\delta }=0$ & $D_{\delta }=1$ \\
\hline
$D=0$ & $1-p-\delta$& $\delta$ \\
$D=1$            & $0$ &  $p$ \\
\end{tabular}
\end{equation*}

In other words, Assumption \ref{assumption_policy} rules out the presence of \emph{newly untreated} individuals. Also, in the limit, when $\delta=0$, we return to the original distribution of individuals. We will evaluate the effect of a counterfactual policy with two parameters: the global and the marginal effects. Let $F_Y^{-1}(\tau)$ and $F_{Y_{D_\delta}}^{-1}(\tau)$ denote the $\tau$-quantiles of $Y$ and $Y_{D_\delta}$ respectively.
\begin{definition}[Global and Marginal Effects]
For a given collection of policies $\mathcal D$, the unconditional global effect at the $\tau$-quantile of $D_\delta\in\mathcal D$ is
\begin{align*}
G_{\tau, D_\delta}&:=F_{Y_{D_\delta}}^{-1}(\tau)-F_{Y}^{-1}(\tau),
\end{align*}
and the unconditional marginal effect at the $\tau$-quantile is
\begin{equation*}
M_{\tau,\mathcal D}:=\lim_{\delta\to 0}\frac{F_{Y_{D_\delta}}^{-1}(\tau)-F_{Y}^{-1}(\tau)}{\delta}
\end{equation*}
whenever this limit exists.
\end{definition}

The global effect $G_{\tau, D_\delta}$ is the comparison of quantiles of the counterfactual distribution vs. the observed distribution for a fixed policy $D_\delta$. Naturally, for a collection $\mathcal D$, we have a corresponding collection on global effects. The marginal effect $M_{\tau,\mathcal D}$ can be interpreted as an ordinary derivative: for small $\delta$, it provides an approximation to the direction of the change in a given $\tau$-quantile. The main text will focus on the global effect, while the marginal effect is treated in detail in Appendix \ref{app_maginal_effect}.

\begin{remark}[\cite{Firpo2009}]
The marginal effect $M_{\tau,\mathcal D}$ was originally studied by \cite{Firpo2009}. Instead of Assumption \ref{assumption_policy}, \cite{Firpo2009} assume a form of distributional invariance: $F_{Y_{D_\delta}|D_\delta=d}=F_{Y|D=d}$ and obtain point identification. See the proof to Corollary 3 of the working paper version \cite{Firpo2007}. When both $D$ and $D_\delta$ are independent of $U$ and $X$, then distributional invariance will be satisfied. In this particular case, a policy maker can randomize $D_{\delta}$  so that for a given $\delta$, a fraction $p+\delta$ of individuals is randomly assigned to treatment. However, if we allow for $D$ to be endogenous, and if, as is usually the case, the structural form of endogeneity is unknown, then it may be impossible for the policy maker to design a sequence $\mathcal D$, such that for every $D_\delta\in\mathcal D$, $F_{Y_{D_\delta}|D_\delta=d}$ ``matches'' $F_{Y|D=d}$. From the point of view of the policy maker, this is a significant restriction on the types of counterfactual policies they can consider.
\end{remark}

\begin{remark}[Policy Relevant Treatment Effects]
\cite{heckman_prte, Heckman2005} and \cite{Carneiro2010,Carneiro2011} investigate the effect on the unconditional mean of the outcome. Using our notation, the Policy Relevant Treatment Effect (PRTE) of \cite{heckman_prte, Heckman2005} is 
\begin{equation*}
\text{\textrm{PRTE}}_{D_\delta }=\frac{E(Y_{D_\delta })-E(Y)}{\delta}
\end{equation*}%
and taking the limit $\delta \rightarrow 0$ yields the Marginal PRTE (MPRTE) of \cite{Carneiro2010,Carneiro2011}:%
\begin{equation*}
\mathrm{MPRTE_\mathcal D}=\lim_{\delta \rightarrow 0}\text{\textrm{PRTE}}_{\delta }.
\end{equation*}
\cite{yixiao2020} show how to generalize the MPRTE to cover the case of \cite{Firpo2009} as well.
\end{remark}

\begin{remark}[\cite{Rothe2012}]
\cite{Rothe2012} also studies the global and marginal effects but under a different identifying assumption, namely a form of conditional exogeneity. This assumption also yields an identified set. Let the outcome be $Y=h(D,X,U)$. For uniformly distributed random variables $\tilde U_1$ and $\tilde U_2$, the outcome can be represented as $Y=h(Q_D(\tilde U_1),Q_X(\tilde U_2),U)$ where $Q_D$ and $Q_X$ are the quantile functions. Then $Q_D$ is changed to another quantile function $Q_D^*$, generating a counterfactual distribution, which is identified when $\tilde U_1\perp U\| X$ and $D$ is continuous. When $D$ is discrete, $\tilde U_1$ is not uniquely determined, so that a range of possible counterfactual distributions is possible resulting in partial identification.
\end{remark}

The next task is to define \emph{who} are the \emph{newly treated} individuals, that is, how does $D_\delta$ determine who receives treatment among the individuals whose $D=0$? In this paper we will focus on two types of policies: a policy that simply chooses individuals whose $D=0$ at random and assigns them to $D_\delta=1$, and a policy that chooses individuals based on a user-specified criterion. We will refer to these two types of policies as \emph{randomized policy} and \emph{non-randomized policy} respectively. 

\begin{example}[Randomized policy]\label{example_rand_policy}
A randomized policy satisfies: for any $\delta\in[0,1-p)$
\begin{align*}
D_\delta = 
\left\{
\begin{aligned}
1 &\quad \text{if } D = 1 \\
0 \text{ or } 1 &\quad \text{if } D = 0
\end{aligned}
\right.
\end{align*}
and the \emph{newly treated} are selected at random. Using the conditional independence notation 
we write $D_\delta\perp Y(1), Y(0)\|D=0$: 
\begin{align}\label{eqn_random_cia}
Pr(D_\delta=1|D=0)=\Pr(D_\delta=1|D=0, Y(1), Y(0))=\frac{\delta}{1-p}.
\end{align}
\end{example}

\begin{example}[Non-randomized policy]\label{example_non_rand_policy}
An example of a non-randomized policy is the following: for any $\delta\in[0,1-p)$
\begin{align}\label{eqn_non_random}
D_\delta = 
\left\{
\begin{aligned}
1 &\quad \text{if } D = 1 \\
1 &\quad \text{if } D = 0 \text{ and } Z \le F^{-1}_{Z|D=0}\left( \frac{\delta}{1 - p} \right) \\
0 &\quad \text{otherwise}
\end{aligned}
\right.
\end{align}

for some observable random variable $Z$. In this case, the individuals in the group $\left\{D=0\right\}$ whose $Z$ is less than the $\frac{\delta}{1-p}$-quantile of this group are shifted to $D_\delta=1$. This rule guarantees that, in expectation, a proportion $\delta$ of individuals is shifted.
\end{example}

For a collection of policies $\mathcal D$ that satisfies Assumptions \ref{assumption_policy}, the counterfactual distribution $F_{Y_{D_\delta}}(y)$ can be decomposed as
\begin{align}\label{count_dist_dec}
F_{Y_{D_\delta}}(y) &= pF_{Y|D=1}(y)  + (1-p-\delta) F_{Y|D_\delta=0}(y)  +\delta F_{Y(1)|D=0,D_\delta=1}(y)
\end{align}
for each $D_\delta$. Here, $F_{Y(1)|D=0,D_\delta=1}(y)$ corresponds to the distribution of the \emph{newly treated}. This distribution cannot be identified from the data because it requires observing $Y(1)$ for a subpopulation for which we only observe their $Y(0)$. Consequently, $F_{Y_{D_\delta}}(y)$ is not identified either. The goal is to bound the quantiles of $F_{Y_{D_\delta}}(y)$. To that end, we make the following regularity assumptions.

\begin{assumption}[Regularity Assumptions]\label{assumption_regularity}

\begin{enumerate}
\item\label{assumption_yd_regular} For $d=0,1$, $F_{Y|D=d,X=x}(y)$ is continuous and strictly increasing for all $y$ such that $0<F_{Y|D=d,X=x}(y)<1$, and for all $x\in\mathcal X$.
\item\label{assumption_yd_regular_2} For every $D_\delta\in\mathcal D$, $F_{Y|D_\delta=0}(y)$  is continuous and strictly increasing for all $y$ such that $0<F_{Y|D_\delta=0}(y)<1$.
\item For every $D_\delta\in\mathcal D$, $F_{Y(1)|D=0,D_\delta=1}(y)$  is continuous and strictly increasing for all $y$ such that $0<F_{Y(1)|D=0,D_\delta=1}(y)<1$.
\item\label{assumption_x_support} For every $D_\delta\in\mathcal D$, the support of $X$ conditional on $D=0$ and $D_\delta=1$ is included in the support of $X$ conditional on $D=1$.
\end{enumerate}
\end{assumption}

The next assumption is our main working assumption to obtain the bounds on the quantiles of $F_{Y_{D_\delta}}(y)$.
\begin{assumption}[KS-distance]\label{assumption_ks_distance}
For a given $D_\delta\in\mathcal D$, there exists a known $c\in[0,1]$ such that
\begin{align}\label{eq:ks_distance}
\sup_{y\in\mathbb R}\left |F_{Y(1)|D=0,D_\delta=1}(y)-\int_{\mathcal X}F_{Y|D=1,X=x}(y)dF_{X|D=0,D_\delta=1}(x)\right |\leq c
\end{align}
\end{assumption}

We refer to the left hand side of \eqref{eq:ks_distance} as the \emph{policy selection bias}. The idea is that in the absence of policy selection bias, we can take $c=0$ and $F_{Y(1)|D=0,D_\delta=1}(y)$ can be recovered by matching \emph{already} unionized individuals with \emph{newly} unionized individuals and integrating against the characteristics of the \emph{newly} unionized individuals, using $F_{X|D=0,D_\delta=1}(x)$. This is akin to a \textit{joint} unconfoundedness assumption: $D\perp Y(1)\| X$ \textit{and} $D_\delta\perp Y(1)\| X$.\footnote{If $D, D_\delta\perp Y(1)\| X$, then $F_{Y|D=1, X=x}(y)  = F_{Y(1)|D=0,D_\delta = 1, X=x}(y)$,
so that integrating against $F_{X|D=0,D_\delta=1}(x)$ yields $F_{Y(1)|D=0,D_\delta=1}(y) $.} By allowing $c$ to be non-zero, we are relaxing this particular type of conditional independence assumption (\cite{Masten2018}). Here, Assumption \ref{assumption_regularity}.\ref{assumption_x_support} becomes relevant.
\cite{Kline2013} also use the Kolmogorov-Smirnov distance to bound the quantiles of the outcome to allow for the possibility that data might not be missing at random.

\begin{remark}
For now we take $c$ as known. In the next section it will be the quantity on which we will base the sensitivity analysis.
\end{remark}

\begin{theorem}[Bounds on the Global Effect]\label{thm_bounds_global}
If a policy $D_\delta$ satisfies assumptions \ref{assumption_policy}, \ref{assumption_regularity}, and \ref{assumption_ks_distance}, then for $\tau\in(\delta,1-\delta)$
\begin{align*}
\max\left\{\tilde F_A^{-1}(\tau-\delta) ,F_A^{-1}(\tau-\delta c)\right\} - F_{Y}^{-1}(\tau) \leq G_{\tau, D_\delta}\leq \min\left\{\tilde F_A^{-1}(\tau) ,F_A^{-1}(\tau+\delta c)\right\}-F_{Y}^{-1}(\tau)
\end{align*}
where $\tilde F_A(y):= p F_{Y|D=1}(y)  + (1-p-\delta) F_{Y|D_\delta=0}(y),$ and 
\begin{align*}
F_A(y)&:=\tilde F_A(y)  +\delta  \int_{\mathcal X}F_{Y|D=1,X=x}(y)dF_{X|D=0,D_\delta=1}(x).
\end{align*}
For a fixed $\tau\in(\delta,1-\delta)$, these bounds are sharp.
\end{theorem}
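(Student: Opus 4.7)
The plan is to reduce everything to pointwise two-sided bounds on the counterfactual CDF $F_{Y_{D_\delta}}(y)$ and then invert them into bounds on the quantile. Starting from the decomposition in equation (\ref{count_dist_dec}), rewrite
\begin{equation*}
F_{Y_{D_\delta}}(y) \;=\; \tilde F_A(y) \;+\; \delta\, F_{Y(1)|D=0,D_\delta=1}(y),
\end{equation*}
so that $F_{Y_{D_\delta}}$ is an affine function of the only unknown piece, $F_{Y(1)|D=0,D_\delta=1}$. The summand $\tilde F_A$ is identified directly from the observed data, and by Assumption \ref{assumption_regularity}.\ref{assumption_x_support} so is $G(y):=\int_{\mathcal X} F_{Y|D=1,X=x}(y)\,dF_{X|D=0,D_\delta=1}(x)$, which is what distinguishes $F_A$ from $\tilde F_A$.

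Next, combine the two restrictions on $F_{Y(1)|D=0,D_\delta=1}(y)$, namely that as a CDF it lies in $[0,1]$ and by Assumption \ref{assumption_ks_distance} it stays within $c$ of $G(y)$, to obtain
\begin{equation*}
\max\{0,\,G(y)-c\} \;\leq\; F_{Y(1)|D=0,D_\delta=1}(y) \;\leq\; \min\{1,\,G(y)+c\}.
\end{equation*}
Substituting back into the affine decomposition yields the pointwise CDF envelope
\begin{equation*}
\max\{\tilde F_A(y),\,F_A(y)-\delta c\} \;\leq\; F_{Y_{D_\delta}}(y) \;\leq\; \min\{\tilde F_A(y)+\delta,\,F_A(y)+\delta c\}.
\end{equation*}
Invert these envelopes using the elementary identities $(\min\{A,B\})^{-1}(\tau)=\max\{A^{-1}(\tau),B^{-1}(\tau)\}$ and $(\max\{A,B\})^{-1}(\tau)=\min\{A^{-1}(\tau),B^{-1}(\tau)\}$, which are legitimate here because Assumption \ref{assumption_regularity} makes $\tilde F_A$ and $F_A$ continuous and strictly increasing on their supports. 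Inverting the vertical shifts $\tilde F_A(\cdot)+\delta$ and $F_A(\cdot)\pm\delta c$ amounts to horizontally shifting $\tilde F_A^{-1}$ and $F_A^{-1}$ in $\tau$, which is where the arguments $\tau-\delta$ and $\tau\pm\delta c$ come from. The hypothesis $\tau\in(\delta,1-\delta)$, combined with $c\leq 1$, guarantees that all of these shifted arguments stay inside the effective ranges of $\tilde F_A$ and $F_A$. Subtracting $F_Y^{-1}(\tau)$ from all three parts of the resulting quantile inequality delivers the stated bounds on $G_{\tau,D_\delta}$.

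For sharpness, produce the two extremal CDFs $\bar F(y):=\min\{G(y)+c,1\}$ and $\underline F(y):=\max\{G(y)-c,0\}$. Both are genuine CDFs (monotone, right-continuous, with the correct limits) and both satisfy Assumption \ref{assumption_ks_distance} by construction; since Assumption \ref{assumption_ks_distance} is the only constraint the model places on $F_{Y(1)|D=0,D_\delta=1}$ beyond being a CDF, each is an admissible choice. Setting $F_{Y(1)|D=0,D_\delta=1}=\bar F$ attains the upper CDF envelope and hence the lower quantile bound, while $\underline F$ attains the upper quantile bound. The step that I expect to demand the most care is the inversion: keeping track of which envelope flips to which quantile bound, and checking that each of $\tau-\delta$, $\tau$, $\tau-\delta c$, $\tau+\delta c$ actually falls in the domain where the corresponding inverse is well-defined under $\tau\in(\delta,1-\delta)$ and $c\in[0,1]$. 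Once the $\min$/$\max$ inversion identity is in hand, the rest of the argument is essentially bookkeeping.
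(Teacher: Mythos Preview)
Your derivation of the bounds themselves matches the paper's proof essentially step for step: the affine decomposition of $F_{Y_{D_\delta}}$, the two-sided pointwise envelope on the unknown $F_{Y(1)|D=0,D_\delta=1}$, and the inversion via the $\min/\max$ identities are exactly what the paper does. That part is fine.

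The sharpness argument, however, has a real gap. Your candidate extremal functions $\bar F(y)=\min\{G(y)+c,1\}$ and $\underline F(y)=\max\{G(y)-c,0\}$ are \emph{not} distribution functions when $c>0$: since $G(y)\to 0$ as $y\to-\infty$, we have $\bar F(y)\to c\neq 0$, and since $G(y)\to 1$ as $y\to+\infty$, we have $\underline F(y)\to 1-c\neq 1$. So the claim that ``both are genuine CDFs \ldots\ with the correct limits'' is false, and neither is an admissible choice for $F_{Y(1)|D=0,D_\delta=1}$. In addition, Assumption~\ref{assumption_regularity} requires the newly treated distribution to be continuous and strictly increasing on $\{0<F<1\}$, which your clipped candidates would also need to respect.

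There is a second, more structural issue: sharpness here means that \emph{every} value in the interval $[\theta_L,\theta_H]$ is attained by some admissible $F_{Y(1)|D=0,D_\delta=1}$, not just the endpoints. The paper handles this directly: for an arbitrary target quantile value in $[\theta_L,\theta_H]$, it first pins down the required value $F^*_\delta(F_{Y_{D_\delta}}^{-1}(\tau))$ from the decomposition, verifies that this value lies within $c$ of $G$ at that point, and then extends it to a full continuous, strictly increasing CDF by blending with $G$ via a smooth weight function (taking the form $\exp(-y^2)$) on one side and a clipped vertical shift on the other. This explicit construction is what closes the sharpness claim; your endpoint-only argument would need either a repaired pair of admissible extremal CDFs \emph{and} an intermediate-value/connectedness argument, or a construction like the paper's.
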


\begin{remark}
The bounds are sharp in the sense that for any possible value of the global effect within the bounds, we can construct a corresponding newly treated distribution $F_{Y(1)|D=0,D_\delta=1}(y)$ which delivers precisely that same global effect. This construction depends on $\tau$, implying that, rather than uniform, sharpness is pointwise in $\tau\in(\delta,1-\delta)$.
\end{remark}

\begin{remark}
When $c=0$, $G_{\tau, D_\delta}$ is point identified because the counterfactual distribution in \eqref{count_dist_dec} becomes $F_A(y)$, which is identified. We refer to $F_A(y)$ as the apparent distribution, hence the subscript $A$, because it is the distribution that ``appears'' to be the counterfactual distribution. Naturally, for $c=0$, we have that the lower and upper bound are identical, and equal to $F_A^{-1}(\tau)$.\footnote{\label{footnote}
For $c=0$, the upper bound is $F_A^{-1}(\tau)$ because $F_A^{-1}(\tau)\leq \tilde F_A^{-1}(\tau) $. For the lower bound we need to show that $\max\left\{\tilde F_A^{-1}(\tau-\delta),F_A^{-1}(\tau)\right\}=F_A^{-1}(\tau)$. Note that $\tilde F_A^{-1}(\tau-\delta)$ satisfies:
\begin{align*}
p F_{Y|D=1}(\tilde F_A^{-1}(\tau-\delta))  + (1-p-\delta) F_{Y|D_\delta=0}(\tilde F_A^{-1}(\tau-\delta)) = \tau-\delta
\end{align*}
On the other hand, $F_A^{-1}(\tau)$ when $c=0$ satisfies
\begin{align*}
pF_{Y|D=1}(F_A^{-1}(\tau))  + (1-p-\delta) F_{Y|D_\delta=0}(F_A^{-1}(\tau))  +\delta  \int_{\mathcal X}F_{Y|D=1,X=x}(F_A^{-1}(\tau))dF_{X|D=0,D_\delta=1}(x)=\tau
\end{align*} 
Comparing these two expression, we obtain
\begin{align*}
p F_{Y|D=1}(\tilde F_A^{-1}(\tau-\delta))  + (1-p-\delta) F_{Y|D_\delta=0}(\tilde F_A^{-1}(\tau-\delta)) \leq 
pF_{Y|D=1}(F_A^{-1}(\tau))  + (1-p-\delta) F_{Y|D_\delta=0}(F_A^{-1}(\tau))  
\end{align*}
which implies that we must have $\tilde F_A^{-1}(\tau-\delta)\leq F_A^{-1}(\tau)$. 
} Thus, for $c=0$, $G_{\tau, D_\delta}= F_A^{-1}(\tau)-F_{Y}^{-1}(\tau)$.
\end{remark}

\begin{remark}
By construction, we have that $\tilde F_A(y)\leq F_A(y)$, which implies, for $\tau\in(\delta,1-\delta)$, that $ F_A^{-1}(\tau)\leq \tilde F_A^{-1}(\tau)$. Therefore, whether $F_A$ or $\tilde F_A$ bounds the global effect depends on $\tau$, $\delta$, and $c$. In other words, the $\min$ and $\max$ are not redundant.
\end{remark}

\begin{remark}\label{remark_tau}
The range of $\tau$ is restricted to $(\delta,1-\delta)$ in order to ensure that $c$ to plays a role in the bounds. For $\tau$ outside this range, the bounds might not depend on $c$, if $c$ is close enough to 1.
\end{remark}

\section{Quantile Breakdown Frontier}

The quantile breakdown frontier is a curve that allows to perform to a sensitivity analysis with respect to $c$, the policy selection bias. Suppose we are interested in a target conclusion $G_{\tau,D_\delta}\geq g_{\tau,L} $ for some $g_{\tau,L}$. If the conclusion does not hold when $c=0$, then there is no point in performing a sensitivity analysis. On the other hand, if the conclusion \textit{does} hold under $c=0$, we would like to know the maximum amount of $c$ such that the conclusion continues to hold. This is the sensitivity analysis. 
The quantile breakdown frontier for the global effect tackles both of these issues. The frontier is the map
\begin{align}\label{qbf_def}
\tau\mapsto c_{\tau,L}= \frac{\tau -F_A\left ( F_{Y}^{-1}(\tau)+ g_{\tau,L} \right )}{\delta}
\end{align}
for $\tau\in(\delta,1-\delta)$.\footnote{We can also look at conclusions of the type $G_{\tau,D_\delta}\leq g_{\tau,U}$. In this case, the quantile breakdown frontier is the map
\begin{align*}
\tau\mapsto c_{\tau,U}= \frac{F_A\left ( F_{Y}^{-1}(\tau)+ g_{\tau,U} \right )-\tau}{\delta}.
\end{align*}
} 
An explanation of the derivation is in section \ref{app_frontier_der} of the appendix. When $c_{\tau,L}<0$, the desired target conclusion does not hold. If $c_{\tau,L}>0,$ then for $c\leq  c_{\tau,L}$ then target conclusion holds under point identification. If $c> c_{\tau,L}>0$, then the target conclusion might not hold. In this sense, when $c_{\tau,L}>0,$ the frontier provides an amount of policy selection bias which is compatible with the conclusion.

The next lemma contains some properties of the quantile breakdown frontier.

\begin{lemma}\label{qbf_global_lemma}
Under the assumptions of Theorem \ref{thm_bounds_global}, the quantile breakdown frontier for $G_{\tau,D_\delta}$ defined in \eqref{qbf_def} satisfies the following: (i) if $\tau\mapsto g_{\tau,L}$ is continuous, then $\tau\mapsto c_{\tau,L}$ is continuous; (ii) if $c_{\tau,L}<0$, then the conclusion does not hold for $c=0$; (iii) if $c_{\tau,L}\geq 0$, then for any $c\geq 0$ such that $c\leq  c_{\tau,L}$, the conclusion holds, and for $c> c_{\tau,L}$ the conclusion might not hold; and (iv) if $c_{\tau,L}\geq 0$ and $\tilde F_A^{-1}(\tau-\delta)-F_Y^{-1}(\tau)\geq g_{\tau,L} $, then the conclusion holds for any $c\in[0,1]$.
\end{lemma}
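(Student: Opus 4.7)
The plan is to exploit the sharpness statement in Theorem \ref{thm_bounds_global}: since the identified set for $G_{\tau,D_\delta}$ is exactly the interval with lower bound $L(c):=\max\{\tilde F_A^{-1}(\tau-\delta),F_A^{-1}(\tau-\delta c)\}-F_Y^{-1}(\tau)$, the conclusion $G_{\tau,D_\delta}\geq g_{\tau,L}$ holds for every element of the identified set if and only if $L(c)\geq g_{\tau,L}$, and fails to hold for some element as soon as $L(c)<g_{\tau,L}$. All four parts will be reduced to this single inequality via the strict monotonicity and continuity of $F_A$ granted by Assumption \ref{assumption_regularity}.

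For (i), I would note that $F_A$ is continuous on its support as a mixture of continuous CDFs (Assumptions \ref{assumption_regularity}.\ref{assumption_yd_regular}--\ref{assumption_regularity}.\ref{assumption_yd_regular_2}), and $F_Y^{-1}$ is continuous in $\tau$ by Assumption \ref{assumption_regularity}.\ref{assumption_yd_regular}; the map $\tau\mapsto\tau-F_A(F_Y^{-1}(\tau)+g_{\tau,L})$ is then a composition/sum of continuous functions. For (ii), at $c=0$ the bound $L(0)=F_A^{-1}(\tau)-F_Y^{-1}(\tau)$ by the footnote in the excerpt, so $L(0)\geq g_{\tau,L}$ is equivalent, by applying the strictly increasing $F_A$, to $\tau\geq F_A(F_Y^{-1}(\tau)+g_{\tau,L})$, i.e.\ $c_{\tau,L}\geq 0$. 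Contrapositive delivers (ii).

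For (iii), given $c\leq c_{\tau,L}$, from the definition \eqref{qbf_def} we get $\delta c\leq \tau-F_A(F_Y^{-1}(\tau)+g_{\tau,L})$, so $\tau-\delta c\geq F_A(F_Y^{-1}(\tau)+g_{\tau,L})$, and inverting with $F_A^{-1}$ (strictly increasing) gives $F_A^{-1}(\tau-\delta c)\geq F_Y^{-1}(\tau)+g_{\tau,L}$. Hence $L(c)\geq F_A^{-1}(\tau-\delta c)-F_Y^{-1}(\tau)\geq g_{\tau,L}$ and the conclusion holds throughout the identified set. Conversely, if $c>c_{\tau,L}$ the same chain of equivalences yields $F_A^{-1}(\tau-\delta c)-F_Y^{-1}(\tau)<g_{\tau,L}$; when this branch of the max is the binding one, $L(c)<g_{\tau,L}$ and by sharpness some element of the identified set violates the conclusion, so the conclusion ``might not hold.''

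For (iv), the hypothesis $\tilde F_A^{-1}(\tau-\delta)\geq F_A^{-1}(\tau-\delta c_{\tau,L})$ combined with monotonicity of $F_A$ and the defining identity $\tau-\delta c_{\tau,L}=F_A(F_Y^{-1}(\tau)+g_{\tau,L})$ implies $\tilde F_A^{-1}(\tau-\delta)\geq F_Y^{-1}(\tau)+g_{\tau,L}$, i.e.\ the $c$-free branch of $L(c)$ already exceeds $g_{\tau,L}$; then $L(c)\geq \tilde F_A^{-1}(\tau-\delta)-F_Y^{-1}(\tau)\geq g_{\tau,L}$ for every $c\in[0,1]$. The main obstacle I anticipate is purely bookkeeping around which branch of the max is active, precisely the issue handled by (iv); once that case analysis is laid out and combined with the monotonicity argument above, the proof is routine.
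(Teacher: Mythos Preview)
Your proposal is correct and follows essentially the same approach as the paper: both arguments reduce every part to the inequality $L(c)\geq g_{\tau,L}$ via the strict monotonicity and continuity of $F_A$ and $F_Y^{-1}$ furnished by Assumption \ref{assumption_regularity}. Your treatment of (iii) and (iv) is in fact slightly more direct---you invert the defining identity $\tau-\delta c_{\tau,L}=F_A(F_Y^{-1}(\tau)+g_{\tau,L})$ immediately, whereas the paper evaluates the lower bound at $c=c_{\tau,L}$ and (for (iv)) introduces an auxiliary threshold $c^*$---and your explicit appeal to sharpness for the ``might not hold'' clause in (iii) is a detail the paper leaves implicit.
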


Part $(i)$ of the lemma puts a restriction on the types of conclusion by requiring continuity of the family of target conclusions. This is not essential, but illustrates the fact that the smoothness of the frontier depends, among other things, on the conclusions. For notational simplicity, later on we assume $g_{\tau,L}\equiv g$. Part $(iv)$ addresses a potential ``conservadurism'' in the breakdown analysis. This stems from the fact there is a possibility that the lower bound for the global effect does not depend on $c$. In the notation of Theorem \ref{thm_bounds_global}, this means that we cannot rule out $\max\left\{\tilde F_A^{-1}(\tau-\delta) ,F_A^{-1}(\tau-\delta c_{\tau,L})\right\} = \tilde F_A^{-1}(\tau-\delta).$ This is related to part $(ii)$. Indeed, if the opposite is true, $\tilde F_A^{-1}(\tau-\delta)< F_A^{-1}(\tau-\delta c_{\tau,L})$, then we can modify the language of part $(ii)$
 to say that for $c> c_{\tau,L}$ the conclusion \textit{will} not hold.\footnote{In the empirical application we test for $\tilde F_A^{-1}(\tau-\delta)-F_Y^{-1}(\tau) < g_{\tau,L}$.}

 \subsection{Some comparative statics}
 
 It is possible to examine the behavior of $c_{\tau,L}$ with respect to $g_{\tau,L}$. Differentiating \eqref{qbf_def}, we get, for a fixed $\tau$:
\begin{align}\label{der_qbf_g}
\frac{\partial c_{\tau,L}}{\partial g_{\tau,L}}= -\frac{f_A\left ( F_{Y}^{-1}(\tau)+ g_{\tau,L} \right )}{\delta}<0.
\end{align} 
provided $\delta>0$. This means that if the conclusion is less stringent, \textit{i.e.}, $g_{\tau,L}$ is reduced, then the frontier moves upwards, indicating that more policy selection bias can be allowed before the conclusion breaks down. In this sense, $c_{\tau,L}$ provides a minimum amount of policy selection bias for all $g$ such that $g\leq g_{\tau,L}$.

Now consider the derivative of \eqref{qbf_def} with respect to $\tau$:
\begin{align*}
\frac{\partial c_{\tau,L}}{\partial \tau}= \frac{1}{\delta}-\frac{f_A\left ( F_{Y}^{-1}(\tau)+ g_{\tau,L} \right )}{\delta}\left(\frac{\partial F_{Y}^{-1}(\tau)}{\partial \tau} +  \frac{\partial g_{\tau,L}}{\partial \tau} \right)
\end{align*} 
It is not possible to sign the slope of the frontier \textit{a priori}. It can depend largely on $\tau$, that is, whether we are the center or the tails of the distribution of $Y$, and whether the outcome is bounded or not, among other reasons. 

The derivative of \eqref{qbf_def} with respect to $\delta$ is more complicated due to the fact that $F_A$ depends on $\delta$. A general expression is the following:
\begin{align*}
\frac{\partial c_{\tau,L}}{\partial \delta}= -\frac{c_{\tau,L}}{\delta}-\frac{1}{\delta}\frac{\partial F_A\left ( F_{Y}^{-1}(\tau)+ g_{\tau,L} \right )}{\partial \delta}
\end{align*} 
where, following the definition of $F_A ( y)$ given in \eqref{complete_apparent}, we get (heuristically)
\begin{align*}
\frac{\partial F_A ( y)}{\partial \delta} &=  \int_{\mathcal X}F_{Y|D=1,X=x}(y)dF_{X|D=0,D_\delta=1}(x)-F_{Y|D_\delta=0}(y)\\& +  \delta \int_{\mathcal X}F_{Y|D=1,X=x}(y)\frac{\partial f_{X|D=0,D_\delta=1}(x)}{\partial \delta} dx - \delta \frac{\partial F_{Y|D_\delta=0}(y)}{\partial \delta}.
\end{align*}

\subsection{Bounds derived from the QBF}
Suppose that a policy maker choose a particular quantile $\tau^*$ and a target conclusion $g_{\tau^*,L}$. Then, provided that $c_{\tau^*,L}\in [0,1]$, we can obtain bounds on the global effect other quantiles $\tau\neq \tau^*$. The interpretation is that, as long as the policy selection bias satisfies $c\leq c_{\tau^*,L}$, the global effect will be bounded across quantiles by the bounds of Theorem \ref{thm_bounds_global} evaluated at $c_{\tau^*,L}$. In a sense, this allows us to extend the sensitivity analysis to other quantiles. That is for $\tau\in(\delta,1-\delta)$, we have
\begin{align}\label{eq:bounds_derived}
\max\left\{\tilde F_A^{-1}(\tau-\delta) ,F_A^{-1}(\tau-\delta c_{\tau^*,L})\right\} - F_{Y}^{-1}(\tau) &\leq G_{\tau, D_\delta}\notag\\
&\leq \min\left\{\tilde F_A^{-1}(\tau) ,F_A^{-1}(\tau+\delta c_{\tau^*,L})\right\}-F_{Y}^{-1}(\tau).
\end{align}

\section{Estimation and Inference}

We work in the space $\ell^{\infty}(\delta,1-\delta)$ of bounded real-valued functions defined on $(\delta,1-\delta)$. As usual, we endow this space with the supremum norm: $\|x\|_\infty:=\sup_{t\in(\delta,1-\delta)}|x(t)|$.\footnote{The reason we restrict the space to be $\ell^{\infty}(\delta,1-\delta)$ and not $\ell^{\infty}(0,1)$ is due to the fact that for a given $\delta$, we cannot reach quantiles below $\delta$ or above $1-\delta$. See Remark \ref{remark_tau} above.} In order to simplify notation, and ensure the continuity of the quantile breakdown frontier, we are going to focus on the case where the threshold is constant across $\tau$.

\begin{assumption}[Constant Threshold]\label{uniform_threshold}
For some scalar $g$, the threshold $g_{\tau,L}$ satisfies $g_{\tau,L}=g$ for any $\tau\in(\delta,1-\delta)$. 
\end{assumption}

This assumption can be relaxed at the expense of more complicated notation. However, we still require smoothness in the map $\tau\mapsto g_\tau.$ For the case of the quantile breakdown for the sign of the marginal effect, we will set $g=0$. 

\begin{assumption}[DGP]\label{dgp}
We observe an i.i.d. sample $\{ Y_i, D_i, X_i \}_{i=1}^n$, where the support of $X$ is finite.
\end{assumption}

\subsection{Global Effect - Estimation}
To estimate $ c_{\tau,L}$ given in \eqref{qbf_def} we need to specify two components: $(i)$ $\delta$, the increase in the treated proportion, and $(ii)$
$D_{i,\delta}$, the counterfactual treatment assignment, as in examples \ref{example_rand_policy} and \ref{example_non_rand_policy}. Once this has been done, 
we use sample analogs. The estimator of the quantile breakdown frontier for the global effect is
\begin{align}\label{qbf_estimation}
\hat c_{\tau,L}= \frac{\tau - \hat F_A\left ( \hat F_{Y}^{-1}(\tau)+ g \right )}{\delta},
\end{align}
where $ \hat F_{Y}^{-1}(\tau)$ is the empirical $\tau$-quantile of $Y$: $\hat F_Y^{-1}(\tau) :=\inf \left\{ y:\hat F_Y(y)\geq \tau  \right\}$, and $\hat F_A(y)$ is the empirical counterpart of $F_A(y)$ given in Theorem \ref{thm_bounds_global}. Detailed expressions are provided in Appendix \ref{appendix_notation}. 
This is similar to a quantile-quantile transformation (see Exercise 4 in Chapter 3.9 in \cite{vandervaart1996}). We base our proof of the asymptotic distribution of $\sqrt n(\hat c_{\tau,L}-c_{\tau,L})$ on the proof of Lemma A.1 in \cite{Beare2019}.\footnote{\cite{Beare2019} also offer some interesting historical context for the result.} We view the map $\tau\mapsto\hat c_{\tau,L}$ as a random element of $\ell^{\infty}(\delta,1-\delta)$. In that case, we denote it simply by $\hat c_{L}.$

The main assumption is the following.
\begin{assumption}[Functional CLT]\label{brownian_bridge}
The following multivariate functional central limit theorem holds 
\begin{align*} 
\sqrt n\begin{pmatrix}
\hat F_{Y} - F_Y\\
\hat F_{A}-  F_{A}
\end{pmatrix} \rightsquigarrow \begin{pmatrix}
\mathbb G_Y\\ 
\mathbb G_{A}
\end{pmatrix},
\end{align*}
where $\mathbb G_Y$ and $\mathbb G_A$ are Brownian bridges in $\ell^{\infty}(\mathcal Y)$.
\end{assumption}

\begin{remark}
Appendix \ref{app_primitive} lists primitive assumptions in terms of the CDFs that comprise $F_A$ that lead to Assumption \ref{brownian_bridge}. 
\end{remark}

The following assumption is needed to establish the Hadamard differentiable of different functions used in the construction of $\theta$.
\begin{assumption}[Conditions for Hadamard Differentiability]\label{ass_had_dif}
\item
\begin{enumerate}
\item \label{ass_f_y_quantiles}For some $\varepsilon>0$, $F_Y$ is continuously differentiable in $[F_Y^{-1}(\delta)-\varepsilon,F_Y^{-1}(1-\delta)+\varepsilon]\subset \mathcal Y$ with strictly positive derivative $f_Y$.
\item  \label{ass_f_a_uniform}$y\mapsto F_A(y)$ is differentiable, with uniformly continuous and bounded derivatives.
\end{enumerate}
\end{assumption}

The first item in Assumption \ref{ass_had_dif} concerns the support $\mathcal Y$ and the smoothness of $F_Y$. It is used to guarantee the Hadamard differentiability of the quantile process $\tau\mapsto F_Y^{-1}(\tau)$ for $\tau\in(\delta,1-\delta)$. The second item ensures that $F_A(y)$ has a uniformly continuous and bounded derivative which we denote by $f_A(y)$. It is needed to establish the Hadamard differentiability of the composition map $(F_A,F_Y^{-1})\mapsto F_A\circ (F_Y^{-1}+g)$.\footnote{Section 3.9 in \cite{vandervaart1996} studies the Hadamard differentiability of composition maps.}

\begin{theorem}\label{dist_theta}
Under Assumptions \ref{uniform_threshold}, \ref{dgp}, \ref{brownian_bridge}, and \ref{ass_had_dif}
\begin{align*}
\sqrt n\begin{pmatrix}
\hat c_{L}- c_{L}
\end{pmatrix} \rightsquigarrow 
\frac{1}{\delta}\mathbb G_A\circ (F_Y^{-1}+g)-\frac{1}{\delta}f_A\circ (F_Y^{-1}+g) \frac{\mathbb G_Y\circ F_Y^{-1}}{f_Y\circ F_Y^{-1}},
\end{align*}
a tight Gaussian element in $\ell^{\infty}(\delta,1-\delta)$.
\end{theorem}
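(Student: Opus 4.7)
The plan is to write $\hat c_{\tau,L}$ as the image of the pair $(\hat F_A, \hat F_Y)$ under a composition of Hadamard differentiable maps, and then apply the functional delta method to the joint weak convergence supplied by Assumption \ref{brownian_bridge}. Specifically, define the functional
\[
\phi(G,F)(\tau) \;=\; \frac{\tau - G\!\bigl(F^{-1}(\tau)+g\bigr)}{\delta},
\]
so that $\hat c_{L} = \phi(\hat F_A, \hat F_Y)$ and $c_L = \phi(F_A, F_Y)$. The strategy is to decompose $\phi$ as $\phi = \psi_3 \circ \psi_2 \circ \psi_1$, where $\psi_1(G,F) = (G, F^{-1})$ restricts the quantile operation to $F$, $\psi_2(G,H) = (G, H+g)$ shifts, and $\psi_3(G,K)(\tau) = (\tau - G\circ K(\tau))/\delta$ is the plug-in composition against $G$.

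First I would establish Hadamard differentiability of the quantile map $F\mapsto F^{-1}$ from an appropriate neighborhood of $F_Y$ into $\ell^\infty(\delta,1-\delta)$, tangentially to continuous functions. Assumption \ref{ass_had_dif}.\ref{ass_f_y_quantiles} (continuous differentiability of $F_Y$ with strictly positive density on a neighborhood of the relevant quantile range) is exactly the hypothesis of Lemma 3.9.23 of \citet{vandervaart1996}, which yields
\[
\sqrt n(\hat F_Y^{-1} - F_Y^{-1}) \rightsquigarrow -\frac{\mathbb G_Y\circ F_Y^{-1}}{f_Y\circ F_Y^{-1}}\qquad \text{in }\ell^\infty(\delta,1-\delta).
\]
Combining this with the joint FCLT in Assumption \ref{brownian_bridge} through the (joint) functional delta method delivers joint weak convergence of $\sqrt n(\hat F_A - F_A, \hat F_Y^{-1} - F_Y^{-1})$ in $\ell^\infty(\mathcal Y)\times \ell^\infty(\delta,1-\delta)$ to $(\mathbb G_A, -\mathbb G_Y\circ F_Y^{-1}/(f_Y\circ F_Y^{-1}))$. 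The shift by $g$ is trivially linear (hence its own derivative), so after $\psi_2$ we are in position to apply a composition lemma.

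Next I would invoke Hadamard differentiability of the composition map $(G,K)\mapsto G\circ K$, which is the subject of Lemma 3.9.27 of \citet{vandervaart1996} (see also the discussion in Section 3.9 that the excerpt points to). Assumption \ref{ass_had_dif}.\ref{ass_f_a_uniform} supplies the uniformly continuous and bounded derivative $f_A$ needed for the hypothesis on $G=F_A$, and the limit $F_Y^{-1}+g$ maps $(\delta,1-\delta)$ into the interior of the domain where $F_A$ is smooth. The Hadamard derivative evaluated at $(F_A, F_Y^{-1}+g)$ in direction $(\alpha,\beta)$ is $\alpha\circ(F_Y^{-1}+g) + (f_A\circ(F_Y^{-1}+g))\cdot\beta$. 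Chaining this with the previous step and the deterministic affine map $u\mapsto (\tau - u)/\delta$, the chain rule for Hadamard differentiable maps yields the claimed limit
\[
\sqrt n(\hat c_L - c_L) \rightsquigarrow \frac{1}{\delta}\,\mathbb G_A\circ(F_Y^{-1}+g) \;-\; \frac{f_A\circ(F_Y^{-1}+g)}{\delta}\cdot\frac{\mathbb G_Y\circ F_Y^{-1}}{f_Y\circ F_Y^{-1}}.
\]
Tightness and Gaussianity of the limit in $\ell^\infty(\delta,1-\delta)$ follow because it is a continuous linear image of a tight Gaussian element.

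The main obstacle, and the reason the excerpt explicitly cites \citet{Beare2019}, is verifying the hypotheses of the composition Hadamard lemma with the correct tangential-continuity requirements, because $\hat F_Y^{-1}$ has only cadlag sample paths while the limit $\mathbb G_Y\circ F_Y^{-1}/(f_Y\circ F_Y^{-1})$ lives in the space of continuous functions. The argument must check that the direction $\beta$ supplied to $\psi_3$ is almost surely continuous on $(\delta,1-\delta)$, so that the composition derivative at $F_Y^{-1}+g$ (which is only defined tangentially to continuous functions) applies. This is where Assumptions \ref{ass_had_dif}.\ref{ass_f_y_quantiles} and \ref{ass_had_dif}.\ref{ass_f_a_uniform} jointly do the work: the former guarantees continuity of the quantile limit, and the latter guarantees that multiplication by $f_A\circ(F_Y^{-1}+g)$ preserves continuity and boundedness uniformly on $(\delta,1-\delta)$.
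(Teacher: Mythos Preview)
Your proposal is correct and follows essentially the same route as the paper: decompose $c_L$ as a composition of the quantile map (Hadamard differentiable by Assumption \ref{ass_had_dif}.\ref{ass_f_y_quantiles} via Lemma 21.4 in \cite{vandervaart2000} / Lemma 3.9.23 in \cite{vandervaart1996}) and the composition map $(G,K)\mapsto G\circ K$ (Hadamard differentiable by Assumption \ref{ass_had_dif}.\ref{ass_f_a_uniform} via Lemmas 3.9.25 and 3.9.27 in \cite{vandervaart1996}), chain the derivatives, and apply the functional delta method to Assumption \ref{brownian_bridge}. The only cosmetic differences are that you split the shift by $g$ into its own trivially linear step and that your displayed limit carries the overall sign of the theorem statement rather than the $-1/\delta$ your own affine map $u\mapsto(\tau-u)/\delta$ would produce; this is immaterial since the limit is mean-zero Gaussian and hence symmetric.
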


\subsection{Global Effect - Inference}
We propose a two-step inference procedure to accommodate the results of Lemma \ref{qbf_global_lemma}. First, we test the following null hypothesis: $   H_{1,0}: \tilde F_A^{-1}(\tau-\delta)-F_Y^{-1}(\tau)< g,$
against the alternative $ H_{1,a}: \tilde F_A^{-1}(\tau-\delta)-F_Y^{-1}(\tau)\geq g.$
If the null $H_{1,0}$ is not rejected, we proceed to test: $H_{2,0}: c_{\tau,L} = \bar c$, against the alternative $ H_{2,a}: c_{\tau,L} \neq \bar c,$ for some user-specified $\bar c$.

Let $\alpha_1$ denote the \emph{size} of the first test, and let $\alpha_2$ denote the \emph{conditional size} of the second test. The family-wise error rate (FWER), defined as the probability of making at least one false rejection among all true null hypotheses, is $ \alpha_1 + (1-\alpha_1)\alpha_2$. For example, to keep the FWER at $5\%$, then setting $\alpha_1=0.025$ yields $\alpha_2\le(0.05-\alpha_1)/(1-\alpha_1)\approx 0.02564$.

The first null hypothesis requires estimation of $d_{\tau}:=\tilde F_A^{-1}(\tau-\delta)-F_Y^{-1}(\tau)$. Estimation follows by resorting to the sample analogs: $\hat d_{\tau}=\hat{\tilde F}_A^{-1}(\tau-\delta)-\hat F_Y^{-1}(\tau)$.
Details can be found in Appendix \ref{appendix_d_tau}.

\subsection{Bounds derived from the QBF}
To estimate the bounds derived the from the QBF for the global effect, we use the sample counterpart of \eqref{eq:bounds_derived}. Besides the target conclusion $g$ for the global effect, we need to supply, a quantile level $\tau^*$ of interest. For 
$\tau\in(\delta,1-\delta)$, the estimator of the derived bounds is
\begin{align*}
\max\left\{\hat{\tilde F}_A^{-1}(\tau-\delta) ,\hat F_A^{-1}(\tau-\delta \hat {\tilde c}_{\tau^*,L})\right\} - \hat F_{Y}^{-1}(\tau) &\leq G_{\tau, D_\delta}\\
&\leq \min\left\{\hat{\tilde F}_A^{-1}(\tau) ,\hat F_A^{-1}(\tau+\delta \hat {\tilde c}_{\tau^*,L})\right\}- \hat F_{Y}^{-1}(\tau),
\end{align*}
where 
\begin{align}
\hat {\tilde c}_{\tau^*,L} =  \max \left \{  \min \left \{  \hat c_{\tau^*,L}, 1\right\},0 \right\}.
\end{align}
Here, $\hat c_{\tau^*,L}$ is the estimator given in \eqref{qbf_estimation} evaluated $\tau=\tau^*$. The asymptotic distribution of the bounds cannot be Gaussian due to being the composition of maps which are not Hadamard fully differentiable--only directional. 
Hence, by Corollary 3.1 in \cite{Fang2019}, the standard bootstrap will fail. This means that if we attempt to construct confidence intervals in the usual way by resampling, we will not obtain correct asymptotic coverage. Instead, we use the numerical delta method of \cite{Hong2018}. A detailed analysis of the asymptotic distribution of the bounds is in Appendix \ref{distribution_of_bounds_appendix}.

\section{Empirical application: What do unions do?}\label{section_empirical}

There is an extensive literature that studies unions and inequality. A recent contribution by \cite{Farber2020} contains a review of the literature. In our empirical application, in particular, we look at how unions affect the distribution of wages for \emph{all} workers. Unions can have a variety of effects on the distribution of wages. As argued by \cite{Freeman1980}, unions can raise the wages of unionized workers relative to non-unionized workers, possibly through more bargaining power. So, if higher paid workers unionize, the dispersion of wages can increase, but if lower paid workers unionize, the dispersion of wages can decrease. Furthermore, within a given industry, the union can reduce the dispersion of wages by standardizing the wages. This will impact the distribution of wages more or less depending on the size of the industry and the wages it pays. 

A key difficulty in identifying the causal effect of unions on wages is that selection into unions is non-random. Hence, any measurement of the union premium--the difference in wages between similar union and nonunion workers--will be biased for the causal effect. Indeed, this has been a long standing concern of labor economists.\footnote{Indeed, the opening words of \cite{Card1996} are: \begin{quotation}
\emph{Despite a large and sophisticated literature there is still substantial disagreement over the extent to which differences in the structure of wages between union and nonunion workers represent an effect of trade unions, rather than a consequence of the nonrandom selection of unionized workers.}
\end{quotation}} With respect to selection into unions, \cite{Card1996} argues that unionized workers with low observed skills, tend to have high unobserved skills. The reverse happens with high skilled unionized workers: they tend to have low unobservable skills. Due to this selection bias, it might be impossible for a policy maker to devise a policy where the \emph{newly} unionized workers are selected in a way such that they are drawn from the distribution of the \emph{already} unionized workers.

Using the techniques developed in this paper, we are going to consider the effect of both globally and marginally expanding union coverage. We will explicitly allow for non-random selection into unions. 
This allows for the \emph{newly} unionized and \emph{already} unionized workers to be drawn from different distributions. We do not use any imputation method to impute the union premium of the newly unionized workers.

Following \cite{Freeman1980}, \cite{Card2001} and \cite{Card2004} we consider a two sector economy. Each worker has a well-defined pair of potential (log) wages: $Y_i(1)$ for the unionized sector and $Y_i(0)$ for the nonunionized sector. Under Assumption \ref{assumption_policy}, and for any policy $D_\delta$, we have the following classification of individuals:
\begin{equation*}
\centering
\begin{tabular}{l|*{2}{c}r}
              & $D_{\delta }=0$ & $D_{\delta }=1$ \\
\hline
$D=0$ & \emph{nonunionized} & \emph{newly unionized}  \\
$D=1$            & - & \emph{unionized}  \\
\end{tabular}
\end{equation*}

The relevant unobserved distribution is then $F_{Y(1)|D=0, D_\delta=1}$: the union wages of the newly unionized workers. Following \eqref{eq:ks_distance}, we look at departures of $F_{Y(1)|D=0, D_\delta=1}$ from
\begin{align*}
\int_{\mathcal X}F_{Y|D=1,X=x}(y)dF_{X|D=0,D_\delta=1}(x),
\end{align*}
which is observed. This difference is what we refer to as the policy selection bias.

Using the data in \cite{Firpo2009} we estimate the quantile breakdown frontier for marginal and global effects of different type of policies on the distribution of real log hourly wages. We use the 1983-1985 Outgoing Rotation Group (ORG) Supplement of the Current Population Survey. Our sample consists of 266,956 observations on U.S. males, of which $73.8\%$ are non-unionized, and $26.2\%$  are unionized. See \cite{Lemieux2006} for more details about the data. The covariates are: years of education, age, marital status, dummy for race (nonwhite), years of experience, and union status indicator. Table \ref{tab:wage_cov_diffs} contains sample means by union status, and the difference. All of the differences are significantly different from $0$ at the $5\%$ level, possibly reflecting selection into unions. 

\begin{table}[ht]
\centering
\begin{tabular}{lccc}
\hline
Variable & Non-unionized & Unionized & Difference \\
\hline
\hline
Log(wage) & 1.718 & 1.963 & 0.245 \\
 & (0.601) & (0.407) & [0.241, 0.249] \\[1ex]
Age (years) & 34.953 & 39.783 & 4.831 \\
 & (12.455) & (11.566) & [4.729, 4.933] \\[1ex]
Education (years) & 12.966 & 12.498 & -0.468 \\
 & (2.912) & (2.650) & [-0.491, -0.444] \\[1ex]
Experience (years) & 15.987 & 21.285 & 5.298 \\
 & (12.668) & (12.252) & [5.192, 5.405] \\[1ex]
Married (share) & 0.611 & 0.742 & 0.131 \\
 & (0.488) & (0.438) & [0.127, 0.135] \\[1ex]
Nonwhite (share) & 0.102 & 0.134 & 0.032 \\
 & (0.303) & (0.341) & [0.029, 0.035] \\[1ex]
Number of Observations & 197,115 & 69,841 & \\[2ex]
\hline
\hline
\end{tabular}
\caption{Sample means by union status. Standard deviations in parentheses, and $95\%$ CIs for the differences in brackets.}
\label{tab:wage_cov_diffs}
\end{table}


The unionization rate in the dataset is $0.26$. Figure \ref{hump_union_rates} shows the typical hump-shaped pattern of the unionization rates by quantiles of the distribution of wages. For lower quantiles, unionization rates are quite low. They peak in the past the middle of the distribution and then drop at the higher quantiles.

\begin{figure}
\centering
\includegraphics[scale=0.42]{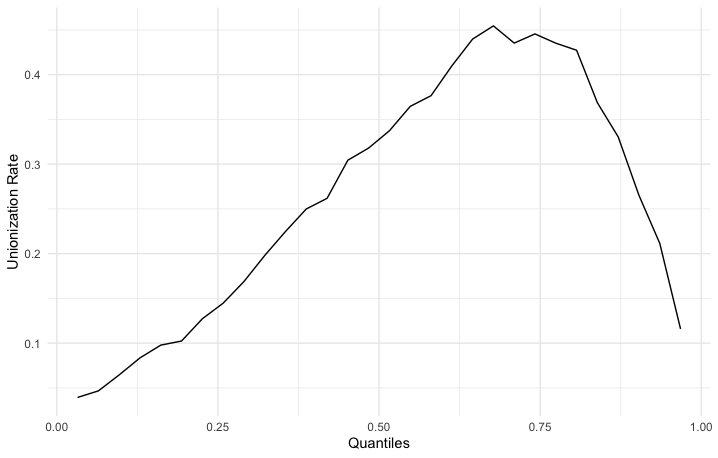}
\caption{\emph{Unionization rates by quantiles of the distribution of wages.}}
\label{hump_union_rates}
\end{figure}

Consider a policy that increase in the unionization rate by $10\%$. It consists of unionizing workers whose wages are below the $.10/(1-p)$-quantile $\approx 0.14$-quantile of the wages of the nonunionized sector.\footnote{This guarantees that the unionization rate increases by roughly 10\%. Indeed, the mean of $D_\delta$ is now $0.36$.} In the notation of this paper, we have $D=1$ if a worker is unionized, $D_\delta=1$ if a worker is unionized under the policy, $Y$ is (log) wage, and $\delta=0.1$. That is, $D_\delta$ is given by
\begin{align*}
D_\delta = 
\left\{
\begin{aligned}
1 &\quad \text{if } D = 1 \\
1 &\quad \text{if } D = 0 \text{ and } Y \le  F^{-1}_{Y|D=0}(0.14) \\
0 &\quad \text{otherwise}
\end{aligned}
\right.
\end{align*}

Figure \ref{empirical_qbf_many_2} shows a family of quantile breakdown frontiers for $g=0, 0.025, 0.05, 0.075, 0.1$, for a grid of $\tau\in (0.15,0.85)$. The top one corresponds to $g=0$. As $g$ increases, the frontiers shift down. This is in agreement with \eqref{der_qbf_g}, which states that, for a fixed $\tau$, the derivative of $c_{\tau,L}$ with respect to $g$ is negative.

Figure \ref{empirical_qbf} takes a closer look at the quantile breakdown frontier for $g=0.05$. Confidence intervals are obtained via 1,000 bootstrap replications. Since the outcome variable is log wages, this means approximately a $5\%$ increase in wages. First, we can see that for lower quantiles, the conclusion will hold as long as $c$ is below the frontier. On the other hand, for upper quantiles, the conclusion does not hold under point identification. Thus there is no point in doing a sensitivity analysis for upper quantiles.

Suppose the policy maker is interested in $\tau^*=.3$. This is indicated by the vertical dashed red line. The $.3$-quantile of the unconditional distribution of log wages is $1.465$. An increase of $0.05$ would bring this the $.3$-quantile to $1.515$ which is close to the $.33$-quantile which is $1.521$. In this case, $\hat c_{\tau^*,L}=\hat c_{.3,L}=0.223$, and $[0.176,0.270]$ is the approximately $97.5\%$ confidence interval. Importantly, $\hat d_{\tau^*} = \hat d_{.3} = 0.0001$ which is less that $g=0.05$, and, as shown in Figure \ref{plot_d_tau}, it is (uniformly) statistically less that $g=0.05$. This means, by parts $(iii)$ and $(iv)$ of Lemma \ref{qbf_global_lemma}, that the conclusion \textit{only} holds for $c\leq 0.223$, and hence the quantile breakdown frontier is sharp at $\tau=.3$. Using equation \eqref{eq:bounds_derived}, we can find the bounds for the global effect for all quantiles which are consistent with this amount of policy selection bias. This is shown in Figure \ref{bounds_qbf}. Note that at $\tau=0.3$, the lower bound is $.05$ by construction. At upper quantiles, the lower bound is slightly negative.

\begin{figure}
\centering
\includegraphics[scale=0.18]{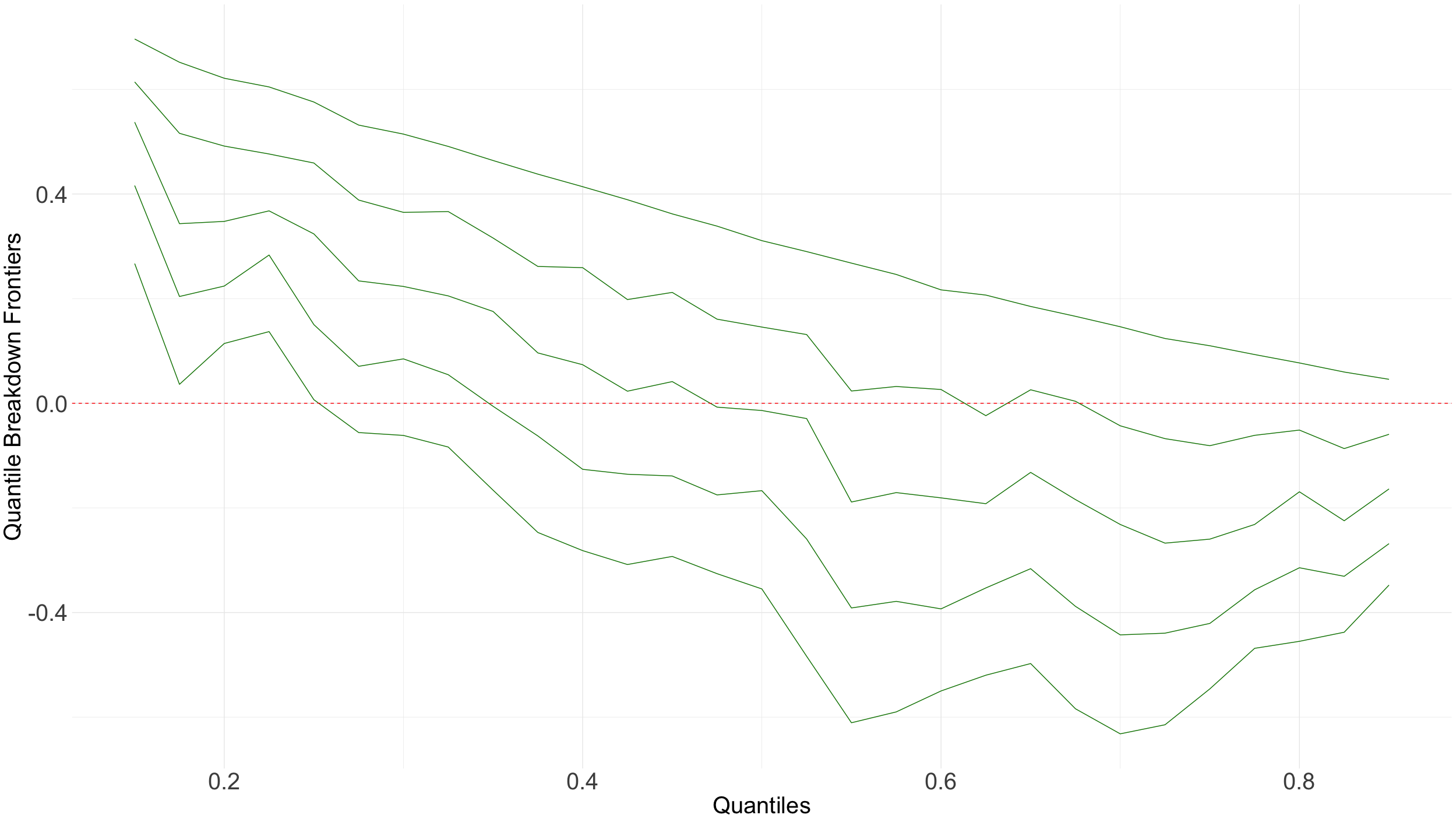}
\caption{\emph{Quantile Breakdown Frontiers for $G_{\tau,D_\delta}\geq g$ and $g=0, 0.025, 0.05, 0.075, 0.1.$ Top one is $g=0$.}}
\label{empirical_qbf_many_2}
\end{figure}

\begin{figure}
\centering
\includegraphics[scale=0.18]{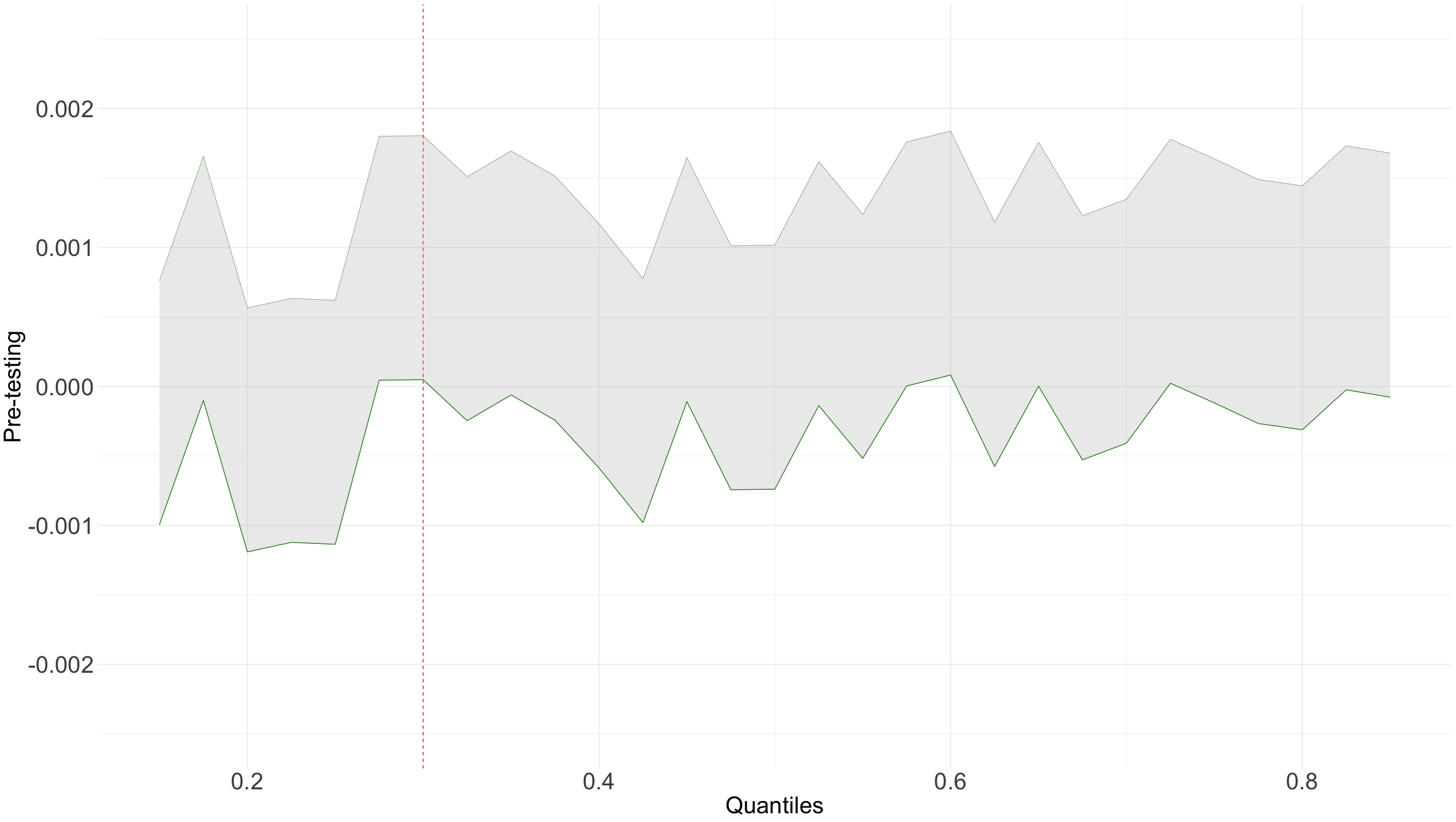}
\caption{\emph{Pre-testing: $97.5\%$ one-sided upper uniform confidence band for $d_\tau.$}}
\label{plot_d_tau}
\end{figure}

\begin{figure}
\centering
\includegraphics[scale=0.18]{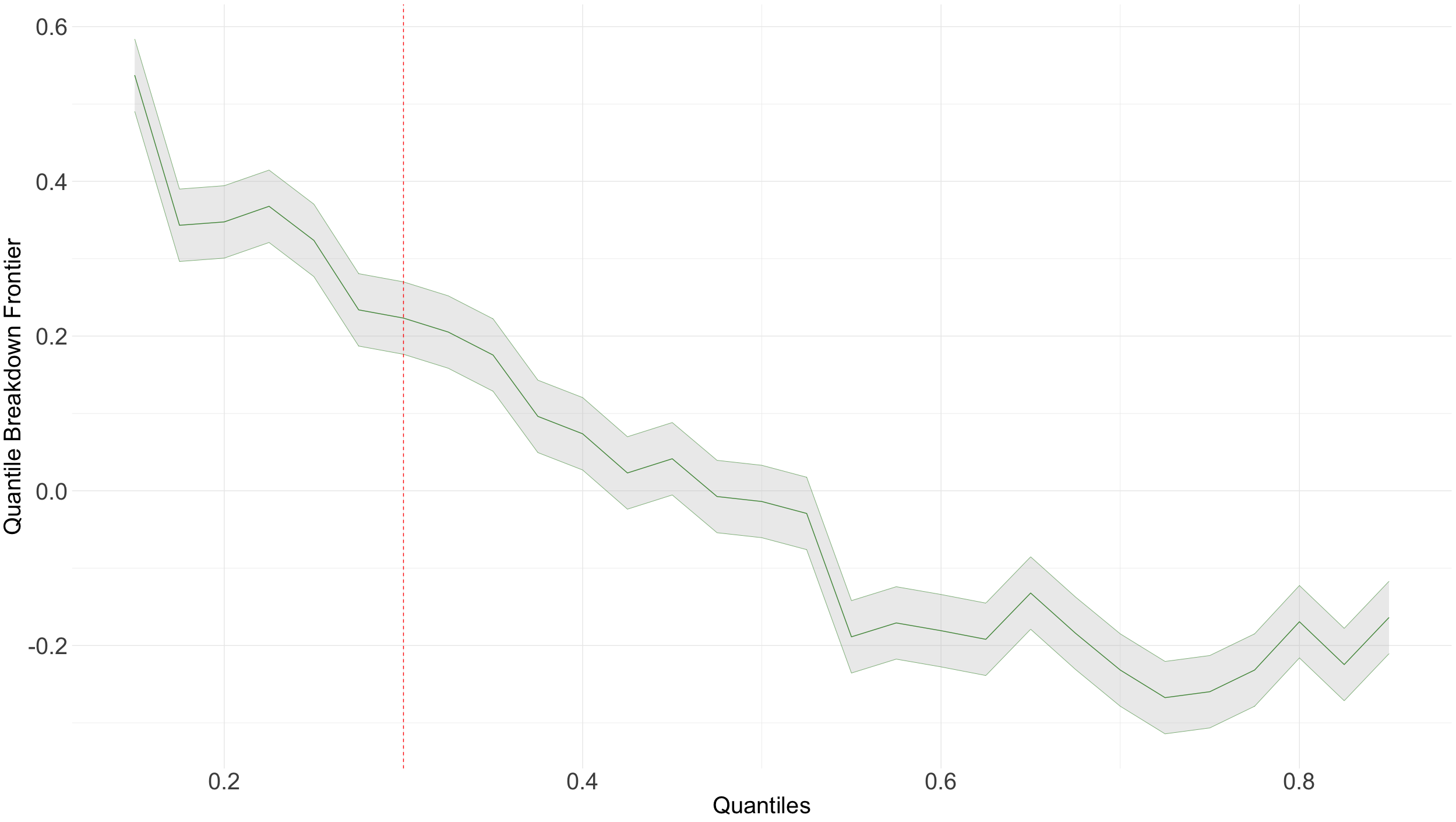}
\caption{\emph{Quantile Breakdown Frontier for $G_{\tau,D_\delta}\geq0.05$ with $\approx97.5\%$ uniform confidence bands.}}
\label{empirical_qbf}
\end{figure}

\begin{figure}
\centering
\includegraphics[scale=0.18]{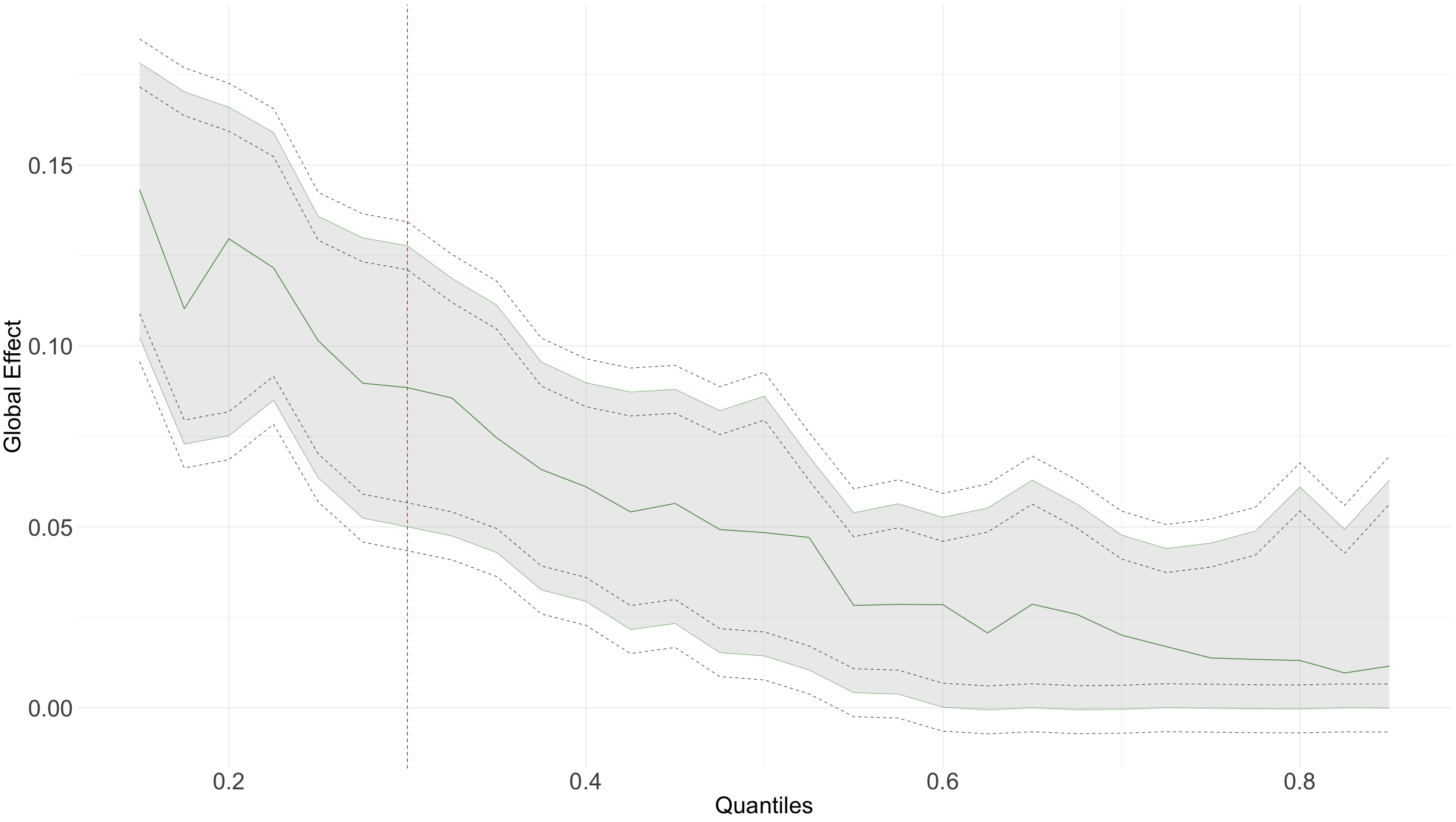}
\caption{\emph{Bounds derived from the QBF with $95\%$ uniform confidence bands, and the point identified global effects.}}
\label{bounds_qbf}
\end{figure}

In order to interpret the magnitude of the QBF, we compare it to the observable difference
		\begin{align}\label{eq:obs_psb}
c_{obs}:=\sup_{y\in \mathbb R}\left |F_{Y(0)|D=0,D_\delta=1}(y)-\int_{\mathcal X}F_{Y|D=1,X=x}(y)dF_{X|D=0,D_\delta=1}(x)\right |
\end{align}	
This is different from the KS-distance in Assumption \ref{assumption_ks_distance}, which is stated in terms of $F_{Y(1)|D=0,D_\delta=1}(y)$, an unobserved distribution. In \eqref{eq:obs_psb}, we are comparing the difference between the newly-treated when they are not treated, versus the treated matched with the newly treated covariates. The empirical counterpart of \eqref{eq:obs_psb} is $\hat c_{obs}=0.243$, and $(0.231 ,0.256)$ is the $95\%$ confidence interval computed via the usual bootstrap. This is similar to the value of the QBF at $\tau=.3$ which is  $\hat c_{.3,L}=0.223$. If the treatment does not have a very large effect on the distribution of the outcome, then one could argue that the conclusion of interest might hold. However, if the treatment might have a very strong effect on the distribution, then it might be less likely that the conclusion holds.

\section{Conclusion}
This paper provides a way to perform a sensitivity analysis on the unconditional quantiles effects of policies that manipulate the proportion of treated individuals. To do so, we introduce the quantile breakdown frontier as a tool to examine, across quantiles, whether a sensitivity analysis is possible, and how much policy selection bias is compatible with a given conclusion. Next, we use the information from the curve at a particular quantile to provide bounds on the effect of a policy across quantiles. Our empirical application looks at the effect of increasing the proportion of unionized workers by unionizing lower earners. We find that an increase of $10\%$ for the $30$\textsuperscript{th} quantile is consistent under moderate values of selection bias. Moreover, this implies that the effect on lower quantiles would be even bigger, and that lower quantiles would not be hurt.

\clearpage
\bibliographystyle{aea}
\bibliography{references}

\clearpage
\appendix
\section*{Appendices}

\renewcommand{\thesubsection}{\Alph{subsection}}

\setcounter{equation}{0}
\renewcommand\theequation{A.\arabic{equation}}%

\subsection{Understanding the Policy Selection Bias}\label{app_pol_bias}
    In an RCT with perfect compliance there is no selection bias, and, if a counterfactual policy shifts people selected at random, then there is no policy selection bias either. In this case, the global effect of such policy is identified. A new policy where $D_\delta$ is correlated with $U_1$ induces policy selection bias, and we lose identification. 
    
  To understand the policy selection bias, we carry out a Monte Carlo simulation of an RCT with perfect compliance. By dispensing of the selection bias, we aim to study the policy selection bias in isolation. The effect of a policy that shifts people at random will be denoted by $G_\tau$. By Theorem \ref{thm_bounds_global}, we need to set $c=0$, and we get
    \begin{align*}
G_{\tau}=F_A^{-1}(\tau)-F_{Y}^{-1}(\tau)
\end{align*}
    where $F_A(y)=(p+\delta) F_{Y|D=1}(y)  + (1-p-\delta) F_{Y|D=0}(y)$. As mentioned above, this is identified. The global effect another policy $D_\delta$ will be denoted by $G_{\tau, D_\delta}$, and is given by
    \begin{align*}
G_{\tau, D_\delta}&:=F_{Y_{D_\delta}}^{-1}(\tau)-F_{Y}^{-1}(\tau).
\end{align*}
In the simulation I will show the discrepancy between $G_{\tau}$ and $G_{\tau, D_\delta}$, and also compute numerically
\begin{align*}
\sup_{y\in\mathbb R}\left |F_{Y(1)|D=0,D_\delta=1}(y)-\int_{\mathcal X}F_{Y|D=1,X=x}(y)dF_{X|D=0,D_\delta=1}(x)\right |
\end{align*}
which, if there are not covariates, boils down to $\sup_{y\in\mathbb R}\left |F_{Y(1)|D=0,D_\delta=1}(y)-F_{Y|D=1}(y)\right |$. The policies I will compare will be similar to the one in the empirical application:  increase the proportion of treated units by $\delta$, by treating those untreated individuals whose outcome is below the $\delta/(1-p)$-quantile: treat everyone who was already treated, plus the $\delta$ lowest untreated Y values.

 These are details of the simulation:
     \begin{align*}
Y(0)&\sim \mathcal N(0,1)\\
Y(1)&=1+ Y(0) + U,
\end{align*}
where $U\sim \mathcal N(0,1)$. Assignment to treatment obeys $\Pr(D=1)=p=1/2$, with $D\perp Y(0), Y(1)$, and $Y = DY(1) + (1-D)Y(0)$. The counterfactual policy $D_\delta$ is defined by
 \begin{align}\label{d_delta_appendix}
D_\delta = 
\left\{
\begin{aligned}
1 &\quad \text{if } D = 1 \\
1 &\quad \text{if } D = 0 \text{ and } Y \le F^{-1}_{Y|D=0}\left(\frac{\delta}{1-p}\right) \\
0 &\quad \text{otherwise}
\end{aligned}
\right.
\end{align}
I averaged over 1,000 simulations to smooth the curves. Figure \ref{fig:gtau} plots $G_{\tau}$ vs. $G_{\tau, D_\delta}$ across $\tau$ for a fixed $\delta=0.1$. As expected,  $G_{\tau, D_\delta}$  is higher for lower quantiles and monotonically decreasing: as $\tau$ increases there are fewer and fewer individuals with very low $Y(0)$ but high $Y(1)$. On the other hand, $G_{\tau}$ shows the reverse pattern. This is also expected, because since $Y(1)$ tends to be bigger $Y(0)$, bringing random individuals intro treatment shifts increases all the quantiles of $Y(1)$ while keeping the quantiles of $Y(0)$ roughly unchanged. In short, relying on (the identified) $G_{\tau}$ can be very misleading.

To see how the policy selection bias depends on $\delta$, Figure \ref{fig:sup} plots the map 
     \begin{align*}
\delta\mapsto \sup_{y\in\mathbb R}\left |F_{Y(1)|D=0,D_\delta=1}(y)-F_{Y|D=1}(y)\right |,
\end{align*}
and shows that as $\delta$ increases the policy selection bias decreases. As $\delta$ increases, because of the particular policy we are considering in \eqref{d_delta_appendix}, we start shifting more individuals that are not originally selected into treatment. In a the sense then $Y(1)|D=0, D_\delta=1$ gets closer to $Y(1)|D=1$.
\begin{figure}[ht]
\centering
\begin{minipage}{0.48\textwidth}
  \centering
  \includegraphics[width=\linewidth]{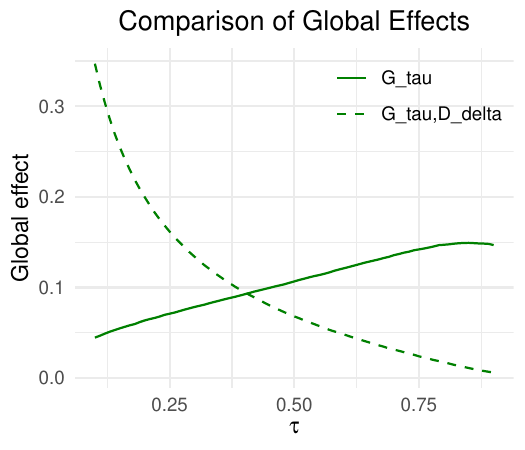}
  \caption{\textit{$G_{\tau}$ vs. $G_{\tau, D_\delta}$}.}
  \label{fig:gtau}
\end{minipage}
\hfill
\begin{minipage}{0.48\textwidth}
  \centering
  \includegraphics[width=\linewidth]{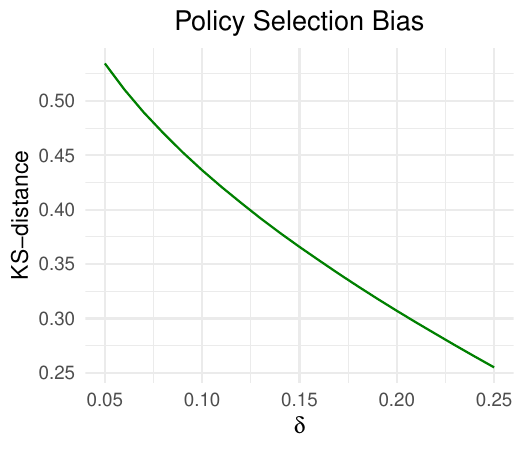}
  \caption{\textit{Policy selection bias as a function of $\delta$.}}
  \label{fig:sup}
\end{minipage}
\end{figure}

\subsection{Understanding Lemma \ref{qbf_global_lemma}}
In this section we examine parts $(ii), (iii)$ and $(iv)$ of Lemma \ref{qbf_global_lemma}. To understand the QBF better, we use the setting
of Appendix \ref{app_pol_bias}. We focus on conclusions $G_{\tau,D_\delta}\geq g$. Here, $\delta=0.1$ and $\sup_{y\in\mathbb R}\left |F_{Y(1)|D=0,D_\delta=1}(y)-F_{Y|D=1}(y)\right |\approx 0.44$.\footnote{This can be seen in Figure \ref{fig:sup}.}

Part $(ii)$ of Lemma \ref{qbf_global_lemma} states that if $c_{\tau,L}<0$, then the conclusion does not hold for $c=0$, \emph{i.e.}, under point identification. We take $g=0.05$ and $g=0.1$. Figure \ref{fig:gtau}, shows $G_{\tau}$ for $c=0$, which is a global effect as if point identification would hold. That is, it computes $F_A^{-1}(\tau)- F_{Y}^{-1}(\tau)$. This \emph{apparent} global effect is, as shown in Figure \ref{fig:p_combined_2}, greater than $0.05$ for all $\tau\geq0.12$ and, as shown in Figure \ref{fig:p_combined}, greater than $0.1$ for all $\tau\geq0.44$. It is at those quantile levels where $c_{\tau,L}=0$. In Figure \ref{fig:p_combined_2}, $c_{\tau,L}$ is negative for $\tau<0.12$, and positive for $\tau>0.12$. The same pattern can be observed in Figure \ref{fig:p_combined} around $\tau=0.44$.

\begin{figure}[ht]
\centering
\begin{minipage}{0.48\textwidth}
  \centering
  \includegraphics[width=\linewidth]{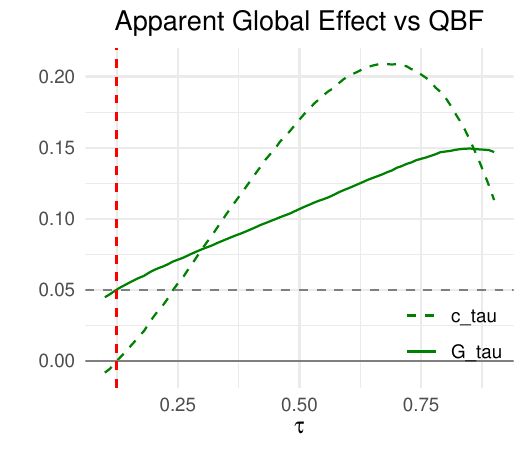}
  \caption{\textit{Part $(ii)$ of Lemma \ref{qbf_global_lemma}, $g=0.05$}.}
  \label{fig:p_combined_2}
\end{minipage}
\hfill
\begin{minipage}{0.48\textwidth}
  \centering
  \includegraphics[width=\linewidth]{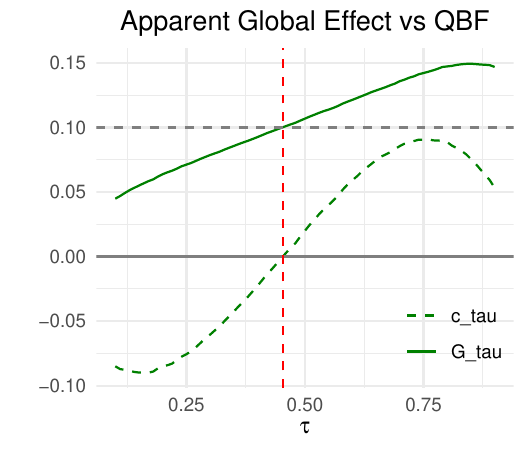}
  \caption{\textit{Part $(ii)$ of Lemma \ref{qbf_global_lemma}, $g=0.1$.}}
  \label{fig:p_combined}
\end{minipage}
\end{figure}

Part $(iii)$ of Lemma \ref{qbf_global_lemma} states that if $c_{\tau,L}\geq 0$, then as long as $c\leq  c_{\tau,L}$, the conclusion holds, and if $c> c_{\tau,L}$ the conclusion might not hold. Here we illustrate the second case. Take $g=0.05$. As shown in Figure \ref{fig:p_combined_2}, the set of quantiles over which we can perform a sensitivity analysis is $\tau>0.12$. For illustrative purposes, take the median, that is $\tau=0.5$. As seen in Figure \ref{fig:gtau}, the conclusion holds at the median, however, $c_{0.5,L}$ is smaller than $\sup_{y\in\mathbb R}\left |F_{Y(1)|D=0,D_\delta=1}(y)-F_{Y|D=1}(y)\right|$. Thus, the QBF analysis is conservative.

Part $(iv)$ of Lemma \ref{qbf_global_lemma} states that if $c_{\tau,L}\geq 0$ and $\tilde F_A^{-1}(\tau-\delta)-F_Y^{-1}(\tau)\geq g_{\tau,L} $, then the conclusion holds for any $c\in[0,1]$. The idea here is that the no-assumptions bound \'a la Manski are tighter than the ones based on Assumption \ref{assumption_ks_distance}. In this case, the lower bound on the global effect becomes $\max\left\{\tilde F_A^{-1}(\tau-\delta) ,F_A^{-1}(\tau-\delta c_{\tau,L})\right\} - F_{Y}^{-1}(\tau)=\tilde F_A^{-1}(\tau-\delta) - F_{Y}^{-1}(\tau)$ so that $c_{\tau,L}$ plays no role. This is further illustrated in Figure \ref{fig:four-markers}. Here, the QBF solves $_A^{-1}(\tau-\delta c_{\tau,L})=g_{\tau}+F_Y^{-1}(\tau)$. For some $c_1<c_{\tau,L}$, we have that $g_{\tau}+F_Y^{-1}(\tau)<F_A^{-1}(\tau-\delta c_1)$, and as $c$ decreases towards $0$, we get $\tilde F_A^{-1}(\tau-\delta)\leq F_A^{-1}(\tau)$. No matter what is the value of $c$, the conclusion holds since $g_{\tau}+F_Y^{-1}(\tau)<\tilde F_A^{-1}(\tau-\delta)$. Figure \ref{fig:four-markers_2} shows a case where for $c=c_2>c_{\tau,L}$, the conclusion does not hold.

\begin{figure}[ht]
    \centering
    \begin{tikzpicture}[xscale=1.2, yscale=1.2, every node/.style={font=\large}]
        \draw[thick, -{Stealth[length=4pt]}] (0,0) -- (14.5,0);

        \foreach \x in {2,6,9,12}{
            \draw[thick] (\x,0.1) -- (\x,-0.1);
        }

        \node[above=4pt] at (2,0) {$g_{\tau}+F_Y^{-1}(\tau)$};

        \node[below=6pt] at (2,0) {$F_A^{-1}(\tau-\delta c_{\tau,L})$};
        \node[below=6pt] at (6,0) {$F_A^{-1}(\tau-\delta c_1)$};
        \node[below=6pt] at (9,0) {$\tilde F_A^{-1}(\tau-\delta)$};
        \node[below=6pt] at (12,0) {$F_A^{-1}(\tau)$};
    \end{tikzpicture}
    \caption{Illustration of Part $(iv)$ of Lemma \ref{qbf_global_lemma}: $\tilde F_A^{-1}(\tau-\delta)-F_Y^{-1}(\tau)\geq g_{\tau,L} $.}
    \label{fig:four-markers}
\end{figure}

\begin{figure}[ht]
    \centering
    \begin{tikzpicture}[xscale=1.2, yscale=1.2, every node/.style={font=\large}]
        \draw[thick, -{Stealth[length=4pt]}] (0,0) -- (14.5,0);

        \foreach \x in {2,6,9,12}{
            \draw[thick] (\x,0.1) -- (\x,-0.1);
        }

        \node[above=4pt] at (9,0) {$g_{\tau}+F_Y^{-1}(\tau)$};

        \node[below=6pt] at (2,0) {$\tilde F_A^{-1}(\tau-\delta)$};
        \node[below=6pt] at (6,0) {$F_A^{-1}(\tau-\delta c_2)$};
        \node[below=6pt] at (9,0) {$F_A^{-1}(\tau-\delta c_{\tau,L})$};
        \node[below=6pt] at (12,0) {$F_A^{-1}(\tau)$};
    \end{tikzpicture}
    \caption{Illustration of Part $(iv)$ of Lemma \ref{qbf_global_lemma}: $\tilde F_A^{-1}(\tau-\delta)-F_Y^{-1}(\tau)< g_{\tau,L}.$ }
    \label{fig:four-markers_2}
\end{figure}

\subsection{Analysis of the Marginal Effect}\label{app_maginal_effect}
Before we attempt to bound the marginal effect, we will first investigate the issue of existence. To that end, we introduce the following assumption.
\begin{assumption}[Limiting Distributions]\label{assumption_limit_distribution}
 For a given sequence of policies $\mathcal D$, the following maps exist and are continuous on $y\in\mathbb R$:
 \begin{align*}
y\mapsto \frac{\partial F_{Y|D_\delta=0}(y)}{\partial \delta}\bigg|_{\delta=0}:= \lim_{\delta\to0} \frac{F_{Y|D_\delta=0}(y)-F_{Y|D=0}(y)}{\delta}
  \end{align*}
and
   \begin{align*}
y\mapsto   F_{Y(1)|\partial D}(y):=\lim_{\delta\to0}F_{Y(1)|D=0,D_\delta=1}(y).
  \end{align*}
  Moreover, convergence takes place uniformly:
  \begin{align*}
\lim_{\delta\to0}\sup_{y\in\mathbb R} \left|\frac{F_{Y|D_\delta=0}(y)-F_{Y|D=0}(y)}{\delta}-\frac{\partial F_{Y|D_\delta=0}(y)}{\partial \delta}\bigg|_{\delta=0} \right|=0,
\end{align*}
and
  \begin{align*}
\lim_{\delta\to0}\sup_{y\in\mathbb R} \left|F_{Y(1)|D=0,D_\delta=1}(y) - F_{Y(1)|\partial D}(y) \right| =0.
\end{align*}
  
  \end{assumption}

\begin{theorem}[Existence of Marginal Effect]\label{theorem_existence}
Let $\mathcal D$ be a sequence of policies such that Assumptions \ref{assumption_policy}, \ref{assumption_regularity}, and \ref{assumption_limit_distribution} hold. If $f_Y(F_Y^{-1}(\tau))>0$, then $M_{\tau,\mathcal D}$ exists and is given by
\begin{equation*}
M_{\tau,\mathcal D}=-\frac{\dot F_{Y,\mathcal D}(F_Y^{-1}(\tau))}{f_Y(F_Y^{-1}(\tau))}.
\end{equation*}
where $\dot F_{Y,\mathcal D}(y)$ satisfies
\begin{align*}
\lim_{\delta \downarrow 0}\sup_{y\in\mathbb R} \left|\frac{F_{Y_{D_\delta}}(y)-F_{Y}(y)}{\delta}-\dot F_{Y,\mathcal D}(y)\right|=0.
\end{align*}
\end{theorem}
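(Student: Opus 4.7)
The plan is to prove the theorem in two stages: first, expand the counterfactual distribution $F_{Y_{D_\delta}}$ to first order in $\delta$ in the sup norm; then, invert that expansion to pass from distributions to quantiles.

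For the expansion stage, I would combine the decomposition \eqref{count_dist_dec} with the identity $F_Y(y) = pF_{Y|D=1}(y) + (1-p)F_{Y|D=0}(y)$ to obtain, after rearranging,
\begin{align*}
\frac{F_{Y_{D_\delta}}(y) - F_Y(y)}{\delta} = (1-p)\cdot\frac{F_{Y|D_\delta=0}(y) - F_{Y|D=0}(y)}{\delta} + F_{Y(1)|D=0,D_\delta=1}(y) - F_{Y|D_\delta=0}(y).
\end{align*}
By Assumption \ref{assumption_limit_distribution}, the first ratio converges uniformly in $y$ to $\partial_\delta F_{Y|D_\delta=0}(y)|_{\delta=0}$, $F_{Y(1)|D=0,D_\delta=1}(y)$ converges uniformly to $F_{Y(1)|\partial D}(y)$, and (as a consequence) $F_{Y|D_\delta=0}(y) \to F_{Y|D=0}(y)$ uniformly. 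Together these yield
\begin{align*}
\sup_{y\in\mathbb R}\left|\frac{F_{Y_{D_\delta}}(y) - F_Y(y)}{\delta} - \dot F_{Y,\mathcal D}(y)\right| \longrightarrow 0, \quad \dot F_{Y,\mathcal D}(y) := (1-p)\frac{\partial F_{Y|D_\delta=0}(y)}{\partial \delta}\bigg|_{\delta=0} + F_{Y(1)|\partial D}(y) - F_{Y|D=0}(y),
\end{align*}
with $\dot F_{Y,\mathcal D}$ continuous as a sum of continuous functions.

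For the inversion stage, let $q_\delta := F_{Y_{D_\delta}}^{-1}(\tau)$ and $q_0 := F_Y^{-1}(\tau)$. Continuity of the CDFs (Assumption \ref{assumption_regularity}) gives $\tau = F_{Y_{D_\delta}}(q_\delta) = F_Y(q_0)$, so evaluating the stage-one expansion at $y = q_\delta$ delivers
\begin{align*}
F_Y(q_\delta) - F_Y(q_0) = -\delta\bigl[\dot F_{Y,\mathcal D}(q_\delta) + o(1)\bigr].
\end{align*}
Uniform convergence of $F_{Y_{D_\delta}}$ to $F_Y$ together with strict monotonicity of $F_Y$ at $q_0$ (implied by $f_Y(q_0) > 0$) first yields $q_\delta \to q_0$. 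A first-order Taylor expansion at $q_0$ then gives $F_Y(q_\delta) - F_Y(q_0) = f_Y(q_0)(q_\delta - q_0)(1 + o(1))$, and rearranging, together with continuity of $\dot F_{Y,\mathcal D}$, produces
\begin{align*}
\frac{q_\delta - q_0}{\delta} = -\frac{\dot F_{Y,\mathcal D}(q_\delta) + o(1)}{f_Y(q_0)(1 + o(1))} \longrightarrow -\frac{\dot F_{Y,\mathcal D}(q_0)}{f_Y(q_0)},
\end{align*}
which is $M_{\tau,\mathcal D}$.

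The main obstacle is a mild circularity in the Taylor step: the remainder is $o(q_\delta - q_0)$, so to be $o(\delta)$ after dividing by $\delta$ one needs $q_\delta - q_0 = O(\delta)$, which is essentially what we are trying to establish. This is resolved by a standard squeeze, using the intermediate displayed equality to produce the a-priori bound $|q_\delta - q_0| \le C\delta$ once $\delta$ is small enough that $q_\delta$ lies in a neighborhood of $q_0$ on which $f_Y$ is bounded away from zero, and only then passing to the limit. Equivalently, the inversion stage can be packaged as an application of Hadamard differentiability of the quantile functional tangentially to continuous functions (e.g., Lemma 21.3 in \cite{vandervaart2000}), which bundles precisely this argument.
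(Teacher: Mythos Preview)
Your proposal is correct and follows essentially the same route as the paper: the first-stage expansion of $(F_{Y_{D_\delta}}-F_Y)/\delta$ via the decomposition \eqref{count_dist_dec} and Assumption \ref{assumption_limit_distribution} is identical to the paper's, and your inversion step --- whether via the explicit Taylor/squeeze argument or via Lemma 21.3 in \cite{vandervaart2000} --- is exactly the Hadamard-derivative computation the paper invokes. The only cosmetic difference is that the paper justifies $F_{Y|D_\delta=0}\to F_{Y|D=0}$ uniformly by appeal to weak-convergence-to-a-continuous-limit, whereas you (more directly) note it follows from multiplying the uniformly convergent difference quotient by $\delta$.
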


The conditions and the proof of Theorem \ref{theorem_existence} come from viewing the marginal effect as a Hadamard derivative. Inspection of \eqref{count_dist_dec} shows that the assumptions of the theorem ensure that $\lim_{\delta\to 0}F_{Y_{D_\delta}}=F_Y$. That is, in the limit, the sequence of policies lead to the observed distribution of $Y$. To better understand the smoothness conditions imposed by Assumption \ref{assumption_limit_distribution} we provide an example and a counterexample.

\begin{example}\label{example_existence_of_marg}
Suppose that $D=\mathds 1\left\{ V\geq 0 \right\}$ for $V$ continuously distributed around 0. For a sequence $\delta\downarrow 0$, consider $D_\delta=\mathds 1\left\{ V\geq F_V^{-1}(F_V(0)-\delta) \right\}$. Then, for $p:=\Pr(D=1)$, we have that $\Pr(D_\delta=1)=p+\delta$.
Under mild conditions
\begin{align*}
\lim_{\delta\to 0}\frac{F_{Y|D_\delta=0}(y)-F_{Y|D=0}(y)}{\delta}=-\left(\frac{1}{f_V(0)}+\frac{1}{p}\right)F_{Y|D=0}(y)
\end{align*}
and 
\begin{align*}
\lim_{\delta\to0} F_{Y(1)|D=0,D_\delta=1}(y) = F_{Y(1)|V=0}(y)
\end{align*}
We can interpret $V=0$ as those individuals who are indifferent with respect to treatment. That is why we denote them with $\partial D$. See the appendix for detailed calculations. 

For a counterexample, consider instead $D_\delta=\mathds 1\left\{ V\leq F_V^{-1}(F_V(0)+\delta) \right\}$,  where the inequality sign is reversed. Here, $\Pr(D_\delta=1)=p+\delta$ provided $V$ is symmetric around 0: $F_V(0)=1-F_V(0)$. However,
\begin{align*}
F_{Y|D_\delta=0}(y)= \Pr(Y\leq y| V\geq F_V^{-1}(F_V(0)+\delta)) \to \Pr(Y\leq y| V\geq 0)  = F_{Y|D=1}(y).
\end{align*}
which contradicts the existence of the derivative $\frac{\partial F_{Y|D_\delta=0}(y)}{\partial \delta}\bigg|_{\delta=0}$ required in Assumption \ref{assumption_limit_distribution}.
\end{example}

To provide sharp bounds on the marginal effect we need the following assumption.
\begin{assumption}[More Limiting Distributions]\label{assumption_indifference}
 For a given sequence of policies $\mathcal D$, 
 \begin{enumerate}
 \item $\delta\mapsto F_{Y|D_{\delta}}(y)$ is differentiable at $\delta=0$ for every $y\in\mathbb R$.
 \item $\delta \mapsto F_{X| D=0, D_{\delta}=1}(x)$ is differentiable at $\delta=0$ for every $x\in\mathcal X$. Moreover, we denote $ F_{X|\partial D}(x):=     \lim_{\delta\downarrow0} F_{X| D=0, D_{\delta}=1}(x).$
\end{enumerate}
  \end{assumption}
Here, $F_{X|\partial D}(x)$ can be seen as the observable characteristics of the those individuals who are ``indifferent'' to treatment. Indeed, for the case of Example \ref{example_existence_of_marg}, we have $F_{X|\partial D}(x)=F_{X|V}(x|0)$.

\begin{theorem}[Bounds on the Marginal Effect]\label{thm_bounds_marginal}
If Assumptions \ref{assumption_policy}, \ref{assumption_regularity}, \ref{assumption_limit_distribution}, and \ref{assumption_indifference} hold, Assumption \ref{assumption_ks_distance} holds for all $\delta$ in a neighborhood of 0, and $F_{X|\partial D}$ is known, then, for $\tau\in(0,1)$, 
\begin{align*}
\theta_{L,M} \leq M_{\tau, \mathcal D} \leq \theta_{U,M}
\end{align*}
where
\begin{align*}
\theta_{L,M}:=\max\left\{\frac{F_{Y|D=0}(F^{-1}_{Y}(\tau))-1}{f_{Y}( F^{-1}_{Y}(\tau))} ,\frac{F_{Y|D=0}(F^{-1}_{Y}(\tau))-\int_{\mathcal X}F_{Y|D=1,X=x}(F^{-1}_{Y}(\tau))dF_{X|\partial D}(x)-c}{f_{Y}( F^{-1}_{Y}(\tau))}  \right\},
\end{align*}
and
\begin{align*}
\theta_{U,M}:=\min\left\{\frac{F_{Y|D=0}(F^{-1}_{Y}(\tau))}{f_{Y}( F^{-1}_{Y}(\tau))},\frac{F_{Y|D=0}(F^{-1}_{Y}(\tau))-\int_{\mathcal X}F_{Y|D=1,X=x}(F^{-1}_{Y}(\tau))dF_{X|\partial D}(x)+c}{f_{Y}( F^{-1}_{Y}(\tau))}\right\}.
\end{align*}
For a fixed $\tau\in(0,1)$, these bounds are sharp.
\end{theorem}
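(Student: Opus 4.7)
The plan is to use Theorem \ref{theorem_existence} to reduce the problem to sharply bounding the pathwise derivative $\dot F_{Y,\mathcal D}(y)$ of the counterfactual distribution at $\delta=0$, evaluated at $y=F_Y^{-1}(\tau)$, and then to obtain an envelope on $\dot F_{Y,\mathcal D}(y)$ by intersecting the trivial range constraint with the limiting Kolmogorov--Smirnov constraint on the boundary newly-treated distribution $F_{Y(1)|\partial D}$.

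First I would derive a workable expression for $\dot F_{Y,\mathcal D}$. Starting from \eqref{count_dist_dec} and the identity $F_Y(y)=pF_{Y|D=1}(y)+(1-p)F_{Y|D=0}(y)$, dividing by $\delta$ gives
\begin{align*}
\frac{F_{Y_{D_\delta}}(y)-F_Y(y)}{\delta}=\frac{(1-p-\delta)F_{Y|D_\delta=0}(y)-(1-p)F_{Y|D=0}(y)}{\delta}+F_{Y(1)|D=0,D_\delta=1}(y).
\end{align*}
The law of total probability applied to $D=0$ yields the identity $(1-p)F_{Y|D=0}(y)=(1-p-\delta)F_{Y|D_\delta=0}(y)+\delta F_{Y|D=0,D_\delta=1}(y)$; differentiating at $\delta=0$ under Assumption \ref{assumption_indifference} gives $(1-p)\partial_\delta F_{Y|D_\delta=0}(y)|_{\delta=0}=F_{Y|D=0}(y)-F_{Y(0)|\partial D}(y)$. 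Substituting this, using the uniform convergence $F_{Y(1)|D=0,D_\delta=1}\to F_{Y(1)|\partial D}$ from Assumption \ref{assumption_limit_distribution}, and invoking the policy conditions under which $F_{Y(0)|\partial D}(y)=F_{Y|D=0}(y)$ (as in the randomized policy of Example \ref{example_rand_policy}) produces $\dot F_{Y,\mathcal D}(y)=F_{Y(1)|\partial D}(y)-F_{Y|D=0}(y)$, so that $M_{\tau,\mathcal D}=[F_{Y|D=0}(y)-F_{Y(1)|\partial D}(y)]/f_Y(y)$ at $y=F_Y^{-1}(\tau)$.

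Next I would take $\delta\downarrow 0$ in Assumption \ref{assumption_ks_distance}. Because $F_{X|\partial D}$ is known and the relevant convergences in Assumption \ref{assumption_limit_distribution} are uniform in $y$, the limit yields
\begin{align*}
\sup_{y\in\mathbb R}\left|F_{Y(1)|\partial D}(y)-\int_{\mathcal X}F_{Y|D=1,X=x}(y)\,dF_{X|\partial D}(x)\right|\leq c.
\end{align*}
Intersecting this with $0\leq F_{Y(1)|\partial D}(y)\leq 1$ gives $\max\{0,\bar F(y)-c\}\leq F_{Y(1)|\partial D}(y)\leq \min\{1,\bar F(y)+c\}$, where $\bar F(y):=\int_{\mathcal X}F_{Y|D=1,X=x}(y)dF_{X|\partial D}(x)$. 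Substituting the envelopes into $[F_{Y|D=0}(y)-F_{Y(1)|\partial D}(y)]/f_Y(y)$ at $y=F_Y^{-1}(\tau)$ delivers exactly $\theta_{L,M}$ and $\theta_{U,M}$ as stated, with the $\max$ and $\min$ reflecting which of the two constraints is binding.

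For sharpness, I would construct, for each of the four envelope-defining values at $y^\ast=F_Y^{-1}(\tau)$, a valid CDF $F_{Y(1)|\partial D}$ attaining it while respecting both the KS constraint and the $[0,1]$ range, analogous to the sharpness argument for Theorem \ref{thm_bounds_global} (for example, by shifting mass across $y^\ast$ in $\bar F$ by amount $\pm c$ to saturate the KS constraint). The main obstacle I anticipate is showing that such a candidate $F_{Y(1)|\partial D}$ actually arises as $\lim_{\delta\downarrow 0}F_{Y(1)|D=0,D_\delta=1}$ along some admissible sequence $\mathcal D$ satisfying Assumptions \ref{assumption_policy}--\ref{assumption_indifference}, since sharpness is a statement about policy sequences rather than about candidate marginal CDFs; a secondary hurdle is the careful reconciliation of the $(1-p)\partial_\delta F_{Y|D_\delta=0}|_{\delta=0}$ term with $-F_{Y|D=0}$ in the derivation of $\dot F_{Y,\mathcal D}$ under the policy conditions assumed in the theorem.
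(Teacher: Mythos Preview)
Your approach differs substantially from the paper's. The paper does not go through Theorem \ref{theorem_existence} and bound $\dot F_{Y,\mathcal D}$ directly; instead it takes the limit as $\delta\downarrow 0$ of the rescaled global-effect bounds from Theorem \ref{thm_bounds_global}. Concretely, it writes the lower and upper global bounds as $\max/\min$ of four quantile maps in $\delta$, splits the three roles of $\delta$ into $(\delta_1,\delta_2,\delta_3)$, and shows each of $\delta\mapsto \tilde F_A^{-1}(\tau-\delta)$, $\delta\mapsto F_A^{-1}(\tau\pm\delta c)$, $\delta\mapsto \tilde F_A^{-1}(\tau)$ is differentiable at $0$ via Hadamard/implicit-function arguments; passing the limit through the $\max/\min$ then yields $\theta_{L,M}$ and $\theta_{U,M}$. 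For sharpness the paper also argues differently from what you propose: for any target $\theta\in[\theta_{L,M},\theta_{U,M}]$ it picks, for each $\delta$, a newly-treated CDF $F_\delta^\ast$ that attains a global effect within the (sharp) global bounds and whose rescaled value converges to $\theta$; the sequence $(F_\delta^\ast)$ itself is the construction, so there is no need to first build a limiting $F_{Y(1)|\partial D}$ and then exhibit an admissible $\mathcal D$ with that limit.

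Your route has a genuine gap at exactly the point you flagged as the ``secondary hurdle.'' The step where you invoke ``policy conditions under which $F_{Y(0)|\partial D}(y)=F_{Y|D=0}(y)$ (as in the randomized policy of Example \ref{example_rand_policy})'' is not licensed by the theorem's hypotheses; for the non-randomized policy of Example \ref{example_non_rand_policy} this equality typically fails. Your own identity gives $(1-p)\,\partial_\delta F_{Y|D_\delta=0}\big|_{\delta=0}=F_{Y|D=0}-F_{Y(0)|\partial D}$, so without that assumption your derivation yields $\dot F_{Y,\mathcal D}=F_{Y(1)|\partial D}-F_{Y(0)|\partial D}$, and the resulting bounds pick up an extra $F_{Y|D=0}-F_{Y(0)|\partial D}$ term that does not appear in the statement. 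The paper's limiting-bounds route avoids having to isolate $\dot F_{Y,\mathcal D}$ at all: the bounds are expressed from the outset in terms of the identified objects $\tilde F_A$ and $F_A$, and the paper argues that the partial with respect to the ``policy-indexed'' argument $\delta_2$ vanishes at $\delta_1=0$, so no residual $F_{Y(0)|\partial D}$ term ever enters. If you want to salvage your direct approach, you would need to show---under Assumptions \ref{assumption_policy}--\ref{assumption_indifference} alone---that $(1-p)\,\partial_\delta F_{Y|D_\delta=0}\big|_{\delta=0}=0$, rather than importing it as an extra policy condition.
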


\begin{remark}
The bounds for $M_{\tau, \mathcal D}$ require knowledge of $F_{X|\partial D}(x)$, \emph{i.e.}, the covariates for the individuals along the margin of indifference. This depends crucially on the type of counterfactual policy considered. Different policies may result in different individuals along the margin of indifference. One option is to set $F_{X|\partial D}(x)=F_{X| D=1}(x)$: marginal individuals have the (observable) characteristics of those with $D=1$. In this case, covariates no longer play a role in the bounds because
\begin{align*}
\int_{\mathcal X}F_{Y|D=1,X=x}(F^{-1}_{Y}(\tau))dF_{X|\partial D}(x)=\int_{\mathcal X}F_{Y|D=1,X=x}(F^{-1}_{Y}(\tau))dF_{X| D=1}(x)=\tau.
\end{align*}
A more general approach is to consider, for some $\alpha(x)\in[0,1]$
\begin{align}\label{convex_covariates}
F_{X|\partial D}(x)=\alpha(x) F_{X| D=1}(x) + (1-\alpha(x))F_{X| D=0}(x).
\end{align}
It must be kept in mind that any choice of $F_{X|\partial D}(x)$ entails a risk of misspecification. This makes the problem of bounding the marginal effect much harder.
\end{remark}

\begin{remark}
If we set $c=0$, and if Assumption \ref{assumption_ks_distance} holds for all $\delta$ in a neighborhood of 0, upon taking the limit in \eqref{eq:ks_distance} we obtain
\begin{align*}
F_{Y(1)|\partial D}(y)=\int_{\mathcal X}F_{Y|D=1,X=x}(y)dF_{X|\partial D}(x)
\end{align*}
for all $y\in\mathbb R$. This opens a number of ways to point identify the marginal effect. For example, we could require either $F_{Y(1)|\partial D}(y)=F_{Y| D=1}(y)$ or $F_{X|\partial D}(x)=F_{X| D=1}(x)$. In either case, we obtain
\begin{align*}
M_{\tau, \mathcal D} = \frac{F_{Y|D=0}(F^{-1}_{Y}(\tau))-F_{Y|D=1}(F^{-1}_{Y}(\tau))}{f_{Y}( F^{-1}_{Y}(\tau))},
\end{align*}
which is precisely the estimand of \cite{Firpo2009}. We could use \eqref{convex_covariates} for a given $\alpha(x)$ to obtain yet a different value for the marginal effect.
\end{remark}

\begin{remark}
The bounds are sharp in the sense that for any possible value of the marginal effect within the bounds, there is a corresponding sequence of newly treated distribution $F_{Y(1)|D=0,D_\delta=1}(y)$ which delivers that same marginal effect. This construction depends on $\tau$, implying that, rather than uniform, sharpness is pointwise in $\tau\in(0,1)$.
\end{remark}

For the marginal effect we can construct the quantile breakdown frontier in a similar fashion as with the global effect. However, in order to avoid the presence of the density $f_Y$ in the denominator,\footnote{As explained appendix, section \ref{app_frontier_der}, avoiding $f_Y$ allows us to retain a $\sqrt n$-consistent estimator.} we focus on conclusions $M_{\tau,\mathcal D}\leq 0$.\footnote{The case $M_{\tau,\mathcal D}\geq 0$ can be dealt with in the same way after some minor modifications. It is therefore omitted.} Thus, instead of a collection of conclusions that may vary with $\tau$, here we have a common conclusion across $\tau$. Since the marginal effect is a derivative, this amounts to looking at conclusions of whether the global effect in a neighborhood of $\delta=0$ is increasing or decreasing as $\delta$ increases. The quantile breakdown frontier for the marginal effect is the map
\begin{align}\label{qbf_def_marg}
\tau\mapsto c_{\tau,U}^M= \int_{\mathcal X}F_{Y|D=1,X=x}(F^{-1}_{Y}(\tau))dF_{X|\partial D}(x)-F_{Y|D=0}(F^{-1}_{Y}(\tau)),
\end{align}
for $\tau\in (0,1)$. The next lemma is analogous to Lemma \ref{qbf_global_lemma}.

\begin{lemma}\label{qbf_marginal_lemma}
Under the assumptions of Theorem \ref{thm_bounds_marginal}, the quantile breakdown frontier for $M_{\tau,\mathcal D}$ defined as the map in \eqref{qbf_def_marg} satisfies the following: (i) $\tau\mapsto c_{\tau,U}^M$ is continuous; (ii) if $c_{\tau,U}^M<0$, then the conclusion does not hold when $c=0$; and (iii) if $c_{\tau,U}^M\geq 0$, then for any $c\geq 0$ such that $c\leq  c_{\tau,U}^M$, the conclusion holds, and for $c> c_{\tau,U}^M$ the conclusion might not hold.
\end{lemma}

\begin{remark}
While the frontier for the global effect is valid for $\tau\in (\delta, 1-\delta)$, for the marginal effect it is valid for $\tau\in(0,1)$, provided that $Y$ has a bounded support. This difference arises because in the marginal effect we are taking the limit as $\delta\to 0$.
\end{remark}

Using sample analogs we obtain the following estimator of \eqref{qbf_def_marg}:
\begin{align}\label{qbf_estimation_marg}
\hat c_{\tau,U}^M = \sum_ {x\in\mathcal X}\hat F_{Y|D=1,X=x}(\hat F^{-1}_{Y}(\tau))\hat p_{x|\partial D}-\hat F_{Y|D=0}(\hat F^{-1}_{Y}(\tau))
\end{align}
Here, $\hat p_{x|\partial D}$ is the empirical counterpart of $\Pr(X=x|\partial D)$, which, as mentioned above, is the distribution of covariates for individuals at the margin of indifference. By Assumption \ref{assumption_indifference}, this is $ F_{X|\partial D}(x):=     \lim_{\delta\downarrow0} F_{X| D=0, D_{\delta}=1}(x).$ Following \eqref{convex_covariates}, for a \emph{user-supplied} $\alpha(x)\equiv \alpha$,\footnote{Note that $\alpha(x)$ is fixed across $x$ at $\alpha$ for simplicity.} we take $F_{X|\partial D}(x)=\alpha F_{X| D=1}(x) + (1-\alpha)F_{X| D=0}(x).$ Thus, 
\begin{align*}
\hat p_{x|\partial D} =\alpha \hat F_{X| D=1}(x) + (1-\alpha)\hat F_{X| D=0}(x).
\end{align*}
Details are given in Appendix \ref{appendix_notation}. The main assumption is 
\begin{assumption}[Functional CLT]\label{brownian_bridge_marginal}
The following multivariate functional central limit theorem holds 
\begin{align*} 
\sqrt n\begin{pmatrix}
 \sum_ {x\in\mathcal X}\hat F_{Y|D=1,X=x} \hat p_{x|\partial D} - E[F_{Y|D=1,X}|\partial D]\\
\hat F_{Y|D=0}-F_{Y|D=0} \\
\end{pmatrix} \rightsquigarrow \begin{pmatrix}
\mathbb G_{\partial D}\\ \mathbb G_{0}
\end{pmatrix},
\end{align*}
where $ \mathbb G_{\partial D}$ and $\mathbb G_{0}$ are Brownian bridges in $\ell^{\infty}(\mathcal Y)$.
\end{assumption}
Here, $E[F_{Y|D=1,X}|\partial D]$ implicitly denotes the map $y\mapsto E[F_{Y|D=1,X}(y)|\partial D]$. That is, the expectation is taken with respect to $X$ using $ F_{X|\partial D}(x)$. By Assumption \ref{dgp}, this expectation is just a 
finite convex combination of the distribution functions $F_{Y|D=1,X=x}(y)$ 

\begin{remark}
Appendix \ref{app_primitive_marginal} lists primitive conditions that result in Assumption \ref{brownian_bridge_marginal}. 
\end{remark}

The next assumption is needed to establish the Hadamard differentiability of the composition map, and the quantile process.

\begin{assumption}[Conditions for Hadamard Differentiability]\label{ass_had_dif_2}
\item
\begin{enumerate}
\item The distribution functions $ F_{Y|D=0}(y)$ and $ E[F_{Y|D=1,X}(y)|\partial D]$ are differentiable, with uniformly continuous and bounded derivatives on their support $\mathcal Y$. The derivatives are denoted by  $f_{Y|D=0}(y)$ and $E[f_{Y|D=1,X}(y)|\partial D]$ respectively.
\item The support $\mathcal Y$ is the compact set $[y_l,y_u]$.
\item $F_Y(y)$ is continuously differentiable on $\mathcal Y$ with strictly positive derivative $f_Y$.
\end{enumerate}
\end{assumption}

The next theorem establishes the asymptotic distribution of $\hat c_{\tau,U}^M$ as a process in $\ell^{\infty}(0,1)$. In such a case, we denote the process simply by $\hat c_{U}^M$.

\begin{theorem}[Asymptotic Distribution of QBF for Marginal Effect]\label{qbf_marginal}
Under Assumptions \ref{dgp}, \ref{brownian_bridge_marginal} and \ref{ass_had_dif_2}
\begin{align*}
\sqrt n(\hat c_{U}^M- c_{U}^M)&\rightsquigarrow   \mathbb G_{\partial D,Y}- \mathbb G_{0,Y},
\end{align*}
where 
\begin{align*}
\mathbb G_{\partial D,Y}&=\mathbb G_{\partial D}\circ F_Y^{-1}-E[f_{Y|D=1,X}|\partial D]\circ F_Y^{-1} \cdot \frac{\mathbb G_Y\circ F_Y^{-1}}{f_Y\circ F_Y^{-1}},\\
\mathbb G_{0,Y}&=\mathbb G_0\circ F_Y^{-1}-f_{Y|D=0}\circ F_Y^{-1} \cdot \frac{\mathbb G_Y\circ F_Y^{-1}}{f_Y\circ F_Y^{-1}}.
\end{align*}
\end{theorem}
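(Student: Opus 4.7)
The plan is to recognize $\hat c_{\tau, U}^M$ as a smooth functional of four empirical objects and apply the functional delta method. Specifically, set $H_1(y) := E[F_{Y|D=1,X}(y) \mid \partial D] = \sum_{x\in\mathcal X} F_{Y|D=1,X=x}(y) \Pr(X=x \mid \partial D)$ and $\hat H_1(y)$ its sample analog from Assumption~\ref{brownian_bridge_marginal}. Then
\[
c_{\tau,U}^M = H_1 \circ F_Y^{-1}(\tau) - F_{Y|D=0} \circ F_Y^{-1}(\tau), \qquad \hat c_{\tau,U}^M = \hat H_1 \circ \hat F_Y^{-1}(\tau) - \hat F_{Y|D=0} \circ \hat F_Y^{-1}(\tau),
\]
so it suffices to establish the joint weak convergence of $(\hat H_1, \hat F_{Y|D=0}, \hat F_Y^{-1})$ and then take a continuous-linear combination.

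First, I would augment the hypothesis of Assumption~\ref{brownian_bridge_marginal} by adding $\sqrt n(\hat F_Y - F_Y) \rightsquigarrow \mathbb G_Y$ jointly with the other two processes; since $F_Y = p F_{Y|D=1} + (1-p) F_{Y|D=0}$ and the pieces making up $\hat H_1$ and $\hat F_{Y|D=0}$ are standard empirical distribution functions over an i.i.d. sample (Assumption~\ref{dgp}), joint Donsker convergence in $\ell^{\infty}(\mathcal Y)$ is automatic. Next, under Assumption~\ref{ass_had_dif_2}.3, the quantile map $F \mapsto F^{-1}$ is Hadamard differentiable at $F_Y$ tangentially to $C(\mathcal Y)$, with derivative $h \mapsto -(h \circ F_Y^{-1})/(f_Y \circ F_Y^{-1})$ (see Lemma 3.9.23 in \cite{vandervaart1996}). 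Because $\mathcal Y = [y_l, y_u]$ is compact and $f_Y$ is strictly positive and continuous, this delivers $\sqrt n(\hat F_Y^{-1} - F_Y^{-1}) \rightsquigarrow -(\mathbb G_Y \circ F_Y^{-1})/(f_Y \circ F_Y^{-1})$ in $\ell^{\infty}(0,1)$ jointly with the preceding processes.

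The second step is the key technical ingredient: the Hadamard differentiability of the composition map $(G, q) \mapsto G \circ q$ at $(H_1, F_Y^{-1})$ and at $(F_{Y|D=0}, F_Y^{-1})$. By the composition lemma (Lemma 3.9.27 in \cite{vandervaart1996}), this holds provided the outer CDFs have uniformly continuous and bounded derivatives on $\mathcal Y$ (the hypothesis of Assumption~\ref{ass_had_dif_2}.1), with derivative $(h_1, h_2) \mapsto h_1 \circ q + (G' \circ q)\, h_2$. Applied to $H_1$ and $F_{Y|D=0}$, whose derivatives are $E[f_{Y|D=1,X} \mid \partial D]$ and $f_{Y|D=0}$ respectively, substituting in the weak limit of $\hat F_Y^{-1}$ from the previous step yields exactly
\[
\sqrt n (\hat H_1 \circ \hat F_Y^{-1} - H_1 \circ F_Y^{-1}) \rightsquigarrow \mathbb G_{\partial D, Y}, \quad \sqrt n (\hat F_{Y|D=0} \circ \hat F_Y^{-1} - F_{Y|D=0} \circ F_Y^{-1}) \rightsquigarrow \mathbb G_{0, Y},
\]
in $\ell^{\infty}(0,1)$, with the limits matching those stated in the theorem.

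Finally, because both limits are obtained as Hadamard derivatives applied to the same jointly weakly converging base process, the joint convergence of the pair $(\mathbb G_{\partial D, Y}, \mathbb G_{0, Y})$ is automatic and the continuous mapping theorem applied to the difference yields $\sqrt n (\hat c_U^M - c_U^M) \rightsquigarrow \mathbb G_{\partial D, Y} - \mathbb G_{0, Y}$. The main obstacle I would expect is verifying the exact hypotheses of the composition Hadamard lemma, in particular the uniform boundedness and continuity of $f_{Y|D=0}$ and $E[f_{Y|D=1,X} \mid \partial D]$; the compactness of $\mathcal Y$ in Assumption~\ref{ass_had_dif_2}.2 is precisely what makes this clean, since otherwise the composition derivative may only hold on a shrinking subinterval of $(0,1)$, whereas here the limiting process is tight in $\ell^{\infty}(0,1)$.
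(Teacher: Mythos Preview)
Your proposal is correct and follows essentially the same route as the paper: both treat $c_{\tau,U}^M$ as a difference of two compositions $G\circ F_Y^{-1}$ (with $G=E[F_{Y|D=1,X}\mid\partial D]$ and $G=F_{Y|D=0}$), invoke Hadamard differentiability of the quantile map and then of the composition map (the paper cites Lemma 21.4(ii) in \cite{vandervaart2000} and Lemmas 3.9.25, 3.9.27 in \cite{vandervaart1996}), chain them, apply the functional delta method, and finish with the continuous mapping theorem for the difference. Your remark that Assumption~\ref{brownian_bridge_marginal} must be augmented with the joint convergence of $\sqrt n(\hat F_Y - F_Y)$ is well taken; the paper uses $\mathbb G_Y$ in the limit without listing it explicitly in that assumption.
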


As in the case of the quantile breakdown frontier for the global effect, the empirical bootstrap is valid and can be used to construct confidence intervals.

Using the same dataset of the empirical application in Section \ref{section_empirical}, we compute the QBF for the marginal effect. We focus on the conclusion $M_{\tau,\mathcal D}\leq 0$, and $\tau$ ranges from $0.05$ to $0.95$. For the estimation of $\hat p_{x|\partial D}$ we set $\alpha=0.5$. The frontier is shown in Figure \ref{empirical_qbf_marg}. The frontier is negative most quantiles, except for the higher ones. According to Lemma \ref{qbf_marginal_lemma}, this means that for lower quantiles the conclusion does not hold under point identification. For upper quantiles it does hold, but the frontier shows that a rather small amount of policy selection bias can overturn the conclusion. 

\begin{figure}
\centering
\includegraphics[scale=0.18]{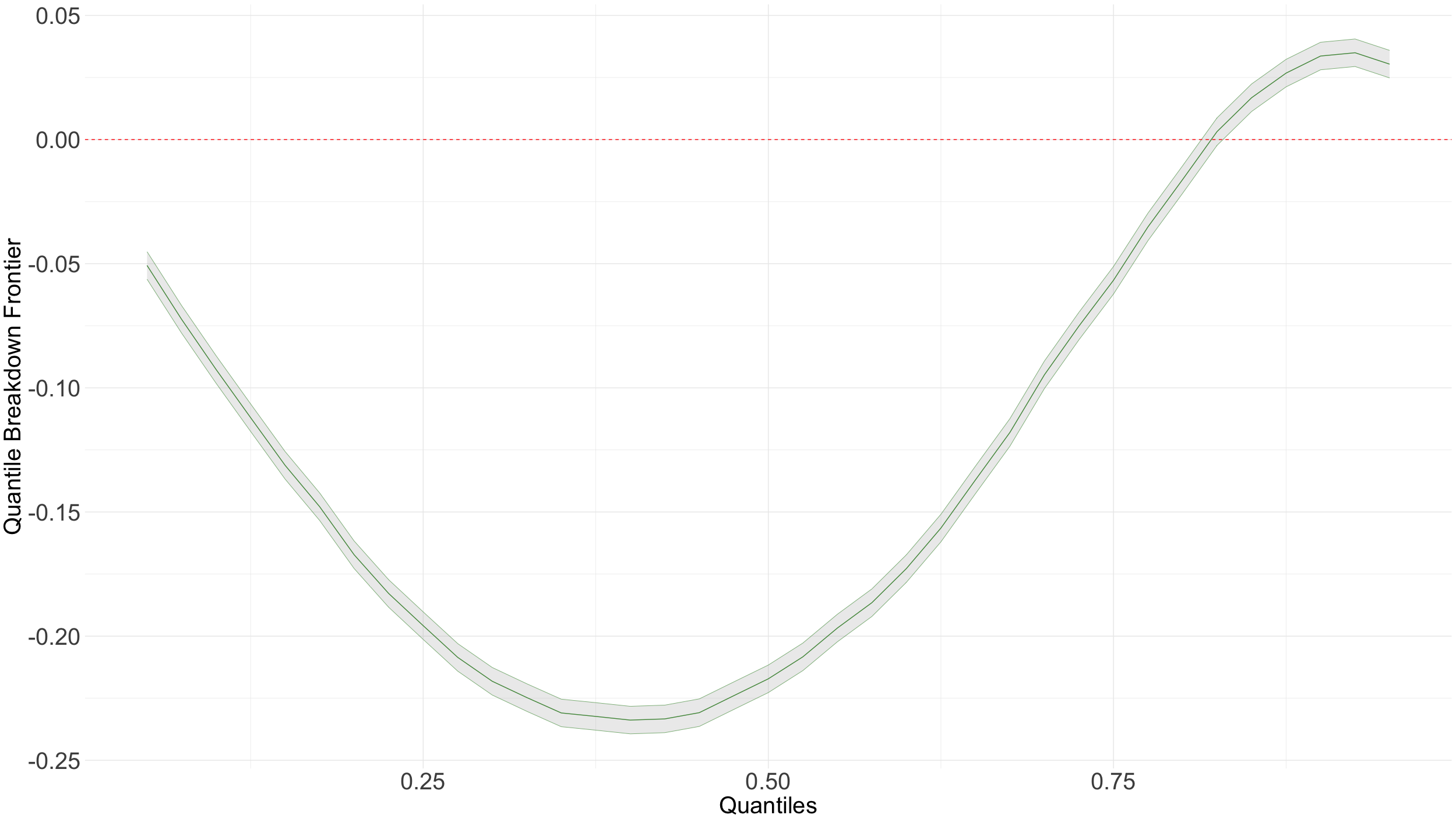}
\caption{\emph{Quantile Breakdown Frontier for $M_{\tau,\mathcal D}\leq 0$ with $95\%$ uniform confidence bands.}}
\label{empirical_qbf_marg}
\end{figure}

\subsection{Derivation of the Quantile Breakdown Frontiers}\label{app_frontier_der}
By Theorem \ref{thm_bounds_global}, we have that, for a given $c$, the bounds on the global effect are
\begin{align*}
\max\left\{\tilde F_A^{-1}(\tau-\delta) ,F_A^{-1}(\tau-\delta c)\right\} - F_{Y}^{-1}(\tau) \leq G_{\tau, D_\delta}\leq \min\left\{\tilde F_A^{-1}(\tau) ,F_A^{-1}(\tau+\delta c)\right\}-F_{Y}^{-1}(\tau).
\end{align*}
If we are interested in a target conclusion $g_{\tau,L}\leq G_{\tau,D_\delta}  $ for some $g_{\tau,L}$, then we need the lower bound to be greater or equal than $g_{\tau,L}$. This ensures that the conclusion will hold. Consider the set of all $c$ such that the conclusion holds. That is
\begin{align*}
\left\{c:\max\left\{\tilde F_A^{-1}(\tau-\delta) ,F_A^{-1}(\tau-\delta c)\right\} - F_{Y}^{-1}(\tau) \geq g_{\tau,L}\right\}.
\end{align*}
Now, the expression on the left hand side is decreasing in $c$ provided that $\delta$ is positive. Thus, to look for the smallest $c$, denoted $c_{\tau,L}$, we need to solve $F_A^{-1}(\tau-\delta c_{\tau,L}) - F_{Y}^{-1}(\tau) = g_{\tau,L}$,
which yields 
\begin{align*}
c_{\tau,L}= \frac{\tau -F_A\left ( F_{Y}^{-1}(\tau)+ g_{\tau,L} \right )}{\delta}.
\end{align*}
When we allow this to vary with $\tau$, we obtain \eqref{qbf_def}. The presence of the $\max$ adds some nuances to the interpretation. This is what Lemma \ref{qbf_global_lemma} aims to clarify.

In order to obtain the analogous result for the marginal effect, we use Theorem \ref{thm_bounds_marginal}. Suppose we are interested in the conclusion $M_{\tau,\mathcal D}\leq m_{\tau,U}$, for some given $m_{\tau,U}$. Following the same logic as with the global effect, we need that the upper bound for $M_{\tau,\mathcal D}$ be less than or equal to $m_{\tau,U}$. Thus, the collection of $c$ compatible with the conclusion if given by the set
\begin{align*}
\left\{c:\min\left\{\frac{F_{Y|D=0}(F^{-1}_{Y}(\tau))}{f_{Y}( F^{-1}_{Y}(\tau))},\frac{F_{Y|D=0}(F^{-1}_{Y}(\tau))-\int_{\mathcal X}F_{Y|D=1,X=x}(F^{-1}_{Y}(\tau))dF_{X|\partial D}(x)+c}{f_{Y}( F^{-1}_{Y}(\tau))}\right\}\leq m_{\tau,U}\right\}.
\end{align*}
The bounds are increasing in $c$, so we look for the maximum $c$, denoted $c_{\tau,U}^M$ by solving
\begin{align*}
\frac{F_{Y|D=0}(F^{-1}_{Y}(\tau))-\int_{\mathcal X}F_{Y|D=1,X=x}(F^{-1}_{Y}(\tau))dF_{X|\partial D}(x)+c_{\tau,U}^M}{f_{Y}( F^{-1}_{Y}(\tau))}= m_{\tau,U}.
\end{align*}
That is,
\begin{align*}
c_{\tau,U}^M= f_{Y}( F^{-1}_{Y}(\tau))m_{\tau,U}+\int_{\mathcal X}F_{Y|D=1,X=x}(F^{-1}_{Y}(\tau))dF_{X|\partial D}(x)-F_{Y|D=0}(F^{-1}_{Y}(\tau)).
\end{align*}
The above display is a general version of equation \eqref{qbf_def_marg}. If we are interested in $m_{\tau,U}=0$, then we obtain \eqref{qbf_def_marg}. In this case we do not have to estimate $ f_{Y}( F^{-1}_{Y}(\tau))$, thus allowing us to retain a $\sqrt n$-consistent estimator.

\subsection{Notation for Estimation and Inference}\label{appendix_notation}
 
For $F_A(y)$ given in Theorem \ref{thm_bounds_global}, $\hat F_A(y)$ is its empirical counterpart. It is given by
\begin{align}\label{emp_app_dis}
\hat F_A(y) &:= \hat p \hat F_{Y|D=1}(y)  + (1 - \hat p - \delta) \hat F_{Y|D_\delta=0}(y) + \delta \sum_{x\in\mathcal X} \hat F_{Y|D=1,X=x}(y)\hat p_x
\end{align}
where $p_x := \Pr(X=x|D=0, D_\delta=1)$ and 
\begin{align*}
\hat p_x &= \frac{\sum_{i=1}^n  \mathds 1 \left\{X_i= x \right\}(1-D_i)D_{i,\delta} }{\sum_{i=1}^n  (1-D_i)D_{i,\delta}},\\
\hat p &= \frac{1}{n}\sum_{i=1}^n D_i,\\
\hat F_{Y|D=1}(y) &= \frac{\sum_{i=1}^n \mathds 1 \left\{Y_i\leq y \right\}D_i}{\sum_{i=1}^n D_i},\\
\hat F_{Y|D_\delta=0}(y) &= \frac{\sum_{i=1}^n \mathds 1 \left\{Y_i\leq y \right\}(1-D_{i,\delta})}{\sum_{i=1}^n (1-D_{i,\delta})},\\
 \hat F_{Y|D=1,X=x}(y) &= \frac{\sum_{i=1}^n  \mathds 1 \left\{Y_i\leq y \right\}D_i  \mathds 1 \left\{X_i=x \right\} }{\sum_{i=1}^n  D_i  \mathds 1 \left\{X_i=x \right\}  }.
\end{align*}
Moreover, we have
\begin{align}\label{emp_incomplete_app_dis}
\hat {\tilde F}_A(y) &:= \hat p \hat F_{Y|D=1}(y)  + (1 - \hat p - \delta) \hat F_{Y|D_\delta=0}(y).
\end{align}

For the marginal effect, we have the following empirical cdf:
\begin{align*}
\hat F_{Y|D=0}(y) = \frac{\sum_{i=1}^n\mathds1\{Y_i\leq y\}(1-D_i)}{\sum_{i=1}^n(1-D_i)}.
\end{align*}
and
\begin{align*}
\sum_ {x\in\mathcal X}\hat F_{Y|D=1,X=x}(y)\hat p_{x|\partial D}= \sum_{x\in \mathcal X} \frac{\sum_{i=1}^n  \mathds 1 \left\{Y_i\leq y \right\}D_i  \mathds 1 \left\{X_i=x \right\} }{\sum_{i=1}^n  D_i  \mathds 1 \left\{X_i=x \right\}  }\hat p_{x|\partial D} .
\end{align*}

\subsection{Proofs}\label{appendix_proofs}

\begin{proof}[Proof of Theorem \ref{thm_bounds_global}]

Recall that, for a given $D_\delta$, the counterfactual distribution is
\begin{align*}
F_{Y_{D_\delta}}(y) &= pF_{Y|D=1}(y)  + (1-p-\delta) F_{Y|D_\delta=0}(y)  +\delta F_{Y(1)|D=0,D_\delta=1}(y).
\end{align*}
Since $F_{Y(1)|D=0,D_\delta=1}$ is a distribution function, we naturally have $0\leq F_{Y(1)|D=0,D_\delta=1}(y)\leq 1$ for every $y\in\mathbb R$. Moreover, by Assumption \ref{assumption_ks_distance}, we also have that
\begin{align*}
 \int_{\mathcal X}F_{Y|D=1,X=x}(y)dF_{X|D=0,D_\delta=1}(x)-c\leq F_{Y(1)|D=0,D_\delta=1}(y)\leq \int_{\mathcal X}F_{Y|D=1,X=x}(y)dF_{X|D=0,D_\delta=1}(x)+c
\end{align*}
$y\in\mathbb R$. Thus combining both bounds, we obtain
\begin{align*}
\max\left\{0, \int_{\mathcal X}F_{Y|D=1,X=x}(y)dF_{X|D=0,D_\delta=1}(x)-c\right\}&\leq F_{Y(1)|D=0,D_\delta=1}(y)\\
&\leq \min\left\{1,\int_{\mathcal X}F_{Y|D=1,X=x}(y)dF_{X|D=0,D_\delta=1}(x)+c\right\}.
\end{align*}
for every $y\in\mathbb R$. To help with the notation, define
\begin{align}\label{incomplete_apparent}
\tilde F_A(y)&:= p F_{Y|D=1}(y)  + (1-p-\delta) F_{Y|D_\delta=0}(y),
\end{align}
and
\begin{align}\label{complete_apparent}
F_A(y)&:=\tilde F_A(y)  +\delta  \int_{\mathcal X}F_{Y|D=1,X=x}(y)dF_{X|D=0,D_\delta=1}(x).
\end{align}
Note that, while $F_A(y)$ is a proper distribution function, $\tilde F_A(y)$ is not. We refer to $F_A$ as the apparent distribution, and to $\tilde F_A$ as the ``incomplete'' apparent distribution. By Assumption \ref{assumption_regularity}, both $F_A(y)$ and $\tilde F_A(y)$ are strictly increasing and continuous for all $y$ such that $0<F_A(y)<1$ and $0<\tilde F_A(y)<1$. The bounds on $F_{Y(1)|D=0,D_\delta=1}(y)$ allow us to bound $F_{Y_{D_\delta}}(y)$. The upper bound is
\begin{align*}
F_{Y_{D_\delta}}(y) &\leq  pF_{Y|D=1}(y)  + (1-p-\delta) F_{Y|D_\delta=0}(y)  \\
&+\delta \min\left\{1,\int_{\mathcal X}F_{Y|D=1,X=x}(y)dF_{X|D=0,D_\delta=1}(x)+c\right\}\\
&=\min\left\{\tilde F_A(y)+\delta ,F_A(y)+\delta c\right\}.
\end{align*}
Similarly, the lower bound is
\begin{align*}
F_{Y_{D_\delta}}(y) &\geq  pF_{Y|D=1}(y)  + (1-p-\delta) F_{Y|D_\delta=0}(y)  \\
&+\delta \max\left\{0, \int_{\mathcal X}F_{Y|D=1,X=x}(y)dF_{X|D=0,D_\delta=1}(x)-c\right\}\\
&=\max\left\{\tilde F_A(y) ,F_A(y)-\delta c\right\}.
\end{align*}
Let $F_{Y_{D_\delta}}^{-1}(\tau)$ be that $\tau$-quantile of $Y_{D_\delta}$ which is unique by Assumption \ref{assumption_regularity}. Then
\begin{align*}
\max\left\{\tilde F_A(F_{Y_{D_\delta}}^{-1}(\tau)) ,F_A(F_{Y_{D_\delta}}^{-1}(\tau))-\delta c\right\} \leq \tau \leq \min\left\{\tilde F_A(F_{Y_{D_\delta}}^{-1}(\tau))+\delta ,F_A(F_{Y_{D_\delta}}^{-1}(\tau))+\delta c\right\}.
\end{align*}
For two strictly increasing increasing continuous functions $f$ and $g$, we have that (this is best seen graphically):
\begin{align*}
\left[\min\left\{ f,g\right\}\right]^{-1}(\tau)=\max\left\{ f^{-1}(\tau),g^{-1}(\tau)\right\}.
\end{align*} 
and 
\begin{align*}
\left[\max\left\{ f,g\right\}\right]^{-1}(\tau)=\min\left\{ f^{-1}(\tau),g^{-1}(\tau)\right\}.
\end{align*} 
Moreover, for $F_A$, define the inverse to be $F_A^{-1}(t)$, where $F_A(F_A^{-1}(t))=t$ for $t\in(0,1)$, $F_A^{-1}(t)=-\infty$ if $t\leq 0$, and $F_A^{-1}(t)=\infty$ if $t\geq 1$. For the ``incomplete'' apparent distribution $\tilde F_A$, we have that $\tilde F_A:\mathbb R\mapsto (0,1-\delta)$. Thus, $\tilde F_A^{-1}(t)$ is defined as usual for $t\in (0,1-\delta)$. Whereas, $\tilde F_A^{-1}(t)=-\infty$ if $t\leq 0$, and $\tilde F_A^{-1}(t)=\infty$ if $t\geq 1-\delta$. Note that $\tilde F_A(y)\leq F_A(y).$

Thus, we can bound $F_{Y_{D_\delta}}^{-1}(\tau)$ by 
\begin{align*}
\theta_L:=\max\left\{\tilde F_A^{-1}(\tau-\delta) ,F_A^{-1}(\tau-\delta c)\right\} \leq F_{Y_{D_\delta}}^{-1}(\tau) \leq \min\left\{\tilde F_A^{-1}(\tau) ,F_A^{-1}(\tau+\delta c)\right\}=:\theta_H.
\end{align*}

First suppose that $\theta_L>-\infty$ and $\theta_H<\infty.$ To show that the bounds are sharp we follow the approach of the proof of Lemma 2.1 of \cite{Kline2013}. For any value that $F_{Y_{D_\delta}}^{-1}(\tau)$ may take in $ [\theta_L,\theta_H]$, we need to show that there exists a function $F^*_\delta:\mathbb R\mapsto (0,1)$ that satisfies the conditions imposed on $F_{Y(1)|D=0,D_\delta=1}$. Namely: 
\begin{enumerate}
\item\label{condition_1} $\tau = pF_{Y|D=1}(F_{Y_{D_\delta}}^{-1}(\tau))  + (1-p-\delta) F_{Y|D_\delta=0}(F_{Y_{D_\delta}}^{-1}(\tau))  +\delta F^*_\delta(F_{Y_{D_\delta}}^{-1}(\tau))$;
\item\label{condition_2} $\sup_{y\in\mathbb R}\left |F^*_\delta(y)-\int_{\mathcal X}F_{Y|D=1,X=x}(y)dF_{X|D=0,D_\delta=1}(x)\right |\leq c$;
\item\label{condition_3}  $F^*_\delta(y)$ is continuous and strictly increasing in $\mathbb R$ for all $y$ such that $0<F^*_\delta(y)<1.$
\end{enumerate}
By condition \ref{condition_1}., we can express $F^*_\delta(F_{Y_{D_\delta}}^{-1}(\tau))$ as
\begin{align*}
F^*_\delta(F_{Y_{D_\delta}}^{-1}(\tau)) = \frac{\tau - pF_{Y|D=1}(F_{Y_{D_\delta}}^{-1}(\tau))  - (1-p-\delta) F_{Y|D_\delta=0}(F_{Y_{D_\delta}}^{-1}(\tau))  }{\delta}
\end{align*}
To verify condition \ref{condition_2}., we start by checking that 
\begin{align*}
\left |F^*_\delta(F_{Y_{D_\delta}}^{-1}(\tau))-\int_{\mathcal X}F_{Y|D=1,X=x}(F_{Y_{D_\delta}}^{-1}(\tau))dF_{X|D=0,D_\delta=1}(x)\right |\leq c.
\end{align*}
We write
\begin{align*}
&F^*_\delta(F_{Y_{D_\delta}}^{-1}(\tau))-\int_{\mathcal X}F_{Y|D=1,X=x}(F_{Y_{D_\delta}}^{-1}(\tau))dF_{X|D=0,D_\delta=1}(x)\\ &=  \frac{\tau - pF_{Y|D=1}(F_{Y_{D_\delta}}^{-1}(\tau))  - (1-p-\delta) F_{Y|D_\delta=0}(F_{Y_{D_\delta}}^{-1}(\tau))}{\delta}  \\&-\int_{\mathcal X}F_{Y|D=1,X=x}(F_{Y_{D_\delta}}^{-1}(\tau))dF_{X|D=0,D_\delta=1}(x)\\
&=\frac{\tau - F_A(F_{Y_{D_\delta}}^{-1}(\tau))}{\delta}.
\end{align*}
Because $F_A$ is increasing, and $F_{Y_{D_\delta}}^{-1}(\tau) \in [\theta_L,\theta_H]$, we need to check all the possible limiting cases.
\begin{enumerate}[$(i)$]
\item $F_{Y_{D_\delta}}^{-1}(\tau)  =  F_A^{-1}(\tau-\delta c)$. Then,
\begin{align*}
\frac{\tau - F_A(  F_A^{-1}(\tau-\delta c))}{\delta} = c.
\end{align*}
\item $F_{Y_{D_\delta}}^{-1}(\tau) = \tilde F_A^{-1}(\tau-\delta)$. Then,
\begin{align*}
\frac{\tau - F_A( \tilde F_A^{-1}(\tau-\delta))}{\delta}\leq \frac{\tau - F_A(  F_A^{-1}(\tau-\delta c))}{\delta} = c,
\end{align*}
where the inequality follows from $\tilde F_A^{-1}(\tau-\delta)\geq F_A^{-1}(\tau-\delta c)$.
\item $F_{Y_{D_\delta}}^{-1}(\tau) = F_A^{-1}(\tau+\delta c)$. Then,
\begin{align*}
\frac{\tau - F_A(  F_A^{-1}(\tau+\delta c))}{\delta} =- c.
\end{align*}
\item $F_{Y_{D_\delta}}^{-1}(\tau) = \tilde F_A^{-1}(\tau)$. Then,
\begin{align*}
\frac{\tau - F_A( \tilde F_A^{-1}(\tau))}{\delta}\geq \frac{\tau - F_A(  F_A^{-1}(\tau+\delta c))}{\delta} =- c,
\end{align*}
where the inequality follows from $\tilde F_A^{-1}(\tau)\leq F_A^{-1}(\tau+\delta c)$.
\end{enumerate}
Thus, we have that $\left |F^*_\delta(F_{Y_{D_\delta}}^{-1}(\tau))-\int_{\mathcal X}F_{Y|D=1,X=x}(F_{Y_{D_\delta}}^{-1}(\tau))dF_{X|D=0,D_\delta=1}(x)\right |\leq c$ whenever $F_{Y_{D_\delta}}^{-1}(\tau)\in[\theta_L,\theta_H]$. 

Now we will construct $F^*_\delta(y)$ such that conditions \ref{condition_2}. and \ref{condition_3}. also hold. The starting point is the given value $F^*_\delta(F_{Y_{D_\delta}}^{-1}(\tau))$. To alleviate notation, define
\begin{align*}
\tilde F(y) := \int_{\mathcal X}F_{Y|D=1,X=x}(y)dF_{X|D=0,D_\delta=1}(x)
\end{align*}

Suppose first that 
\begin{align*}
F^*_\delta(F_{Y_{D_\delta}}^{-1}(\tau)) \leq \tilde F(F_{Y_{D_\delta}}^{-1}(\tau)).
\end{align*}
Then, we define $F^*_\delta(y)$ as
\begin{align*}
F^*_\delta(y)= 
     \begin{cases}
\max \left\{ 0, \tilde F(y) - \left [ \tilde F(F_{Y_{D_\delta}}^{-1}(\tau))- F^*_\delta(F_{Y_{D_\delta}}^{-1}(\tau))\right]  \right\} \text{ for }y< F_{Y_{D_\delta}}^{-1}(\tau)\\
 \tilde F(y) - \omega (y-F_{Y_{D_\delta}}^{-1}(\tau))\left [ \tilde F(F_{Y_{D_\delta}}^{-1}(\tau))- F^*_\delta(F_{Y_{D_\delta}}^{-1}(\tau))\right]   \text{ for }y\geq F_{Y_{D_\delta}}^{-1}(\tau)\\
     \end{cases}
\end{align*}
where $\omega(y)$ is a continuous function that satisfies: $\omega(0)=1$, $\lim_{y\to\pm\infty} \omega(y)=0$, and it is strictly decreasing for $y>0$, and strictly increasing for $y<0$. We can take $\omega(y)=\exp(-y^2)$. The role of $\omega(y)$ is that of a weighting function. It brings $F^*_\delta(y)$ closer to $ \tilde F(y)$ as we move to the right of $F_{Y_{D_\delta}}^{-1}(\tau)$. Since $\omega(y)\to 0$ as $y\to 0$, eventually the gap disappears. For the values of $y$ to the left of $F_{Y_{D_\delta}}^{-1}(\tau)$, we simply subtract the gap and use the $\max$ to keep the function non-negative. This function $F^*_\delta(y)$ satisfies \ref{condition_1}., \ref{condition_2}., and \ref{condition_3}.

If, instead, 
\begin{align*}
F^*_\delta(F_{Y_{D_\delta}}^{-1}(\tau)) \geq \tilde F(F_{Y_{D_\delta}}^{-1}(\tau))
\end{align*}
then
\begin{align*}
F^*_\delta(y)= 
     \begin{cases}
 \tilde F(y) + \omega (y-F_{Y_{D_\delta}}^{-1}(\tau))\left [ F^*_\delta(F_{Y_{D_\delta}}^{-1}(\tau)) -\tilde F(F_{Y_{D_\delta}}^{-1}(\tau))   \right]   \text{ for }y\leq F_{Y_{D_\delta}}\\
 \min \left\{ 1, \tilde F(y) + \left [ F^*_\delta(F_{Y_{D_\delta}}^{-1}(\tau))-\tilde F(F_{Y_{D_\delta}}^{-1}(\tau)) \right]  \right\} \text{ for }y> F_{Y_{D_\delta}}^{-1}(\tau)\\
     \end{cases}
\end{align*}
Here, $F^*_\delta(y)$ lies above $ \tilde F(y)$ and we use the weight function to close the gap as $y\to -\infty.$ For the values of $y>F_{Y_{D_\delta}}^{-1}(\tau)$ we simply add the gap (we are subtracting a negative number) and use the $\min$ to avoid the function from being greater than 1. This function $F^*_\delta(y)$ satisfies \ref{condition_1}., \ref{condition_2}., and \ref{condition_3}.

Therefore, sharp bounds for the global effect $G_{\tau, D_\delta}:=F_{Y_{D_\delta}}^{-1}(\tau)-F_{Y}^{-1}(\tau)$, are given by
\begin{align*}
\max\left\{\tilde F_A^{-1}(\tau-\delta) ,F_A^{-1}(\tau-\delta c)\right\} - F_{Y}^{-1}(\tau) \leq G_{\tau, D_\delta}\leq \min\left\{\tilde F_A^{-1}(\tau) ,F_A^{-1}(\tau+\delta c)\right\}-F_{Y}^{-1}(\tau).
\end{align*}
\end{proof}

\begin{proof}[Proof of Theorem \ref{theorem_existence}]

Recall that by \eqref{count_dist_dec}, the counterfactual distribution of a given policy $D_\delta$ that satisfies assumption \ref{assumption_policy} is 
\begin{align*}
F_{Y_{D_\delta}}(y) &= pF_{Y|D=1}(y)  + (1-p-\delta) F_{Y|D_\delta=0}(y)  +\delta F_{Y(1)|D=0,D_\delta=1}(y)
\end{align*}
Then, we can write the quotient as
\begin{align*}
\frac{F_{Y_{D_\delta}}(y)-F_Y(y)}{\delta} &= (1-p) \frac{F_{Y|D_\delta=0}(y)-F_{Y|D=0}(y)}{\delta} - F_{Y|D_\delta=0}(y)  + F_{Y(1)|D=0,D_\delta=1}(y).
\end{align*}
By Assumption \ref{assumption_limit_distribution}, we can take the limit $\delta\downarrow 0$ to obtain
\begin{align*}
\lim_{\delta\downarrow 0}\frac{F_{Y_{D_\delta}}(y)-F_Y(y)}{\delta} &= (1-p)\frac{\partial F_{Y|D_\delta=0}(y)}{\partial \delta}\bigg|_{\delta=0} - F_{Y|D=0}(y)  + F_{Y(1)|\partial D}(y)\\
&=:\dot F_{Y,\mathcal D}(y).
\end{align*}
The map $y\mapsto\dot F_{Y,\mathcal D}(y)$ is continuous by Assumption \ref{assumption_limit_distribution}. Now we aim to strengthen the pointwise result to a uniform result. For that, we write:
\begin{align*}
\sup_{y\in\mathbb R} \left|\frac{F_{Y_{D_\delta}}(y)-F_{Y}(y)}{\delta}-\dot F_{Y,\mathcal D}(y)\right| & \leq (1-p)\sup_{y\in\mathbb R} \left|\frac{F_{Y|D_\delta=0}(y)-F_{Y|D=0}(y)}{\delta}-\frac{\partial F_{Y|D_\delta=0}(y)}{\partial \delta}\bigg|_{\delta=0} \right|\\
&+\sup_{y\in\mathbb R} \left|F_{Y|D_\delta=0}(y)  - F_{Y|D=0}(y) \right| \\
&+ \sup_{y\in\mathbb R} \left|F_{Y(1)|D=0,D_\delta=1}(y) - F_{Y(1)|\partial D}(y) \right| 
\end{align*}
Assumption \ref{assumption_limit_distribution} ensures the uniform convergence of the first and third summands. The second summand convergence uniformly because it is analogous to weak convergence to a continuous distribution function (see Exercise 5.(b) in Chapter 8 of \cite{resnick}).

Finally, let $\Gamma_\tau[F]$ be the $\tau$-quantile of $F$. The Hadamard derivative at $F$ is (see Lemma 21.3 in \cite{vandervaart2000})
\begin{align*}
\Gamma_{\tau,F}'[h]=-\frac{h(F^{-1}(\tau))}{f(F^{-1}(\tau))}.
\end{align*}
for any $h\in \mathbb D[-\infty,\infty]$ continuous at $F^{-1}(\tau)$. We write the marginal effect as
\begin{align*}
M_{\tau,\mathcal D}&=\lim_{\delta\to 0}\frac{\Gamma_\tau\left[F_{Y_{D_\delta}}\right]-\Gamma_\tau[F_{Y}]}{\delta}=\lim_{\delta\to 0}\frac{\Gamma_\tau\left[F_{Y_{D_0}} + \delta \left(\frac{F_{Y_{D_\delta}}-F_{Y_{D_0}}}{\delta}\right)\right]-\Gamma_\tau[F_{Y}]}{\delta}\\
&=\lim_{\delta\to 0}\frac{\Gamma_\tau\left[F_Y + \delta \left(\frac{F_{Y_{D_\delta}}-F_{Y}}{\delta}\right)\right]-\Gamma_\tau[F_{Y}]}{\delta}\\
&=\Gamma_{\tau,F_Y}'[\dot F_{Y,\mathcal D}]=\frac{\dot F_{Y,\mathcal D}(F_Y^{-1}(\tau))}{f_Y(F_Y^{-1}(\tau))}.
\end{align*}

The third equality follows from $F_{Y_{D_0}}=F_Y$. This is because by \eqref{count_dist_dec}, and the fact we have assumed that $\lim_{\delta\to 0}F_{Y|D_\delta=0}=F_{Y|D=0}$ and that $\lim_{\delta\to 0}F_{Y(1)|D=0,D_\delta=1}$ is well defined. The fourth equality follows from
\begin{align*}
\lim_{\delta \downarrow 0}\sup_{y\in\mathcal Y} \left|\frac{F_{Y_{D_\delta}}(y)-F_{Y}(y)}{\delta}-\dot F_{Y,\mathcal D}(y)\right|=0,
\end{align*}
which is required by Lemma 21.3 in \cite{vandervaart2000}.

\end{proof}

\begin{proof}[Details of Example \ref{example_existence_of_marg}]

Let $D=\mathds 1\left\{ V\geq 0 \right\}$ for $V$ continuously distributed around 0, and, for a sequence $\delta\downarrow 0$, let $D_\delta=\mathds 1\left\{ V\geq F_V^{-1}(F_V(0)-\delta) \right\}$. It follows that $p:=\Pr(D=1)=1-F_V(0)$ and that $\Pr(D_\delta=1) = p+\delta$. 
Suppose that $\lim_{\delta\to 0}\Pr(Y\leq y|V\leq F_V^{-1}(F_V(0)-\delta))=\Pr(Y\leq y|V\leq 0)$. Then
\begin{align*}
\lim_{\delta\to 0}F_{Y|D_\delta=0}(y)&=\lim_{\delta\to 0}\Pr(Y\leq y|V\leq F_V^{-1}(F_V(0)-\delta))\\
&= \Pr(Y\leq y|V\leq 0)\\
&=F_{Y|D=0}(y).
\end{align*}
Now, we have
\begin{align*}
F_{Y|D_\delta=0}(y)-F_{Y|D=0}(y)&= \frac{\Pr(Y\leq y,V\leq F_V^{-1}(F_V(0)-\delta))}{p+\delta} - \frac{\Pr(Y\leq y,V\leq 0)}{p}\\
=&\frac{p\Pr(Y\leq y,V\leq F_V^{-1}(F_V(0)-\delta)) - (p+\delta)\Pr(Y\leq y,V\leq 0)}{(p+\delta)p}\\
&=\frac{1}{p+\delta}\left [\Pr(Y\leq y,V\leq F_V^{-1}(F_V(0)-\delta))- \Pr(Y\leq y,V\leq 0)  \right]\\
&-\frac{\delta}{p(p+\delta)} \Pr(Y\leq y,V\leq 0)
\end{align*}
Consider the first term:
\begin{align*}
&\frac{\Pr(Y\leq y,V\leq F_V^{-1}(F_V(0)-\delta))- \Pr(Y\leq y,V\leq 0) }{\delta}\\&=\frac{F_V^{-1}(F_V(0)-\delta)}{\delta}\\
&\times
\frac{\Pr(Y\leq y,V\leq F_V^{-1}(F_V(0)-\delta))- \Pr(Y\leq y,V\leq 0)}{F_V^{-1}(F_V(0)-\delta)} 
\end{align*}
Now, provided that $f_V(0)\neq 0$,
\begin{align*}
\lim_{\delta\to 0}\frac{F_V^{-1}(F_V(0)-\delta)}{\delta} = -\frac{1}{f_V(0)}.
\end{align*}
For the other term we have
\begin{align*}
\frac{\Pr(Y\leq y,V\leq F_V^{-1}(F_V(0)-\delta))- \Pr(Y\leq y,V\leq 0)}{F_V^{-1}(F_V(0)-\delta)} &= \frac{\int^{F_V^{-1}(F_V(0)-\delta)}_{0}\int_{-\infty}^yf_{Y,V}(u,v)du dv}{F_V^{-1}(F_V(0)-\delta)}\\
&\to \int_{-\infty}^yf_{Y,V}(u,0)du
\end{align*}
Combining all together, we obtain
\begin{align*}
\lim_{\delta\to 0}\frac{F_{Y|D_\delta=0}(y)-F_{Y|D=0}(y)}{\delta}=-\left(\frac{1}{f_V(0)}+\frac{1}{p}\right)F_{Y|D=0}(y)
\end{align*}
Thus we can take
\begin{align*}
\frac{\partial F_{Y|D_\delta=0}(y)}{\partial \delta}\bigg|_{\delta=0}=-\left(\frac{1}{f_V(0)}-\frac{1}{p}\right)F_{Y|D=0}(y)
\end{align*}
which will be continuous as long as $F_{Y|D=0}(y)$ is continuous (which we assumed).
Now, consider
\begin{align*}
F_{Y(1)|D=0,D_\delta=1}(y) &=\Pr(Y(1)\leq y|V\leq 0, V\geq F_V^{-1}(F_V(0)-\delta))\\
&=\frac{\Pr(Y(1)\leq y,V\leq 0, V\geq F_V^{-1}(F_V(0)-\delta))}{\Pr(V\leq 0, V\geq F_V^{-1}(F_V(0)-\delta))}\\
&=\frac{\int_{F_V^{-1}(F_V(0)-\delta)}^0\int_{-\infty}^0f_{Y(1),V}(u,v)dudv}{\delta}\\
&=\frac{F_V^{-1}(F_V(0)-\delta)}{\delta}\frac{\int_{F_V^{-1}(F_V(0)-\delta)}^0\int_{-\infty}^0f_{Y(1),V}(u,v)dudv}{F_V^{-1}(F_V(0)-\delta)}.
\end{align*}
Under some mild regularity conditions:
\begin{align*}
\lim_{\delta\to0} F_{Y(1)|D=0,D_\delta=1}(y) = F_{Y(1)|V=0}(y)
\end{align*}
Thus, we can take
\begin{align*}
F_{Y(1)|\partial D}(y)=F_{Y(1)|V=0}(y).
\end{align*}
\end{proof}

\begin{proof}[Proof of Theorem \ref{thm_bounds_marginal}]

To bound the marginal effect, which we assume that exists, we aim to take the limit when $\delta\downarrow 0$ of the rescaled bounds of the global effect. That is, we aim to bound the marginal effect by the limiting versions of  
\begin{align*}
\frac{\max\left\{\tilde F_A^{-1}(\tau-\delta) ,F_A^{-1}(\tau-\delta c)\right\} - F_{Y}^{-1}(\tau) }{\delta},
\end{align*}
and 
\begin{align*}
\frac{\min\left\{\tilde F_A^{-1}(\tau) ,F_A^{-1}(\tau+\delta c)\right\}-F_{Y}^{-1}(\tau)}{\delta}. 
\end{align*}

We would like to show that, when $\delta\downarrow 0$, we have
\begin{align*}
\lim_{\delta\downarrow 0}\frac{\max\left\{\tilde F_A^{-1}(\tau-\delta) ,F_A^{-1}(\tau-\delta c)\right\} - F_{Y}^{-1}(\tau)}{\delta} \leq M_{\tau, \mathcal D}\leq \lim_{\delta\downarrow 0}\frac{\min\left\{\tilde F_A^{-1}(\tau) ,F_A^{-1}(\tau+\delta c)\right\}-F_{Y}^{-1}(\tau)}{\delta}
\end{align*}
The reason we take the limit from the right, \emph{i.e.} $\delta\downarrow 0$, is because twofold the ordinary limit might fail to exist. This is the reason why we assume that the sequence of policies consists of $\delta>0$ moving towards 0. This is part of Assumption \ref{assumption_policy}. For $\delta>0$, we can write
\begin{align*}
\frac{\max\left\{\tilde F_A^{-1}(\tau-\delta) ,F_A^{-1}(\tau-\delta c)\right\} - F_{Y}^{-1}(\tau)}{\delta} &= \frac{\max\left\{\tilde F_A^{-1}(\tau-\delta) - F_{Y}^{-1}(\tau) , F_A^{-1}(\tau-\delta c)- F_{Y}^{-1}(\tau)\right\} }{\delta}\\
&= \max\left\{  \frac{\tilde F_A^{-1}(\tau-\delta) - F_{Y}^{-1}(\tau)}{\delta}, \frac{F_A^{-1}(\tau-\delta c)- F_{Y}^{-1}(\tau)}{\delta}          \right\}.
\end{align*}
The last step, bringing $1/\delta$ into the $\max$, is only valid because $\delta>0$. Similarly, for $\delta>0$, we have
\begin{align*}
\frac{\min\left\{\tilde F_A^{-1}(\tau) ,F_A^{-1}(\tau+\delta c)\right\}-F_{Y}^{-1}(\tau)}{\delta} = \min\left\{\frac{\tilde F_A^{-1}(\tau)-F_{Y}^{-1}(\tau) }{\delta},\frac{F_A^{-1}(\tau+\delta c)-F_{Y}^{-1}(\tau)}{\delta}\right\}
\end{align*}
Now, since the $\max$ and $\min$ are continuous, to pass the limit as $\delta\downarrow 0$ we need to show that the maps 
\begin{align*}
&\delta \mapsto \tilde F_A^{-1}(\tau-\delta), \\
&\delta \mapsto F_A^{-1}(\tau-\delta c), \\
&\delta \mapsto  \tilde F_A^{-1}(\tau) ,\\
&\delta \mapsto  F_A^{-1}(\tau+\delta c),
\end{align*}
are differentiable at $\delta=0$. We need to show differentiability because it follows from equations \eqref{incomplete_apparent} and \eqref{complete_apparent} that all these maps evaluated at $\delta=0$ are equal to $F_Y^{-1}(\tau)$. Provided they are differentiable (something we show below), we obtain:
\begin{align*}
\max\left\{\frac{\tilde F_A^{-1}(\tau-\delta)}{\partial \delta}\bigg |_{\delta=0} ,\frac{\partial F_A^{-1}(\tau-\delta c)}{\partial \delta}\bigg |_{\delta=0}   \right\} \leq M_{\tau, \mathcal D}\leq  \min\left\{\frac{\partial \tilde F_A^{-1}(\tau)}{\partial \delta}\bigg |_{\delta=0} ,\frac{\partial F_A^{-1}(\tau+\delta c)}{\partial \delta}\bigg |_{\delta=0}\right\}
\end{align*}

We start with the map $\delta\mapsto F_A^{-1}(\tau-\delta c)$. By equations \eqref{incomplete_apparent} and \eqref{complete_apparent}, when $\delta=0$ , $F_A=F_Y$. So, we want to find the limit as $\delta\downarrow 0$ of 
\begin{equation}\label{eqn_proof_me}
\frac{F^{-1}_{ A}(\tau-\delta c)-F_Y^{-1}(\tau)}{\delta}.
\end{equation}
We note that $\delta$ plays a triple role in the previous expression. Using \eqref{complete_apparent}, we write $F_{ A}(y)$ as
\begin{align}
F_{ A}(y;\delta_1,\delta_2)= p F_{Y|D=1}(y)  + (1-p-\delta_1) F_{Y|D_{\delta_2}=0}(y)  +   \delta_1  \int_{\mathcal X}F_{Y|D=1,X=x}(y)dF_{X|D=0,D_{\delta_2}=1}(x)
\end{align}
and we define
\begin{align*}
g(\delta_1,\delta_2,\delta_3)=F^{-1}_{A}(\tau-\delta_3 c;\delta_1,\delta_2)
\end{align*}
The map $\delta_1\mapsto g(\delta_1,\delta_2, \delta_3)$ for a fixed $\delta_2$ and $\delta_3$ is the composition\footnote{For $[a,b]\subset [-\infty,\infty]$,  $\mathbb D[a,b]$ is the set of all real-valued cadlag functions: right continuous with left limits everywhere in $[a,b]$.  $\mathbb D[a,b]$ is equipped with the uniform norm $\|\cdot\|_{\infty}$.}
\begin{align*}
\delta_1 \in\mathbb R\overset{h}{\mapsto} F_{A}(\cdot;\delta_1,\delta_2)\in \mathbb D[-\infty,\infty]\overset{\Gamma}{\mapsto} F^{-1}_{A}(\tau-\delta_3 c;\delta_1,\delta_2)\in\mathbb R. 
\end{align*}
The first map $h:\delta_1 \in\mathbb R\mapsto F_{A}(\cdot;\delta_1,\delta_2)\in \mathbb D[-\infty,\infty]$ has Hadamard derivative given by 
\begin{align*}
\int_{\mathcal X}F_{Y|D=1,X=x}(\cdot)dF_{X|D=0,D_{\delta_2}=1}(x)-F_{Y|D_{\delta_2}=0}(\cdot).
\end{align*}
The second map has Hadamard derivative given by (See Lemma 21.3 in \cite{vandervaart2000})
\begin{align*}
\Gamma'_{F_{A}(\cdot;\delta_1,\delta_2)}[G]=-\frac{G( F^{-1}_{A}(\tau-\delta_3c;\delta_1,\delta_2))}{f_{A}( F^{-1}_{A}(\tau-\delta_3c;\delta_1,\delta_2);\delta_1,\delta_2)}.
\end{align*}
for $G\in \mathbb D[-\infty,\infty]$ continuous at $ F^{-1}_{A}(\tau-\delta_3c;\delta_1,\delta_2)$.  Then, the derivative of the composite map $\delta_1\mapsto \Gamma \circ h (\delta_1)$ is $\Gamma'_{F_{A}(\cdot;\delta_1,\delta_2)}[h'(\delta_1)]$, which is at $\delta_2=\delta_3=0$
\begin{align*}
\frac{\partial  F^{-1}_{A}(\tau;\delta_1,0)}{\partial \delta_1}=-\frac{\int_{\mathcal X}F_{Y|D=1,X=x}(F^{-1}_{A}(\tau;\delta_1,0))dF_{X|\partial D}(x)-F_{Y|D=0}(F^{-1}_{A}(\tau;\delta_1,0))}{f_{A}( F^{-1}_{A}(\tau;\delta_1,0);\delta_1,0)}.
\end{align*}
which is continuous at $\delta_1=0$.

The derivative of the second map $\delta_2\mapsto g(\delta_1,\delta_2,\delta_3)$, for a fixed $\delta_1=\delta_3=0$ is exactly 0 because as \eqref{complete_apparent} shows, $\delta_1$ multiplies the expression
\begin{align*}
-F_{Y|D_{\delta_2}}(y)+\int_{\mathcal X}F_{Y|D=1,X=x}(y)dF_{X|D=0,D_{\delta_2}=1}(x).
\end{align*}
All of the terms in this expression are differentiable at $\delta_2=0$ by Assumption \ref{assumption_indifference}.

The derivative of the third map $\delta_3\mapsto g(\delta_1,\delta_2,\delta_3)$, for a fixed $\delta_1=\delta_2=0$, can be obtained via the identity
\begin{align*}
F_{A}\left(F^{-1}_{A}(\tau-\delta_3 c;\delta_1,\delta_2)\delta_1,\delta_2\right) =  \tau-\delta_3 c.
\end{align*}
Differentiating through with respect to $\delta_3$, we obtain at $\delta_1=\delta_2=0$
\begin{align*}
\frac{\partial  F^{-1}_{A}(\tau-\delta_3 c;0,0)}{\partial \delta_3}=-\frac{c}{f_Y(F_Y^{-1}(\tau-\delta_3 c))}.
\end{align*}
which is continuous with respect to $\delta_3$. 

Therefore, all the partial derivatives of the map $(\delta_1,\delta_2,\delta_3)\mapsto g(\delta_1,\delta_2,\delta_3)$ exist and are continuous, hence the limit in \eqref{eqn_proof_me} exists and is equal to
\begin{align*}
\lim_{\delta\to 0}\frac{F^{-1}_{ A}(\tau-\delta c)-F_Y^{-1}(\tau)}{\delta}&=\frac{\partial g(\delta_1,0,0)}{\partial \delta_1}\bigg|_{\delta_1=0}+\frac{\partial g(0,\delta_2,0)}{\partial \delta_2}\bigg|_{\delta_2=0}+\frac{\partial g(0,0,\delta_3)}{\partial \delta_3}\bigg|_{\delta_3=0}\\
&=-\frac{\int_{\mathcal X}F_{Y|D=1,X=x}(F^{-1}_{Y}(\tau))dF_{X|\partial D}(x)-F_{Y|D=0}(F^{-1}_{Y}(\tau))}{f_{Y}( F^{-1}_{Y}(\tau))}-\frac{c}{f_Y(F_Y^{-1}(\tau))}.
\end{align*}

Analogous arguments can be used to obtain that
\begin{align*}
\lim_{\delta\to 0}\frac{F^{-1}_{ A}(\tau+\delta c)-F_Y^{-1}(\tau)}{\delta}
&=-\frac{\int_{\mathcal X}F_{Y|D=1,X=x}(F^{-1}_{Y}(\tau))dF_{X|\partial D}(x)-F_{Y|D=0}(F^{-1}_{Y}(\tau))}{f_{Y}( F^{-1}_{Y}(\tau))}+\frac{c}{f_Y(F_Y^{-1}(\tau))}.
\end{align*}

Now we turn our attention to the incomplete apparent distribution $\tilde F_A$ and the map $\delta\mapsto \tilde F^{-1}_A(\tau-\delta)$. By equation \eqref{incomplete_apparent}, the incomplete apparent distribution is given by
\begin{align*}
\tilde F_A(y;\delta_1,\delta_2)&:= p F_{Y|D=1}(y)  + (1-p-\delta_1) F_{Y|D_{\delta_2}=0}(y),
\end{align*}
where, again, we index by $\delta_1$ and $\delta_2$ to emphasize the dual role played by $\delta.$ As before, we define
\begin{align*}
g(\delta_1,\delta_2,\delta_3)=\tilde F^{-1}_{A}(\tau-\delta_3 ;\delta_1,\delta_2).
\end{align*}
The map $\delta_1\mapsto g(\delta_1,\delta_2, \delta_3)$ for a fixed $\delta_2$ and $\delta_3$ is the composition
\begin{align*}
\delta_1 \in\mathbb R\overset{h}{\mapsto} \tilde F_{A}(\cdot;\delta_1,\delta_2)\in \mathbb D[-\infty,\infty]\overset{\Gamma}{\mapsto} \tilde F^{-1}_{A}(\tau-\delta_3 ;\delta_1,\delta_2)\in\mathbb R. 
\end{align*}
The first map $h:\delta_1 \in\mathbb R\mapsto \tilde F_{A}(\cdot;\delta_1,\delta_2)\in \mathbb D[-\infty,\infty]$ has Hadamard derivative given by 
\begin{align*}
- F_{Y|D_{\delta_2}=0}(\cdot).
\end{align*}
The second map has Hadamard derivative given by 
\begin{align*}
\Gamma'_{\tilde F_{A}(\cdot;\delta_1,\delta_2)}[G]=-\frac{G( \tilde F^{-1}_{A}(\tau-\delta_3c;\delta_1,\delta_2))}{\tilde f_{A}( \tilde F^{-1}_{A}(\tau-\delta_3;\delta_1,\delta_2);\delta_1,\delta_2)}.
\end{align*}
Then, the derivative of  $\delta_1\mapsto \Gamma \circ h (\delta_1)$ is $\Gamma'_{\tilde F_{A}(\cdot;\delta_1,\delta_2)}[h'(\delta_1)]$, which is at $\delta_2=\delta_3=0$
\begin{align*}
\frac{\partial  \tilde F^{-1}_{A}(\tau;\delta_1,0)}{\partial \delta_1}=\frac{F_{Y|D=0}( \tilde F^{-1}_{A}(\tau;\delta_1,0))}{\tilde f_{A}( \tilde F^{-1}_{A}(\tau;\delta_1,0);\delta_1,0)}
\end{align*}
which is continuous at $\delta_1=0$.

The derivative of the second map $\delta_2\mapsto g(\delta_1,\delta_2,\delta_3)$, for a fixed $\delta_1=\delta_3=0$ is exactly 0 because, as in the case of the (complete) apparent distribution, as \eqref{incomplete_apparent} shows, $\delta_1$ multiplies the expression
\begin{align*}
F_{Y|D_{\delta_2}=0}(y)
\end{align*}
which is differentiable at $\delta_2=0$ by Assumption \ref{assumption_indifference}.

The derivative of the third map $\delta_3\mapsto g(\delta_1,\delta_2,\delta_3)$, for a fixed $\delta_1=\delta_2=0$, can be obtained via the identity
\begin{align*}
\tilde F_{A}\left(\tilde F^{-1}_{A}(\tau-\delta_3 ;\delta_1,\delta_2)\delta_1,\delta_2\right) =  \tau-\delta_3 .
\end{align*}
Differentiating through with respect to $\delta_3$, we obtain at $\delta_1=\delta_2=0$
\begin{align*}
\frac{\partial  \tilde F^{-1}_{A}(\tau-\delta_3 ;0,0)}{\partial \delta_3}=-\frac{1}{ f_Y(F_Y^{-1}(\tau-\delta_3 ))}.
\end{align*}
which is continuous with respect to $\delta_3$. Therefore, all the partial derivatives of the map $(\delta_1,\delta_2,\delta_3)\mapsto g(\delta_1,\delta_2,\delta_3)$ exist and are continuous, we can take the limit
\begin{align*}
\lim_{\delta\to 0}\frac{\tilde F^{-1}_A(\tau-\delta)-F^{-1}_Y(\tau)}{\delta}&=\frac{\partial g(\delta_1,0,0)}{\partial \delta_1}\bigg|_{\delta_1=0}+\frac{\partial g(0,\delta_2,0)}{\partial \delta_2}\bigg|_{\delta_2=0}+\frac{\partial g(0,0,\delta_3)}{\partial \delta_3}\bigg|_{\delta_3=0}\\
&=\frac{F_{Y|D=0}(F^{-1}_{Y}(\tau))-1}{f_{Y}( F^{-1}_{Y}(\tau))}
\end{align*}
because $F_{Y|D_0=0}=F_{Y|D=0}$ by Assumption \ref{assumption_limit_distribution}.

For the remaining map, $\delta\mapsto \tilde F^{-1}_A(\tau)$, similar arguments show that
\begin{align*}
\lim_{\delta\to 0}\frac{\tilde F^{-1}_A(\tau)-F^{-1}_Y(\tau)}{\delta}
&=\frac{F_{Y|D=0}(F^{-1}_{Y}(\tau))}{f_{Y}( F^{-1}_{Y}(\tau))}.
\end{align*}

Therefore, we have that
\begin{align*}
\theta_{L,M} \leq M_{\tau, \mathcal D} \leq \theta_{U,M}
\end{align*}
where
\begin{align}\label{me_low_bound}
\theta_{L,M}:=\max\left\{\frac{F_{Y|D=0}(F^{-1}_{Y}(\tau))-1}{f_{Y}( F^{-1}_{Y}(\tau))} ,\frac{F_{Y|D=0}(F^{-1}_{Y}(\tau))-\int_{\mathcal X}F_{Y|D=1,X=x}(F^{-1}_{Y}(\tau))dF_{X|\partial D}(x)-c}{f_{Y}( F^{-1}_{Y}(\tau))}  \right\},
\end{align}
and
\begin{align}\label{me_high_bound}
\theta_{U,M}:=\min\left\{\frac{F_{Y|D=0}(F^{-1}_{Y}(\tau))}{f_{Y}( F^{-1}_{Y}(\tau))},\frac{F_{Y|D=0}(F^{-1}_{Y}(\tau))-\int_{\mathcal X}F_{Y|D=1,X=x}(F^{-1}_{Y}(\tau))dF_{X|\partial D}(x)+c}{f_{Y}( F^{-1}_{Y}(\tau))}\right\}.
\end{align}

Now, the question is whether these bounds are sharp. As in the proof of Theorem \ref{thm_bounds_global}, for any $\theta\in[\theta_{L,M},\theta_{U,M}]$, we need to find a sequence of CDFs $F_{\delta}^*(y)$ such that for a counterfactual distribution constructed as
\begin{align*}
F_{Y_\delta}^*(y)=pF_{Y|D=1}(y)  + (1-p-\delta) F_{Y|D_\delta=0}(y)  +\delta F_{\delta}^*(y)
\end{align*}
we have that 
\begin{align*}
\lim_{\delta \downarrow 0}\frac{F_{Y_\delta}^{*-1}(\tau)-F^{-1}(\tau)}{\delta} = \theta
\end{align*}
For a given $\theta$, we can find (more than) a sequence of global effects that converge to $\theta$. For each $\delta$, these global effects are within the bounds provided in Theorem \ref{thm_bounds_global}. For each $\delta$ these bounds are sharp. So that for each $\delta$ we can find $F_{\delta}^*(y)$ that delivers that particular value of the global effect. Thus, that particular sequence will deliver a marginal effect of $\theta$. Hence, the bounds in \eqref{me_low_bound} and \eqref{me_high_bound} are sharp.

\end{proof}

\begin{proof}[Proof of Lemma \ref{qbf_global_lemma}]
The Lemma has four statements. We prove each one at a time.

\begin{enumerate}[$(i)$]

\item  Continuity of the quantile breakdown frontier is ensured by the continuity of each of its components: $F_Y^{-1}(\tau)$ is continuous by Assumption \ref{assumption_regularity}.\ref{assumption_yd_regular}; continuity of $F_A$ is assured by Assumptions \ref{assumption_regularity}.\ref{assumption_yd_regular}, and  \ref{assumption_regularity}.\ref{assumption_yd_regular_2}, and an application of the Dominated Convergence Theorem; and continuity of the collection of conclusions of $\tau \mapsto g_{\tau,L}$ is assumed. 

\item We have $c_{\tau,L}<0$ when $\tau<F_A\left ( F_{Y}^{-1}(\tau)+ g_{\tau,L} \right )$, which in turn implies that $F_A^{-1}(\tau)-F_{Y}^{-1}(\tau)<g_{\tau,L}$. By footnote \ref{footnote}, when $c=0$, the lower and upper bounds collapse to $ F_A^{-1}(\tau)  -F_Y^{-1}(\tau)$, but $ F_A^{-1}(\tau)  -F_Y^{-1}(\tau)< g_{\tau,L}.$ Hence, the target conclusion does not hold under point identification.

\item We have $c_{\tau,L}\geq 0$ when $\tau\geq F_A\left ( F_{Y}^{-1}(\tau)+ g_{\tau,L} \right )$, which implies that 
$g_{\tau,L}\leq F_A^{-1}(\tau)  -F_Y^{-1}(\tau)$, and the conclusion holds under point identification, or $c=0$. To show that for $0\leq c\leq  c_{\tau,L}$, the conclusion holds, we invoke Theorem \ref{thm_bounds_global}, and show that the lower bound of the global effect is decreasing in $c$, and when evaluated at $c= c_{\tau,L}$ is greater than $g_{\tau,L}$. By footnote \ref{footnote}, $\tilde F_A^{-1}(\tau-\delta)\leq F_A^{-1}(\tau)$, and $\delta$ and $c$ are positive, then $c\mapsto \max\left\{\tilde F_A^{-1}(\tau-\delta) ,F_A^{-1}(\tau-\delta c)\right\}$ is weakly decreasing. Now, if $c= c_{\tau,L}$, then 
\begin{align*}
\max\left\{\tilde F_A^{-1}(\tau-\delta ) ,F_A^{-1}(\tau-\delta  c_{\tau,L})\right\}-F_{Y}^{-1}(\tau)&= \max\left\{\tilde F_A^{-1}(\tau-\delta) ,F_Y^{-1}(\tau)+g_{\tau,L}\right\}-F_{Y}^{-1}(\tau)\\
&= \max\left\{\tilde F_A^{-1}(\tau-\delta) -F_{Y}^{-1}(\tau) ,g_{\tau,L}\right\}\\
&\geq g_{\tau,L}.
\end{align*}

\item Let $c^*$ be implicitly defined as: $\tilde F_A^{-1}(\tau-\delta)=F_A^{-1}(\tau-\delta c^*)$. Then, for $c\geq c^*$ (see footnote \ref{footnote}) we have $\tilde F_A^{-1}(\tau-\delta)\geq F_A^{-1}(\tau-\delta c)$, and $c$ plays no role in the bound. So, if $c_{\tau,L}\geq c^*$, then the conclusion holds irrespective of the value of $c$. This means that $\tilde F_A^{-1}(\tau-\delta)\geq F_A^{-1}(\tau-\delta c_{\tau,L})$, which is equivalent to $\tilde F_A^{-1}(\tau-\delta)-F_Y^{-1}(\tau)\geq g_{\tau,L}$.
\end{enumerate}
\end{proof}

\begin{proof}[Proof of Lemma \ref{qbf_marginal_lemma}]
The Lemma has four statements. We prove each one at a time.
\begin{enumerate}[$(i)$]
\item The continuity of $F^{-1}_{Y}(\tau)$, $F_{Y|D=0}(y)$, and $F_{Y|D=1,X=x}(y)$ guaranteed by Assumption \ref{assumption_regularity}.\ref{assumption_yd_regular}, plus a straightforward application of the Dominated Convergence Theorem ensure that $\tau\mapsto c_{\tau,U}^M$ is continuous.
\item $c_{\tau,U}^M$ is negative whenever $\int_{\mathcal X}F_{Y|D=1,X=x}(F^{-1}_{Y}(\tau))dF_{X|\partial D}(x)<F_{Y|D=0}(F^{-1}_{Y}(\tau))$. By Theorem \ref{thm_bounds_marginal}, when $c=0$, the upper and lower bounds collapse to 
\begin{align*}
\frac{F_{Y|D=0}(F^{-1}_{Y}(\tau))-\int_{\mathcal X}F_{Y|D=1,X=x}(F^{-1}_{Y}(\tau))dF_{X|\partial D}(x)}{f_{Y}( F^{-1}_{Y}(\tau))}. 
\end{align*}
If $c_{\tau,U}^M<0$, then 
\begin{align*}
\frac{F_{Y|D=0}(F^{-1}_{Y}(\tau))-\int_{\mathcal X}F_{Y|D=1,X=x}(F^{-1}_{Y}(\tau))dF_{X|\partial D}(x)}{f_{Y}( F^{-1}_{Y}(\tau))}>0
\end{align*}
which implies that the target conclusion $M_{\tau,\mathcal D}\leq 0$ does not hold under point identification.
\item Suppose that $0\leq c\leq c_{\tau,U}^M$. We want to show that this implies that $\theta_{U,M}\leq 0$. By Theorem \ref{thm_bounds_marginal}, the upper bound on the marginal effect is (see \eqref{me_high_bound})
\begin{align*}
\theta_{U,M}:=\min\left\{\frac{F_{Y|D=0}(F^{-1}_{Y}(\tau))}{f_{Y}( F^{-1}_{Y}(\tau))},\frac{F_{Y|D=0}(F^{-1}_{Y}(\tau))-\int_{\mathcal X}F_{Y|D=1,X=x}(F^{-1}_{Y}(\tau))dF_{X|\partial D}(x)+c}{f_{Y}( F^{-1}_{Y}(\tau))}\right\}.
\end{align*}
Since $\frac{F_{Y|D=0}(F^{-1}_{Y}(\tau))}{f_{Y}( F^{-1}_{Y}(\tau))}$ is always strictly positive under Assumption \ref{assumption_regularity}.\ref{assumption_yd_regular}, then we need the second term inside the $\min$ in $\theta_{U,M}$ to be non-positive. For $c=0$, it holds trivially, since we assumed that the conclusions hold at $c=0$, or that $c_{\tau,U}^M>0$. For $0< c\leq  c_{\tau,U}^M$, we have
\begin{align*}
&F_{Y|D=0}(F^{-1}_{Y}(\tau))-\int_{\mathcal X}F_{Y|D=1,X=x}(F^{-1}_{Y}(\tau))dF_{X|\partial D}(x)+c\\
&\leq F_{Y|D=0}(F^{-1}_{Y}(\tau))-\int_{\mathcal X}F_{Y|D=1,X=x}(F^{-1}_{Y}(\tau))dF_{X|\partial D}(x)+c_{\tau,U}^M\\
&\leq 0,
\end{align*}
because $ c_{\tau,U}^M= \int_{\mathcal X}F_{Y|D=1,X=x}(F^{-1}_{Y}(\tau))dF_{X|\partial D}(x)-F_{Y|D=0}(F^{-1}_{Y}(\tau))$.
\end{enumerate}

\end{proof}

\begin{proof}[Proof of Theorem \ref{dist_theta}]


We introduce some new notation related to Assumption \ref{ass_had_dif}. Let $\mathbb D_{\delta}\subset \ell^{\infty}(\mathcal Y)$ denote the set of all restrictions of distribution functions on $\mathbb R$ to $[F_Y^{-1}(\delta)-\varepsilon,F_Y^{-1}(1-\delta)+\varepsilon]$. Additionally, $\mathbb C_{\delta}$ is the set of continuous functions on $[F_Y^{-1}(\delta)-\varepsilon,F_Y^{-1}(1-\delta)+\varepsilon]$. Also, $\mathbb U\mathbb C(\mathcal Y)$ is the set of uniformly continuous functions defined on $\mathcal Y$.

Going back to $c_L$, it can be written as the composition of two maps. The first one is $\phi:\mathbb D(\mathcal Y)\times \mathbb D_\delta\mapsto \mathbb D(\mathcal Y)\times \ell^{\infty}(\delta,1-\delta)$ given by $\phi(H_1,H_2)\mapsto (H_1,H_2^{-1})$. The second one is $\psi:\mathbb D(\mathcal Y)\times \ell^{\infty}(\delta,1-\delta)\mapsto  \mathcal \ell^{\infty}(\delta,1-\delta)$ given by $\psi(H_1,H_2)\mapsto H_1\circ (H_2+g)$. Thus
\begin{align*}
\psi \circ \phi (F_A,F_Y)=F_A ( F_Y^{-1}+g).
\end{align*}

By Assumption \ref{ass_had_dif}.\ref{ass_f_y_quantiles} and Lemma 21.4(i) in \cite{vandervaart2000}, $\phi$ has Hadamard derivative at $(F_A,F_Y)$ tangentially to $\ell^{\infty}(\mathcal Y)\times \mathbb C_\delta$ given by the map
\begin{align*}
 \phi'_ {(F_A,F_Y)}(h_1,h_2)=\left (h_1,-\frac{h_2\circ F_Y^{-1}}{f_Y\circ F_Y^{-1}}\right).
\end{align*}

The second map $\psi:\mathbb D(\mathcal Y)\times \ell^{\infty}(\delta,1-\delta)\mapsto  \mathcal \ell^{\infty}(\delta,1-\delta)$ is given by $\psi(H_1,H_2)\mapsto H_1\circ (H_2+g)$. It has Hadamard derivative tangentially to $\mathbb U\mathbb C(\mathcal Y)\times \ell^{\infty}(\delta,1-\delta)$ at any $H_1\in \mathbb U\mathbb C(\mathcal Y)$ such that its derivative is bounded and uniformly continuous on $\mathcal Y$, and any $H_2\in  \ell^{\infty}(\delta,1-\delta)$. To see, this we combine Lemmas 3.9.25 and 3.9.27 in \cite{vandervaart1996}. Let $\alpha_{t}\to \alpha$ and $\beta_t\to\beta$ in $\mathbb D(\mathcal Y)$ and $\ell^{\infty}(\delta,1-\delta)$ respectively, as $t\to 0$.
\begin{align*}
&\frac{\psi(H_1+t\alpha_t,H_2+t\beta_t)-\psi(H_1,H_2)}{t}-\alpha\circ (H_2+g)-h_1\circ (H_2+g)\cdot  \beta\\
&=\frac{H_1\circ(H_2+g+t\beta_t) +t\alpha_t\circ(H_2+g+t\beta_t) - H_1\circ (H_2+g)}{t}-\alpha\circ (H_2+g)-h_1\circ (H_2+g)\cdot  \beta\\
&=(\alpha_t-\alpha)\circ(H_2+g+t\beta_t) + \alpha \circ(H_2+g+t\beta_t) -\alpha\circ (H_2+g)\\
&+\frac{H_1\circ(H_2+g+t\beta_t)-H_1\circ (H_2+g)}{t}-h_1\circ (H_2+g)\cdot  \beta
\end{align*}

The first term, $(\alpha_t-\alpha)\circ(H_2+g+t\beta_t)$, converges to $0$ in $\mathbb D(\mathcal Y)$ (that is, uniformly) because convergence of $\alpha_{t}\to \alpha$ is uniform. The second term, $\alpha \circ(H_2+g+t\beta_t) -\alpha\circ (H_2+g)$, converges to $0$ in $\mathbb D(\mathcal Y)$ because $\alpha$ is uniformly continuous on $\mathcal Y$. For the last term, fix a $\tau\in(\delta,1-\delta)$. By the mean-value theorem
\begin{align*}
&\frac{H_1(H_2(\tau)+g+t\beta_t(\tau))-H_1 (H_2(\tau)+g)}{t}-h_1 (H_2(\tau)+g)\cdot  \beta(\tau)\\
&=h_1(\varepsilon_{\tau,t})\beta_t(\tau)-h_1 (H_2(\tau)+g)\cdot  \beta(\tau)\\
&=h_1(\varepsilon_{\tau,t})(\beta_t(\tau)-\beta(\tau)) + (h_1(\varepsilon_{\tau,t})-h_1 (H_2(\tau)+g))\cdot \beta(\tau)
\end{align*}

The first term, $h_1(\varepsilon_{\tau,t})(\beta_t(\tau)-\beta(\tau)) $, converges uniformly to $0$ because $h_1$ is bounded on $\mathcal Y$, and $\beta_t$ converges uniformly to $\beta$. The second term converges to $0$ uniformly because $h_1$ is uniformly continuous on $\mathcal Y$.

Hence, by Assumption \ref{ass_had_dif}.\ref{ass_f_a_uniform}, $\psi$ has Hadamard derivative at $(F_A,F_Y^{-1})$ tangentially to $ \mathbb U\mathbb C(\mathcal Y)\times \ell^{\infty}(\delta,1-\delta)$ given by the map
\begin{align*}
 \psi'_ {(F_A,F_Y^{-1})}(h_1,h_2)=h_1\circ (F_Y^{-1}+g)+f_A\circ (F_Y^{-1}+g) \cdot h_2.
\end{align*}

We use the chain rule (see Theorem 20.9 in \cite{vandervaart2000}) to conclude that $\psi \circ \phi$ has Hadamard derivative at $(F_A,F_Y)$ tangentially to $\mathbb U\mathbb C(\mathcal Y)\times \mathbb C_\delta$ given by the map
\begin{align*}
(\psi \circ \phi)'_{(F_A,F_Y)}(h_1,h_2)&= \psi'_ {\phi(F_A,F_Y)}\circ  \phi'_ {(F_A,F_Y)}(h_1,h_2)\\
&=\psi'_ {(F_A,F_Y^{-1})}\circ (h_1,-h_2(F_Y^{-1})/f_Y(F_Y^{-1}))\\
&= h_1\circ (F_Y^{-1}+g)-f_A\circ (F_Y^{-1}+g) \frac{h_2\circ F_Y^{-1}}{f_Y\circ F_Y^{-1}}.
\end{align*}

By the functional Delta method (see Theorem 20.8 in \cite{vandervaart2000}) and the continuous mapping theorem (because of the $-1/\delta$ factor), we have that
\begin{align}\label{eq:theta_dist}
\sqrt n(\hat c_L-c_L)&=-\frac{1}{\delta}\sqrt n\left(\hat F_A\circ (\hat F_Y^{-1}+g)- F_A\circ( F_Y^{-1}+g)\right)\notag\\
&\rightsquigarrow -\frac{1}{\delta} (\psi \circ \phi)'_{(F_A,F_Y)}(\mathbb G_A,\mathbb G_Y))\notag\\
&=-\frac{1}{\delta}\mathbb G_A\circ (F_Y^{-1}+g)+\frac{1}{\delta}f_A\circ (F_Y^{-1}+g) \frac{\mathbb G_Y\circ F_Y^{-1}}{f_Y\circ F_Y^{-1}}.
\end{align}
Convergence takes place in $\ell^{\infty}(\delta,1-\delta)$. The limiting element is indeed Gaussian, as it is the linear combination of two Gaussian elements.

\end{proof}

\begin{proof}[Proof of Theorem \ref{qbf_marginal}]
For $G(y)=E[F_{Y|D=1,X}(y)|\partial D]$ or $G(y)=F_{Y|D=0}(y)$, we find the asymptotic distribution of $\sqrt n(\hat G \circ \hat F_Y^{-1}-G\circ  F_Y^{-1})$. Consider first the map $\psi: \mathbb D(\mathcal Y)^2\to \mathbb D(\mathcal Y)\times \ell^{\infty}(0,1)$, given by $\psi(H_1,H_2)=(H_1,H_2^{-1})$. Here, $\mathbb D(\mathcal Y)$ is the set of all restrictions of distribution functions on $\mathbb R$ to $\mathcal Y=[y_l,y_u]$, such that they give mass 1 to $(y_l,y_u]$. Also, $\mathbb C(\mathcal Y)$ is the set of all (uniformly) continuous functions defined on $\mathcal Y$.

By Lemma 21.4.(ii) in \cite{vandervaart2000}, and Assumption \ref{ass_had_dif_2}, $\psi$ is Hadamard differentiable tangentially to $\ell^{\infty}(\mathcal Y)\times \mathbb C(\mathcal Y)$ at $(G, F_Y)$, with derivative given by the map
\begin{align*}
\psi'_{(G, F_Y)}(h_1,h_2)=\left (h_1,-\frac{h_2\circ F_Y^{-1}}{f_y\circ F_Y^{-1} }\right).
\end{align*}

Now, consider the map $\phi:  \mathbb D(\mathcal Y)\times \ell^{\infty}(0,1)\to \ell^{\infty}(0,1)$ given by $\phi(H_1,H_2)=H_1\circ H_2^{-1}$. By Lemmas 3.9.25 and 3.9.27 in \cite{vandervaart1996}, and Assumption \ref{ass_had_dif_2}, $\phi$ has Hadamard derivative at $(G,F_Y^{-1})$ tangentially to $ \mathbb U\mathbb C(\mathcal Y)\times \ell^{\infty}(0,1)$ given by the map
\begin{align*}
 \phi'_ {(G,F_Y^{-1})}(h_1,h_2)=h_1\circ F_Y^{-1}+g\circ F_Y^{-1} \cdot h_2,
\end{align*}
where $g$ is the density of $G$.

We use the chain rule (see Theorem 20.9 in \cite{vandervaart2000}) to conclude that $\phi \circ \psi$ has Hadamard derivative at $(G,F_Y)$ tangentially to $\mathbb U\mathbb C(\mathcal Y)\times \mathbb C(\mathcal Y)$ given by the map
\begin{align*}
(\phi \circ \psi)'_{(G,F_Y)}(h_1,h_2)&= \phi'_ {\phi(G,F_Y)}\circ  \psi'_ {(G,F_Y)}(h_1,h_2)\\
&=\phi'_ {(G,F_Y^{-1})}\circ (h_1,-h_2\circ F_Y^{-1}/f_Y\circ F_Y^{-1})\\
&= h_1\circ F_Y^{-1}-g\circ F_Y^{-1} \cdot \frac{h_2\circ F_Y^{-1}}{f_Y\circ F_Y^{-1}}.
\end{align*}

By the functional Delta method (see Theorem 20.8 in \cite{vandervaart2000}) we have that
\begin{align*}
\sqrt n(\hat G\circ \hat F_Y^{-1}-G\circ  F_Y^{-1})
&\rightsquigarrow (\phi \circ \psi)'_{(G,F_Y)}(\mathbb G_G,\mathbb G_Y))\\
&=\mathbb G_G\circ F_Y^{-1}-g\circ F_Y^{-1} \cdot \frac{\mathbb G_Y\circ F_Y^{-1}}{f_Y\circ F_Y^{-1}}\\
&=:\mathbb G_{G,Y},
\end{align*}
where $\mathbb G_G$ is either $\mathbb G_{\partial D}$ or $\mathbb G_0$. We have
\begin{align*}
\hat c_{\tau,U}^M& :=  \sum_ {x\in\mathcal X}\hat F_{Y|D=1,X=x}(\hat F^{-1}_{Y}(\tau))\hat p_{x|\partial D}-\hat F_{Y|D=0}(\hat F^{-1}_{Y}(\tau)),\\
c_{\tau,U}^M &:= E[F_{Y|D=1,X}(F^{-1}_{Y}(\tau))|\partial D]- F_{Y|D=0}( F^{-1}_{Y}(\tau)).
\end{align*}
By the continuous mapping theorem
\begin{align*}
\sqrt n(\hat c_{\tau,U}^M - c_{\tau,U}^M)
\rightsquigarrow \mathbb G_{\theta_M}:= \mathbb G_{\partial D,Y}- \mathbb G_{0,Y}.
\end{align*}

\end{proof}

\subsection{Sufficient Conditions for Assumption \ref{brownian_bridge}}\label{app_primitive}
A sufficient condition for Assumption \ref{brownian_bridge} is the joint convergence of the empirical processes for the estimators of  $(F_Y, F_{Y|D=1}, F_{Y|D_\delta=0},F_{Y|D=1,X}, p, p_X)$ because $F_A$ is a function of them. Here, $F_{Y|D=1,X}$ is short hand for the $dim(\mathcal X)$-vector of CDFs $F_{Y|D=1,X=x}$ for each value of $x\in\mathcal X$. Same comment applies to $p_X$: it is a $dim(\mathcal X)$-vector that collects the (joint) pmf of $X$. 

Now we establish the asymptotic distribution of the apparent distribution: $\sqrt n(\hat F_A - F_A)$, as a process in $\ell ^{\infty}(\mathcal Y)$ which we denote by $\mathbb G_A$ . The estimator $\hat F_A(y)$ given in \eqref{emp_app_dis} is
\begin{align*}
\hat F_A(y) &:= \hat p \hat F_{Y|D=1}(y)  + (1 - \hat p - \delta) \hat F_{Y|D_\delta=0}(y) + \delta  \int_{\mathcal X}\hat F_{Y|D=1,X=x}(y)d\hat F_{X|D=0,D_\delta=1}(x)
\end{align*}
Since the support of $X$ is finite, then this can be written as
\begin{align*}
\hat F_A(y) &:= \hat p \hat F_{Y|D=1}(y)  + (1 - \hat p - \delta) \hat F_{Y|D_\delta=0}(y) + \delta \sum_{x\in\mathcal X} \hat F_{Y|D=1,X=x}(y)\hat p_x
\end{align*}
where $p_x := \Pr(X=x|D=0, D_\delta=1)$ and $\hat p_x$ is its empirical counterpart. The apparent counterfactual is
\begin{align}
 F_A(y) &:=  p  F_{Y|D=1}(y)  + (1 -  p - \delta)  F_{Y|D_\delta=0}(y) + \delta   \sum_{x\in\mathcal X}F_{Y|D=1,X=x}(y)p_x
\end{align}
and can be written as the map $\mathbb D(\mathcal Y)^{2+dim(\mathcal X)}\times (0,1)^{1+dim(\mathcal X)}\mapsto \mathbb D(\mathcal Y)$ given by
\begin{align*}
\psi(F_{Y|D=1}, F_{Y|D_\delta=0},F_{Y|D=1,X}, p, p_X ;\delta  ) &= p  F_{Y|D=1}(\cdot)  + (1 -  p - \delta)  F_{Y|D_\delta=0}(\cdot)\\
& +\delta   \sum_{x\in\mathcal X}F_{Y|D=1,X=x}(\cdot)p_x
\end{align*}
This map is linear, so the Hadamard derivative\footnote{We do not derivate with respect to $\delta$.} tangentially to $\ell^{\infty}(\mathcal Y)^{2+dim(X)}\times (0,1)^{1+dim(\mathcal X)}$ at $(F_{Y|D=1}, F_{Y|D_\delta=0},F_{Y|D=1,X}, p, p_X)$ is the map
\begin{align*}
\psi'_{F_{Y|D=1}, F_{Y|D_\delta=0},F_{Y|D=1,X}, p, p_X}(h_1,h_2,h_X, h_3, \tilde h_X) &=ph_1 + (1-p-\delta)h_2\\
&+ (F_{Y|D=1}(\cdot)-F_{Y|D_\delta=0}(\cdot))h_3\\
&+\delta \sum_{x\in\mathcal X}(h_x p_x+ F_{Y|D=1,X=x}\tilde h_x) 
\end{align*}
Here $h_X$ is $dim(X)$-vector of functions in $\ell^{\infty}(\mathcal Y)$, and $\tilde h_X$ is a $dim(X)$-vector that belongs to $(0,1)^{dim(\mathcal X)}$. By the functional Delta method (see Theorem 20.8 in \cite{vandervaart2000}) and Assumption \ref{brownian_bridge}, we have
\begin{align*}
\sqrt n( \hat F_A-F_A) &=  \sqrt n( \psi(\hat F_{Y|D=1}, \hat F_{Y|D_\delta=0},\hat F_{Y|D=1,X}, \hat p, \hat p_X ;\delta   )-\psi(F_{Y|D=1}, F_{Y|D_\delta=0},F_{Y|D=1,X}, p, p_X ;\delta  ) )\\
&\rightsquigarrow  p \mathbb G_1 + (1-p-\delta) \mathbb G_{0,\delta} + (F_{Y|D=1}(\cdot)-F_{Y|D_\delta=0}(\cdot))\mathbb Z_p \\
&+ \delta  \sum_{x\in\mathcal X}(\mathbb G_{1,x} p_x+ F_{Y|D=1,X=x}\mathbb Z_x) :=\mathbb G_A,
\end{align*}
and convergence takes place in $\ell^{\infty}(\mathcal Y)$. The random element $\mathbb G_A$ is Gaussian.

\subsection{Sufficient Conditions for Assumption \ref{brownian_bridge_marginal}}\label{app_primitive_marginal}
A sufficient condition for Assumption \ref{brownian_bridge_marginal} is the joint convergence of the empirical processes for the estimators of  $(F_{Y|D=0}, F_{Y|D=1,X}, F_{X|D=0}, F_{X|D=1})$, and application of the functional Delta method. Here, $F_{Y|D=1,X}$ is short hand for the $dim(\mathcal X)$-vector of CDFs $F_{Y|D=1,X=x}$ for each value of $x\in\mathcal X$.

\subsection{Inference for bounds derived from the QBF}\label{distribution_of_bounds_appendix}

\subsubsection{Asymptotic distribution}
The goal is to find the joint distribution of
\begin{align*}
\hat L_{\tau,\tau^*} &:=\max\left\{\hat{\tilde F}_A^{-1}(\tau-\delta) ,\hat F_A^{-1}(\tau-\delta \hat {\tilde c}_{\tau^*,L})\right\} - \hat F_{Y}^{-1}(\tau), \\
\hat U_{\tau,\tau^*} &:= \min\left\{\hat{\tilde F}_A^{-1}(\tau) ,\hat F_A^{-1}(\tau+\delta \hat {\tilde c}_{\tau^*,L})\right\}- \hat F_{Y}^{-1}(\tau).
\end{align*}
where $\hat {\tilde c}_{\tau^*,L} =  \max \left \{  \min \left \{  \hat c_{\tau^*,L}, 1\right\},0 \right\}$, and by \eqref{qbf_estimation},  
\begin{align*}
\hat c_{\tau^*,L}= \frac{\tau^* - \hat F_A\left ( \hat F_{Y}^{-1}(\tau^*)+ g \right )}{\delta}.
\end{align*}
as a process in $\ell^\infty(\delta,1-\delta)^2$. The main assumption we need is the following.
\begin{assumption}\label{assumption_joint_bounds} The following multivariate functional central limit theorem holds
\begin{align*} 
\sqrt n\begin{pmatrix}
\hat F_{Y} - F_Y\\
\hat F_{A}-  F_{A} \\
\hat {\tilde F}_{A}-  \tilde{F}_{A} \\
\end{pmatrix} \rightsquigarrow \begin{pmatrix}
\mathbb G_Y\\ 
\mathbb G_{A}\\
\tilde{\mathbb G}_A
\end{pmatrix},
\end{align*}
where $\mathbb G_Y$ , $\mathbb G_A$, and $\tilde{\mathbb G}_A$ are Brownian bridges in $\ell^{\infty}(\mathcal Y)$.
\end{assumption}

First consider the map $\Gamma_1: \mathbb D_\delta \times \mathbb D_A \times  \tilde{\mathbb D}_A   \to \ell^{\infty}(\delta, 1-\delta)^3\times [0,1]$, given by 
\begin{align*}
\Gamma_1 (H_1,H_2,H_3)= \left (H_1^{-1},H_2^{-1},H_3^{-1}, \tilde \Gamma_1 (H_1,H_2)\right ).
\end{align*}
where 
\begin{align*}
\tilde \Gamma_1 (H_1,H_2)=  \max \left \{  \min \left \{  \frac{\tau^* - H_2(H_1^{-1}(\tau^*)+g)}{\delta}, 1\right\},0 \right\}.
\end{align*}
 
Here, $\mathbb D_{\delta}\subset \ell^{\infty}(\mathcal Y)$ denote the set of all restrictions of distribution functions on $\mathbb R$ to $[F_Y^{-1}(\delta)-\varepsilon,F_Y^{-1}(1-\delta)+\varepsilon]$. $\mathbb D_A$ and $\tilde{\mathbb D}_A$ are defined analogously but with respect to $F_A$ and $\tilde F_A$. 
Let $\mathbb C_{\delta}$ bet the set of continuous functions on $[F_Y^{-1}(\delta)-\varepsilon,F_Y^{-1}(1-\delta)+\varepsilon]$. Define $\mathbb C_A$ and $\tilde{\mathbb C}_A$ similarly but with respect to $F_A$ and $\tilde F_A$. 

The next Assumption, similar to Assumption \ref{ass_had_dif} is needed for Hadamard Differentiability.
\begin{assumption}[Conditions for Hadamard Differentiability]\label{ass_had_dif_3}
\item
\begin{enumerate}
 \item Let $G$ be either $F_Y$, $F_A$, and $\tilde F_A$. For some $\varepsilon>0$, $G$  is continuously differentiable in $[G^{-1}(\delta)-\varepsilon,G^{-1}(1-\delta)+\varepsilon]\subset \mathcal Y$ with strictly positive derivative $g$.
\end{enumerate}
\end{assumption}

By Assumption \ref{ass_had_dif_3} and Lemma 21.4(i) in \cite{vandervaart2000}, $\Gamma_1$ has Hadamard derivative at $(F_Y, F_A, \tilde F_A)$ tangentially to $\mathbb C_\delta\times \mathbb C_A\times \tilde{\mathbb C}_A$ given by the map
\begin{align*}
 \Gamma'_ {1,(F_Y,F_A,\tilde F_A)}(h_1,h_2,h_3)=\left (-\frac{h_1\circ F_Y^{-1}}{f_Y\circ F_Y^{-1}},  -\frac{h_2\circ F_A^{-1}}{f_A\circ F_A^{-1}} ,  -\frac{h_3\circ \tilde {F}_A^{-1}}{\tilde f_A\circ \tilde{F}_A^{-1}} , \tilde \Gamma'_{1,(F_Y,F_A)}(h_1,h_2)\right).
\end{align*}
where the derivative $\tilde \Gamma'_{1,(F_Y,F_A)}(h_1,h_2)$ is shown below in an auxiliary Lemma \ref{aux_lemma}.

Now consider the map $\Gamma_2: \ell^{\infty}(\delta, 1-\delta)^3\times [0,1] \to \ell^{\infty}(\delta, 1-\delta)^5$ given by
\begin{align*}
\Gamma_2(H_1,H_2,H_3,H_4) = (H_1, H_2(\cdot + \delta H_4), H_2 (\cdot - \delta H_4), H_3, H_3 (\cdot - \delta))
\end{align*}
which is a combination of identity and composition maps, so it is Hadamard differentiable at $(F_Y,F_A,\tilde F_A, \hat {\tilde c}_{\tau^*,L} )$ by assumption to \ref{ass_had_dif}.\ref{ass_f_a_uniform} tangentially to  $ \ell^{\infty}(\delta,1-\delta)\times  \mathbb U\mathbb C(\mathcal Y)^2\times [0,1]$, with derivative given by
\begin{align*}
&\Gamma'_{2,(F_Y,F_A,\tilde F_A, \hat {\tilde c}_{\tau^*,L} )}(h_1,h_2,h_3,h_4)\\
& = (h_1, h_2(\cdot + \delta  \hat {\tilde c}_{\tau^*,L} ) + f_A(\cdot + \delta \hat {\tilde c}_{\tau^*,L})\delta h_4, h_2(\cdot - \delta  \hat {\tilde c}_{\tau^*,L} ) + f_A(\cdot - \delta \hat {\tilde c}_{\tau^*,L})\delta h_4, h_3, h_3 (\cdot - \delta))
\end{align*}

Finally, consider the map $\Gamma_3: \ell^{\infty}(\delta, 1-\delta)^5 \to \ell^{\infty}(\delta, 1-\delta)^2$, given by
\begin{align*}
\Gamma_3(H_1,H_2,H_3,H_4,H_5) = \begin{pmatrix} 
\max\left\{H_5 ,H_3\right\} - H_1\\
 \min\left\{H_4 ,H_2\right\}- H_1
\end{pmatrix}
\end{align*}
In particular
\begin{align*}
\Gamma_3(F_{Y}^{-1},F_A^{-1}(\cdot +\delta \hat {\tilde c}_{\tau^*,L}),F_A^{-1}(\cdot -\delta \hat {\tilde c}_{\tau^*,L}),{\tilde F}_A^{-1},{\tilde F}_A^{-1}(\cdot -\delta)) = \begin{pmatrix} 
L_{\cdot,\tau^*} \\
U_{\cdot,\tau^*}
\end{pmatrix}
\end{align*}
This map is Hadamard directional differentiable at $(F_{Y}^{-1},F_A^{-1}(\cdot +\delta  {\tilde c}_{\tau^*,L}),F_A^{-1}(\cdot -\delta  {\tilde c}_{\tau^*,L}),{\tilde F}_A^{-1},{\tilde F}_A^{-1}(\cdot -\delta))$ tangentially to $ \ell^{\infty}(\delta, 1-\delta)^5$ with derivative given by
\begin{align*}
&\Gamma'_{3,(F_{Y}^{-1},F_A^{-1}(\cdot +\delta  {\tilde c}_{\tau^*,L}),F_A^{-1}(\cdot -\delta  {\tilde c}_{\tau^*,L}),{\tilde F}_A^{-1},{\tilde F}_A^{-1}(\cdot -\delta))}(h_1,h_2,h_3,h_4,h_5)\\
&= \begin{pmatrix} 
h_5 \mathds 1 _{\{ {\tilde F}_A^{-1}(\cdot -\delta)>F_A^{-1}(\cdot -\delta  {\tilde c}_{\tau^*,L})\}} +  h_3  \mathds 1 _{\{ {\tilde F}_A^{-1}(\cdot -\delta)<F_A^{-1}(\cdot -\delta  {\tilde c}_{\tau^*,L})\}} +  \max\left\{h_5 ,h_3\right\} \mathds 1 _{\{ {\tilde F}_A^{-1}(\cdot -\delta)=F_A^{-1}(\cdot -\delta  {\tilde c}_{\tau^*,L})\}} - h_1\\
h_4 \mathds 1 _{\{ {\tilde F}_A^{-1}<F_A^{-1}(\cdot +\delta  {\tilde c}_{\tau^*,L})\}} +  h_2  \mathds 1 _{\{ {\tilde F}_A^{-1}>F_A^{-1}(\cdot +\delta  {\tilde c}_{\tau^*,L})\}} +  \min\left\{h_4 ,h_2\right\} \mathds 1 _{\{ {\tilde F}_A^{-1}=F_A^{-1}(\cdot +\delta  {\tilde c}_{\tau^*,L})\}}- h_1
\end{pmatrix}
\end{align*}

Therefore, the map 
\begin{align*}
\begin{pmatrix} 
L_{\cdot,\tau^*} \\
U_{\cdot,\tau^*}
\end{pmatrix} = \Gamma_3 \circ \Gamma_2 \circ \Gamma_1 (F_Y, F_A, \tilde F_A)
\end{align*}
is Hadamard directional differentiable at $F_Y, F_A, \tilde F_A$ tangentially to $\mathbb C_\delta\times \mathbb C_A\times \tilde{\mathbb C}_A$
by the chain rule for Hadamard directional differentiable maps (see Lemma C2 of \cite{Masten2020}). By Theorem 2.1 in \cite{Fang2019} and Assumption \ref{assumption_joint_bounds}, we get
\begin{align*}
\sqrt n\begin{pmatrix} 
L_{\cdot,\tau^*} - \hat L_{\cdot,\tau^*} \\
U_{\cdot,\tau^*} - \hat U_{\cdot,\tau^*}
\end{pmatrix} \rightsquigarrow  \left[\Gamma_3 \circ \Gamma_2 \circ \Gamma_1\right]' _{(F_Y, F_A, \tilde F_A)}(\mathbb G_Y, \mathbb G_{A},\tilde{\mathbb G}_A )
\end{align*}
and convergence takes place in $ \ell^{\infty}(\delta, 1-\delta)^2$.

\begin{lemma}\label{aux_lemma}
Under the assumptions of Theorem \ref{dist_theta}, the map $\tilde \Gamma_1 :  \mathbb D_\delta\times  \mathbb  D_A\mapsto \mathbb R$ given by 
\begin{align*}
\tilde \Gamma_1 (H_1,H_2)\mapsto \max \left \{  \min \left \{  \frac{\tau^* - H_2(H_1^{-1}(\tau^*)+g)}{\delta}, 1\right\},0 \right\}
\end{align*}
is Hadamard directionally differentiable at $(F_Y,F_A)$ tangentially to $\mathbb C_\delta\times \ell^{\infty}(\mathcal Y)$.
\end{lemma}

\begin{proof}[Proof of Lemma \ref{aux_lemma}]
We analyze the map
\begin{align*}
\tilde \Gamma_1 (H_1,H_2)\mapsto \max \left \{  \min \left \{  \frac{\tau^* - H_2(H_1^{-1}(\tau^*)+g)}{\delta}, 1\right\},0 \right\}
\end{align*}
separately. We write it as the composition of several maps. The first one is $\phi:  \mathbb D_\delta\times  \mathbb  D_A \mapsto  \mathbb R\times \mathbb D_A$ given by $\phi(H_1,H_2)\mapsto (H_1^{-1}(\tau^*),H_2)$. The second one is $\psi: \mathbb R\times \mathbb D_A\mapsto  \mathbb R$ given by $\psi(H_1,H_2)\mapsto (\tau^*-H_2 (H_1+g))/\delta$. Thus
\begin{align*}
\psi \circ \phi (F_Y,F_A)=\frac{\tau^*- F_A(F_Y^{-1}(\tau^*)+g)}{\delta}.
\end{align*}
The map $\phi$ essentially the same map $\phi$ in Theorem \ref{dist_theta}. By the results there, it has Hadamard derivative at $(F_Y,F_A)$ tangentially to $\mathbb C_\delta\times \ell^{\infty}(\mathcal Y) $ given by the map
\begin{align*}
 \phi'_ {(F_Y,F_A)}(h_1,h_2)=\left (-\frac{h_1 (F_Y^{-1}(\tau^*))}{f_Y (F_Y^{-1}(\tau^*))},h_2\right).
\end{align*}
Likewise, the map $\psi$ is essentially the map in Theorem \ref{dist_theta}. Therefore, it has Hadamard derivative at $(F_Y^{-1}(\tau^*),F_A)$ tangentially to $\mathbb R\times  \mathbb U\mathbb C(\mathcal Y)$ given by the map
\begin{align*}
 \psi'_ {(F_Y^{-1}(\tau^*),F_A)}(h_1,h_2)=-\frac{h_2 (F_Y^{-1}(\tau^*)+g)}{\delta}-\frac{f_A (F_Y^{-1}(\tau^*)+g) \cdot h_1}{\delta}.
\end{align*}

Now define
 \begin{align*}
\tilde \phi(H) = \max\{\min\{H,1\},0\}.
\end{align*}
and it is the composition of an evaluation map $H\in\mathbb R\mapsto H$ and of the max/min composition. The evaluation map is linear, hence fully Hadamard differentiable. The composition of max/min is Hadamard directional differentiable by the chain rule for Hadamard directional differentiable maps (see Lemma C2 of \cite{Masten2020}). Hence, another application of the chain rule yields that $\phi(H)$ is Hadamard directional differentiable at any $H\in\mathbb R$ tangentially to $\mathbb R$. By direct computation, the derivative, for any $h\in\mathbb R$, is given by the map
\begin{align}\label{phi_prime_der}
\tilde\phi'_H(h) =h\mathds{1}_{\left\{0< H< 1 \right\}} + \max(0,h)\mathds{1}_{\left\{H= 0 \right\}} + \min (0,h)\mathds{1}_{\left\{H= 1 \right\}}.
\end{align}
Therefore,
 \begin{align*}
\tilde\phi \circ \psi \circ \phi (F_Y,F_A)=  \max \left \{  \min \left \{  \frac{\tau^*- F_A(F_Y^{-1}(\tau^*)+g)}{\delta}, 1\right\},0 \right\}
\end{align*}
By the chain rule, $\tilde\phi \circ \psi \circ \phi $ has Hadamard directional derivative at $(F_Y,F_A)$ tangentially to $\mathbb C_\delta\times \ell^{\infty}(\mathcal Y) $, given by
 \begin{align*}
[\tilde\phi \circ \psi \circ \phi]'_{(F_Y,F_A)}(h_1,h_2)=  \tilde\phi'_{\psi \circ \phi (F_Y,F_A)}\circ  \psi'_ {\phi (F_Y,F_A)}\circ  \phi'_ {(F_Y,F_A)}(h_1,h_2).
\end{align*}
\end{proof}

\subsubsection{Numerical delta method}
To apply the numerical delta method of \cite{Hong2018}, we start with the map 
\begin{align*}
\begin{pmatrix} 
L_{\cdot,\tau^*} \\
U_{\cdot,\tau^*}
\end{pmatrix} = \Gamma_3 \circ \Gamma_2 \circ \Gamma_1 (F_Y, F_A, \tilde F_A)
\end{align*}
which is directionally differentiable. For notational brevity, we define two functions $\Gamma_L$ and $\Gamma_U$, which take as inputs $F_Y, F_A$ and $\tilde F_A$, and 
\begin{align*}
\begin{pmatrix} 
L_{\cdot,\tau^*} \\
U_{\cdot,\tau^*}
\end{pmatrix} = \begin{pmatrix} 
\Gamma_{L} (F_Y, F_A, \tilde F_A) \\
\Gamma_{U} (F_Y, F_A, \tilde F_A)
\end{pmatrix}. 
\end{align*}
Given $B$, the number of bootstrap repetitions, and $\epsilon_n$, which satisfies $\epsilon_n \sqrt{n}\to \infty$ as $n\to\infty$, the algorithm is the following:
\begin{enumerate}
\item Given a grid of values $\left\{\tau_k\right\}_{k=1}^K\subset(\delta,1-\delta)$, for each bootstrap sample $b$ in $1,\ldots, B$, compute the bootstrap counterparts $\hat F_Y^b, \hat F_A^b, \hat{\tilde F}_A^b$.
\item For each $b$, compute the \textit{perturbed} counterparts
\begin{align*}
\hat F_Y^{p,b} &= \hat F_Y + \epsilon_n\sqrt n(\hat F_Y^b-\hat F_Y)\\
\hat F_A^{p,b} &= \hat F_A + \epsilon_n\sqrt n(\hat F_A^b-\hat F_A)\\
\hat{\tilde F}_A^{p,b} &= \hat{\tilde F}_A + \epsilon_n\sqrt n(\hat{\tilde F}_A^b-\hat{\tilde F}_A)
\end{align*}
\item For each $b$, compute
\begin{align*}
\hat L_{\cdot,\tau^*}^b = \max_{\tau_1, ...., \tau_K} \left |\frac{\Gamma_{L} (\hat F_Y^{p,b}, \hat F_A^{p,b}, \hat{\tilde F}_A^{p,b})-\Gamma_{L} (\hat F_Y, \hat F_A, \hat{\tilde F}_A)}{\epsilon_n}\right |
\end{align*}
and
\begin{align*}
\hat U_{\cdot,\tau^*}^b = \max_{\tau_1, ...., \tau_K} \left |\frac{\Gamma_{U} (\hat F_Y^{p,b}, \hat F_A^{p,b}, \hat{\tilde F}_A^{p,b})-\Gamma_{U} (\hat F_Y, \hat F_A, \hat{\tilde F}_A)}{\epsilon_n}\right |
\end{align*}
\item Obtain the $1-\alpha/2$ order statistic from $\hat L_{\cdot,\tau^*}^b$. This is denoted $\xi_{1-\alpha/2,L_{\tau_1}}$.
\item Obtain the $1-\alpha/2$ order statistic from $\hat U_{\cdot,\tau^*}^b$. This is denoted $\xi_{1-\alpha/2,U_{\tau_2}}$.
\end{enumerate}

The $1-\alpha$ confidence bands are then computed as 
\begin{align*}
\mathcal {CB}( L_{\tau,\tau^*} ,\alpha)&=\left[\hat L_{\tau,\tau^*}  -\frac{\xi_{1-\alpha/2,L_{\tau_1}}}{\sqrt n}, \hat L_{\tau,\tau^*} +\frac{\xi_{1-\alpha/2,L_{\tau_1}}}{\sqrt n}\right],\\
\mathcal {CB}( U_{\tau,\tau^*} ,\alpha)&=\left[\hat U_{\tau,\tau^*}  -\frac{\xi_{1-\alpha/2,U_{\tau_2}}}{\sqrt n}, \hat U_{\tau,\tau^*} +\frac{\xi_{1-\alpha/2,U_{\tau_2}}}{\sqrt n}\right].
\end{align*}

The simultaneous $1-\alpha$ confidence bands, by the Bonferroni correction, are given by the Cartesian product
\begin{align*}
\mathcal {CB}( L_{\tau,\tau^*}, U_{\tau,\tau^*},\alpha)=\mathcal {CB}(L_{\tau,\tau^*} ,\alpha/2)\times \mathcal {CB}(U_{\tau,\tau^*} ,\alpha/2).
\end{align*}

\subsection{Asymptotic distribution of $\hat d_{\tau}$}\label{appendix_d_tau}
In this section, we are going to establish the asymptotic distribution of $\hat d_{\tau}=\hat{\tilde F}_A^{-1}(\tau-\delta)-\hat F_Y^{-1}(\tau)$. This follows form results established in Appendix \ref{distribution_of_bounds_appendix}.
By Assumptions \ref{assumption_joint_bounds} and \ref{ass_had_dif_3},
the inverse map $(H_1,H_3)\mapsto\big(H_1^{-1}(\cdot),H_3^{-1}(\cdot)\big)$
is Hadamard differentiable at $(F_Y,\tilde F_A)$ tangentially to
$\mathbb C_\delta\times\tilde{\mathbb C}_A$, with derivative given by
(Lemma~21.4(i) in \cite{vandervaart2000})
\begin{align*}
(h_1,h_3)
\mapsto
\left(
-\frac{h_1\circ F_Y^{-1}}{f_Y\circ F_Y^{-1}},
-\frac{h_3\circ \tilde F_A^{-1}}{\tilde f_A\circ \tilde F_A^{-1}}
\right).
\end{align*}

Now define the linear map
$T:\ell^\infty(\delta,1-\delta)^2\to\ell^\infty(\delta,1-\delta)$ by $T(u,v)(\tau)=v(\tau-\delta)-u(\tau).$ 
Then $\tau\mapsto d_\tau$ can be written as
\begin{align*}
d&= T\big(F_Y^{-1},\tilde F_A^{-1}\big),\\
\hat d &= T\big(\hat F_Y^{-1},\hat{\tilde F}_A^{-1}\big).
\end{align*}
Since $T$ is continuous linear, it is Hadamard differentiable with derivative $T$
itself. Hence, by the functional delta method,
\begin{align*}
\sqrt n(\hat d-d)
&\rightsquigarrow
T\left(
-\frac{\mathbb G_Y\circ F_Y^{-1}}{f_Y\circ F_Y^{-1}},
-\frac{\tilde{\mathbb G}_A\circ \tilde F_A^{-1}}{\tilde f_A\circ \tilde F_A^{-1}}
\right)\\
&=\frac{\mathbb G_Y\!\left(F_Y^{-1}(\tau)\right)}{f_Y\!\left(F_Y^{-1}(\tau)\right)}
-
\frac{\tilde{\mathbb G}_A\!\left(\tilde F_A^{-1}(\tau-\delta)\right)}
     {\tilde f_A\!\left(\tilde F_A^{-1}(\tau-\delta)\right)},
\end{align*}
in $\ell^\infty(\delta,1-\delta).$

\end{document}